\title{
	The Proca Field in Curved Spacetimes and its\\Zero Mass Limit
	}
\author{Maximilian Schambach}
\date{\today}
\setlist[enumerate]{noitemsep,label=(\roman*),itemsep=1ex}
\definecolor{linkblue}{cmyk}{1,.7,0,0}
\definecolor{linkred}{cmyk}{0,.92,.92,.27}
\definecolor{ownwhite}{cmyk}{1,1,1,1}
\numberwithin{equation}{section}
\numberwithin{figure}{section}	
\numberwithin{table}{section}	
\let\nobreakitem\item
\let\@nobreakitem\@item
\patchcmd{\nobreakitem}{\@item}{\@nobreakitem}{}{}
\patchcmd{\nobreakitem}{\@item}{\@nobreakitem}{}{}
\patchcmd{\@nobreakitem}{\@itempenalty}{\@M}{}{}
\patchcmd{\@xthm}{\ignorespaces}{\nobreak\ignorespaces}{}{}
\patchcmd{\@ythm}{\ignorespaces}{\nobreak\ignorespaces}{}{}
\theoremstyle{plain}
\newtheorem{theorem}{Theorem}[section]
\newtheorem{corollary}[theorem]{Corollary}
\newtheorem{lemma}[theorem]{Lemma}
\newtheorem{definition}[theorem]{Definition}
\theoremstyle{nonumberplain}
\newtheorem{proof}{Proof:}[section]
\newtheorem{proof-idea}{Proof idea:}[section]
\definecolor{grey}{rgb}{0.925,0.925,0.925}
\DeclarePairedDelimiter\abs{\lvert}{\rvert}%
\DeclarePairedDelimiter\norm{\lVert}{\rVert}%
\let\oldabs\abs
\def\abs{\@ifstar{\oldabs}{\oldabs*}}
\let\oldnorm\norm
\def\norm{\@ifstar{\oldnorm}{\oldnorm*}}
\newcommand{\e}{\ensuremath{\mathrm{e}}}
\newcommand{\PH}{\hspace{0.02em}\cdot\hspace{0.02em}}
\newcommand{\IC}{\mathbb{C}} 
\newcommand{\IR}{\mathbb{R}} 
\newcommand{\IN}{\mathbb{N}} 
\newcommand{\M}{\ensuremath{\mathcal{M}}}    
\newcommand{\N}{\ensuremath{\mathcal{N}}}    
\newcommand{\F}{\ensuremath{\mathcal{F}}}    
\newcommand{\A}{\ensuremath{\mathcal{A}}}    
\renewcommand{\AA}{\ensuremath{\mathscr{A}}}    
\newcommand{\BUOmega}{\ensuremath{{\mathcal{BU}\big(\Omega^1_0(\M)\big)}}}    
\newcommand{\BUmjdyn}{\ensuremath{\mathcal{BU}_{m,j}^\mathrm{dyn}}}    
\newcommand{\BUmzdyn}{\ensuremath{\mathcal{BU}_{m,0}^\mathrm{dyn}}}    
\newcommand{\BU}{\ensuremath{\mathcal{BU}}}    
\newcommand{\Dzs}{\ensuremath{\mathcal{D}_{0}(\Sigma)}}    
\newcommand{\Green}[2]{\ensuremath{\mathcal{G}_m(#1,#2)}}    
\newcommand{\IMG}[1]{\ensuremath{\mathrm{img}{}(#1)}}    
\newcommand{\IMJDYN}{\ensuremath{\mathcal{I}_{m,j}^\mathrm{\,dyn}}}    
\newcommand{\IMZDYN}{\ensuremath{\mathcal{I}_{m,0}^\mathrm{\,dyn}}}    
\newcommand{\IMJCCR}{\ensuremath{\mathcal{I}_{m,j}^\mathrm{\,CCR}}}    
\newcommand{\IMJZCCR}{\ensuremath{\mathcal{I}_{m,0}^\mathrm{\,CCR}}}    
\newcommand{\ICCR}{\ensuremath{{\mathcal{I}}_{\sim}^\mathrm{\,CCR}}}    
\newcommand{\ICCRS}{\ensuremath{{\mathcal{I}}_{\sim}^{\mathrm{\,CCR},\Sigma}}}    
\newcommand{\ICCRSP}{\ensuremath{{\mathcal{I}}_{\sim}^{\mathrm{\,CCR},\Sigma'}}}    
\newcommand{\IMJ}{\ensuremath{\mathcal{I}_{m,j}}}    
\newcommand{\IMZ}{\ensuremath{\mathcal{I}_{m,0}}}    
\newcommand{\KERN}[1]{\ensuremath{\mathrm{ker}{\left(#1\right)}}}   
\newcommand{\SPAN}[1]{\ensuremath{\mathrm{span}{\big\{#1\big\} }}}
\newcommand{\TsS}{\ensuremath{T^*\Sigma}}    
\DeclareDocumentCommand\Gm{ m g }{%
	{\ensuremath{\mathcal{G}_m %
		\IfNoValueF {#2} {(#1, #2)}%
	}%
}
}
\DeclareDocumentCommand\Gmz{ m g }{%
	{\ensuremath{\mathcal{G}_{m_0} %
			\IfNoValueF {#2} {(#1, #2)}%
		}%
	}
}
\DeclareDocumentCommand\Em{ m g }{%
	{\ensuremath{\mathcal{E}_m %
			\IfNoValueF {#2} {(#1, #2)}%
		}%
	}
}
\DeclareDocumentCommand\Ez{ m g }{%
	{\ensuremath{\mathcal{E}_{0} %
			\IfNoValueF {#2} {(#1, #2)}%
		}%
	}
}
\newcommand{\Quotient}[2]{\ensuremath{\sfrac{#1}{#2}}}    
\newcommand{\Quotientscale}[2]{\ensuremath{{\scalebox{1.2}{\Quotient{#1}{#2}}}}}    
\newcommand{\comp}{\mathbin{\mathchoice
  {\xcirc\scriptstyle}
  {\xcirc\scriptstyle}
  {\xcirc\scriptscriptstyle}
  {\xcirc\scriptscriptstyle}
}}
\newcommand{\xcirc}[1]{\vcenter{\hbox{$#1\circ$}}}
\newcommand{\rhoz}{\ensuremath{\rho_{(0)}}}    
\newcommand{\rhon}{\ensuremath{\rho_{(n)}}}    
\newcommand{\rhod}{\ensuremath{\rho_{(d)}}}    
\newcommand{\rhodelta}{\ensuremath{\rho_{(\delta)}}}    
\newcommand{\Az}{\ensuremath{{A_{(0)}}}}    
\newcommand{\An}{\ensuremath{{A_{(n)}}}}    
\newcommand{\Ad}{\ensuremath{{A_{(d)}}}}    
\newcommand{\Adelta}{\ensuremath{{A_{(\delta)}}}}    
\newcommand{\supp}[1]{\ensuremath{\mathrm{supp}\left(#1\right)}}    
\newcommand{\formspace}{\,}
\renewcommand{\i}{\ensuremath{\mathrm{i}}}
\newcommand{\SpacCurr}{\ensuremath{\mathsf{SpacCurr}}}  
\newcommand{\Alg}{\ensuremath{\mathsf{Alg}}}  
\newcommand\restr[2]{{
  \left.\kern-\nulldelimiterspace 
  #1 
  \vphantom{\big|} 
  \right|_{#2} 
  }}
\newbox\usefulbox
\def\getslant #1{\strip@pt\fontdimen1 #1}
\def\skoverline #1{\mathchoice
 {{\setbox\usefulbox=\hbox{$\m@th\displaystyle #1$}%
    \dimen@ \getslant\the\textfont\symletters \ht\usefulbox
    \divide\dimen@ \tw@ 
    \kern\dimen@ 
 \hspace{2pt}   \overline{\mkern -3mu \kern-\dimen@ \box\usefulbox\kern\dimen@ \mkern -1mu}\kern-\dimen@ }}
 {{\setbox\usefulbox=\hbox{$\m@th\textstyle #1$}%
    \dimen@ \getslant\the\textfont\symletters \ht\usefulbox
    \divide\dimen@ \tw@ 
    \kern\dimen@ 
 \hspace{2pt}  \overline{\mkern -3mu \kern-\dimen@ \box\usefulbox\kern\dimen@ \mkern -1mu}\kern-\dimen@ }}
 {{\setbox\usefulbox=\hbox{$\m@th\scriptstyle #1$}%
    \dimen@ \getslant\the\scriptfont\symletters \ht\usefulbox
    \divide\dimen@ \tw@ 
    \kern\dimen@ 
 \hspace{2pt} \overline{\mkern -3mu \kern-\dimen@ \box\usefulbox\kern\dimen@ \mkern -1mu}\kern-\dimen@ }}
 {{\setbox\usefulbox=\hbox{$\m@th\scriptscriptstyle #1$}%
    \dimen@ \getslant\the\scriptscriptfont\symletters \ht\usefulbox
    \divide\dimen@ \tw@ 
    \kern\dimen@ 
 \hspace{2pt}  \overline{\mkern -3mu \kern-\dimen@ \box\usefulbox\kern\dimen@ \mkern -1mu}\kern-\dimen@ }}%
 {}}
\begin{document}


\author[1]{Maximilian Schambach%
	\thanks{email: \texttt{schambach@kit.edu}}}
\author[2]{Ko Sanders%
	\thanks{email: \texttt{jacobus.sanders@dcu.ie}}}

\affil[1]{\normalsize Institute of Industrial Information Technology\\ Karlsruhe Institute of Technology}
\affil[2]{\normalsize School of Mathematical Sciences\\ Dublin City University}

\date{\large  \today}

\maketitle
\thispagestyle{empty}

\pagenumbering{arabic}

\hypersetup{linkcolor=linkblue}


%
%
\begin{abstract}
We investigate the classical and quantum Proca field (a massive vector potential) of mass $m>0$ in arbitrary globally hyperbolic spacetimes and in the presence of external sources. We motivate a notion of continuity in the mass for families of observables $\left\{O_m\right\}_{m>0}$ and we investigate the massless limit $m\to0$. Our limiting procedure is local and covariant and it does not require a choice of reference state. We find that the limit exists only on a subset of observables, which automatically implements a gauge equivalence on the massless vector potential. For topologically non-trivial spacetimes, one may consider several inequivalent choices of gauge equivalence and our procedure selects the one which is expected from considerations involving the Aharonov-Bohm effect and Gauss' law.\par
We note that the limiting theory does not automatically reproduce Maxwell's equation, but it can be imposed consistently when the external current is conserved. To recover the correct Maxwell dynamics from the limiting procedure would require an additional control on limits of states. We illustrate this only in the classical case, where the dynamics is recovered when the Lorenz constraint remains well behaved in the limit.
\end{abstract}

\section{Introduction}\label{chpt:introduction}
Massive vector potentials satisfying Proca's equation are the most straightforward massive generalization of the massless vector potential of electromagnetism. They may be used for an effective description of vector particles in the standard model, such as W- and Z-bosons (who really acquire their mass through the Higgs mechanism), or as a modification of the massless photon. In the latter scenario, the Proca field provides a theoretical framework to study upper bounds on the photon mass. It is important to note, however, that the Proca field does not have a gauge symmetry, unlike the massless vector potential of electromagnetism\footnote{An alternative approach to massive electrodynamics due to Stueckelberg preserves the gauge invariance by introducing an extra scalar field, cf. \cite{BelokogneFolacci2016}.}. \par
In this paper we will make a theoretical investigation of the massless limit of the Proca field in curved spacetimes, with special attention to the emergence of the gauge symmetry. In Minkowski space this massless limit is textbook material (cf. \cite{ItzyksonZuber1980}), but the corresponding problem in curved spacetimes poses some additional interesting challenges, which we now discuss.

Firstly, to define the quantum Proca field we cannot avail ourselves of a vacuum state or a preferred Hilbert space representation for the quantum theory. However, it is well understood how to circumvent this problem using an algebraic approach. On a given spacetime we can then describe the Proca field of mass $m>0$ with an external current $j$ by an abstract $^*$-algebra $\AA_{m,j}$. For $j=0$ such a construction has already been given  by Furlani \cite{Furlani1999}, imposing some topological restrictions, and later by Dappiaggi \cite{Dappiaggi2011}. The methods needed to include non-trivial currents $j$ are also well known in principle, see e.g. \cite{SandersDappiaggiHack2014} or \cite{FewsterSchenkel2015}. In this paper we will not pursue the investigation of states and Hilbert space representations, which forms the next step in the description of the quantum theory.\par
Secondly, to define a notion of continuity in the mass, we will need to compare the algebras $\AA_{m,j}$ at different values of $m$. Once again we cannot resort to preferred vacuum states or Hilbert space representations. Instead we will propose a notion of continuity in the mass for families of observables $\left\{O_m\right\}_{m>0}$, which is formulated entirely at the algebraic level. This continuity makes use of the fact that for all $m>0$ the algebras $\AA_{m,j}$ are isomorphic to an algebra of initial data on a Cauchy surface, which is independent of $m$. We prove that our notion of continuity is independent of the choice of Cauchy surface before we define the massless limit of the Proca field.\par
Thirdly, the gauge freedom of free electromagnetism admits at least three generalisations from Minkowski space to spacetimes with non-trivial topologies. One may use e.g. the field strength tensor $F$, or equivalence classes of one-forms $A$, where the pure gauge solutions are either the closed or the exact one-forms. One of us has previously argued that the latter choice is the preferred one in a generally covariant setting, because it allows the correct description of the Aharonov-Bohm effect and Gauss' law \cite{SandersDappiaggiHack2014}. We will show that this choice of gauge equivalence also arises naturally from the limiting procedure, thereby providing an additional justification for it.\par
\smallskip
In Section \ref{sec:review} below we will review the classical and the quantum Proca field in an arbitrary globally hyperbolic spacetime with a fixed mass $m>0$ and external current $j$. In Section \ref{sec:limit} we will then formulate the continuity in the mass and define the zero mass limit. We will show that this limit exists only for a certain sub-algebra of observables and, by choosing this algebra as large as possible, we automatically arrive at the gauge equivalence given by exact forms, as preferred by \cite{SandersDappiaggiHack2014}. In this section we also comment on the fact that the zero mass limit yields a theory that does not automatically include Maxwell's equations. We believe that this is due to the fact that we did not include the behaviour of states in the zero mass limit, and we illustrate this with an argument concerning the classical Proca field. Although it may be possible to include classes of states (e.g. Hadamard states \cite{FewsterPfenning2003}) and to study their behaviour during a limiting process, we will not pursue this in the present investigation. Section \ref{chpt::conclusion} contains our conclusions and a brief outlook.\par
\smallskip
We will use the remainder of this section to introduce some conventions and notations that will be used throughout the paper. We let $(\M,g)$ denote a spacetime, consisting of a smooth, four dimensional manifold $\M$, assumed to be Hausdorff, connected, oriented and para-compact,  and a Lorentzian metric $g$, whose signature is chosen to be $(-,+,+,+)$. We assume that $(\M,g)$ is globally hyperbolic and time-oriented. A generic smooth, space-like Cauchy surface is denoted by $\Sigma$, with an induced Riemannian metric $h$. The Levi-Civita connection on $(\M,g)$ will be denoted by $\nabla$ and the one on $\Sigma$ by $\nabla_{(\Sigma)}$. For further standard notations regarding spacetimes (e.g. causal relations and tensor calculus) we refer to \cite{Wald1984}.\par
The space of smooth differential forms on $\M$ of degree $p$ will be denoted by $\Omega^p(\M)$, and the subspace of compactly supported forms by $\Omega^p_0(\M)$. The space of all differential forms is an algebra under the exterior product $\wedge$. Using the metric we can define a Hodge $*$-operation such that $A\wedge *B=\frac{1}{p!}A^{\mu_1\ldots\mu_p}B_{\mu_1\ldots\mu_p}d\mathrm{vol}_g$, where $d\mathrm{vol}_g$ is the natural volume form determined by the metric. We may define a pairing on the space of $p$-forms by
\begin{align}
\langle A,B\rangle_{\M} \coloneqq \int_{\M} A\wedge {*B}
\end{align}
when the support of $A\wedge {*B}$ is compact. The pairing is symmetric, $\langle A,B\rangle_{\M}=\langle B,A\rangle_{\M}$, and it defines an inner product on the spaces $\Omega^p_0(\M)$.

The co-derivative $\delta$ is defined in terms of the exterior derivative $d$ by $\delta \coloneqq (-1)^{s+1+n(p-1)}{*d*}$ when acting on $p$-forms, where $n$ is the dimension of the manifold ($n=4$ on $\M$ and $n=3$ on the Cauchy surface $\Sigma$) and $s$ is the number of negative eigenvalues of the metric ($s=1$ on $\M$ and $s=0$ on $\Sigma$). One may show that $\delta$ and $d$ are each other's (formal) adjoints under the pairing $\langle \PH,\PH\rangle_{\M}$. The Laplace-Beltrami operator on $p$-forms is defined by $\square=d\delta+\delta d$, which is a normally hyperbolic operator. A form $A$ is called closed when $dA=0$ and exact when $A=dB$ for some differential form $B$. It will be convenient to denote the space of closed $p$-forms on $\M$ by $\Omega^p_d(\M)$ and the compactly supported closed $p$-forms by $\Omega^p_{0,d}(\M)$. Similarly, $A$ is called co-closed when $\delta A=0$ and co-exact when $A={\delta B}$ for some differential form $B$. Once again it will be convenient to denote the space of co-closed forms on $\M$ by $\Omega^p_{\delta}(\M)$ and the compactly supported co-closed $p$-forms by $\Omega^p_{0,\delta}(\M)$. For more details on differential forms we refer the reader to \cite{BottTu1982}.

\section{The Proca field in curved spacetimes}\label{sec:review}

\subsection{The classical Proca field in curved spacetimes}
Let $A, j \in \Omega ^1 (\M)$ be smooth one-forms on $\M$ and $m >0$ a positive constant. We will call $A$ the \emph{Proca field}, $m$ its \emph{mass} and $j$ an \emph{external current}. The \emph{Proca equation} reads:
\begin{equation}
\left( \delta d + m^2 \right) A = j \formspace.\label{eqn:Proca}
\end{equation}
Accordingly, the \emph{Proca operator} is defined as $(\delta d + m^2)$.
It is well known that the Proca operator is Green-hyperbolic but \emph{not} normally hyperbolic \cite{Baer2015}. However, we can decompose Proca's equation into a wave equation and a Lorenz constraint:
\begin{align}
\left(\square +m^2 \right) A &= j + m^{-2} \, d \delta j \label{eqn:classical_wave_eqation} \formspace,\\
\delta A &= m^{-2} \delta j \label{eqn:classical_constraint} \formspace,
\end{align}
which together are equivalent to the Proca equation (\ref{eqn:Proca}) when $m>0$. Indeed, applying $\delta$ to (\ref{eqn:Proca}) yields (\ref{eqn:classical_constraint}), and in the presence of this equality, (\ref{eqn:Proca}) and (\ref{eqn:classical_wave_eqation}) are equivalent.
Following Dimock \cite{Dimock1992}, Furlani \cite{Furlani1999} and Pfenning \cite{Pfenning2009} we parametrise the initial data of differential forms with the following operators:
\begin{definition}\label{def:cauchy_mapping_operators}
	Let $i : \Sigma \hookrightarrow \M$ be the inclusion  of the Cauchy surface $\Sigma$ with pullback~$i^*$.
	The operators $\rhoz, \rhod : \Omega^p(\M) \to \Omega^p(\Sigma)$ and $\rhon, \rhodelta : \Omega^p(\M) \to \Omega^{p-1}(\Sigma)$ are defined as:
	\begin{equation}
		\rhoz 		= i^* 	\formspace,\quad
		\rhod 		= -{*_{(\Sigma)} i^*} {* d} 		\formspace,\quad
		\rhodelta  = i^*\delta 						\quad\text{and}\quad
		\rhon 		= -{*_{(\Sigma)} i^* *}  		\formspace.
	\end{equation}
	Let $A \in \Omega^1(\M)$. The differential forms $\Az, \Ad \in \Omega^1(\Sigma)$ and $\An, \Adelta \in \Omega^0(\Sigma)$ are defined as:
	\begin{equation}
	\Az = \rhoz A \formspace, \quad
	\Ad = \rhod A \formspace, \quad
	\An = \rhon A \quad\textrm{and}\quad
	\Adelta = \rhodelta A \formspace.
	\end{equation}
\end{definition}
Specifying these differential forms is equivalent to specifying the initial data $A_\mu$ and $n^\alpha \nabla_\alpha A_\mu$ on the Cauchy surface $\Sigma$ with future pointing unit normal vector field $n$ \cite{Furlani1999}. \par

The wave operator $(\square + m^2)$ on $p$-forms has unique advanced $(-)$ and retarded $(+)$ fundamental solutions $E_m^\pm : \Omega^p_0(\M) \to \Omega^p(\M)$ with $\supp{E^\pm_m F} \subset J^\pm (\supp{F})$ \cite{BaerGinouxPfaeffle2007}. It is straightforward to show that the fundamental solutions intertwine their action with the interior and exterior derivative, i.\,e. it holds $E_m^\pm d = d E_m^\pm$ and $E_m^\pm \delta = \delta E_m^\pm$. The advanced minus retarded fundamental solution is denoted by $E_m = E_m^- - E_m^+$.\par 
With the notion of the fundamental solutions we can state a solution to the wave equation (\ref{eqn:classical_wave_eqation}) in form of the following
\begin{theorem}[Solution of the wave equation]
	Let $\Az,\Ad \in \Omega^1(\Sigma)$ and $\An,\Adelta\in\Omega^0(\Sigma)$ specify initial data on the Cauchy surface $\Sigma$. Let $F \in \Omega^1_0(\M)$ be a test one-form and $\kappa \in \Omega^1(\M)$ an external source. Then, for any $m\ge0$,
	\begin{align}
	\langle A, F \rangle_\M = \sum\limits_\pm \langle E_m^\mp F , \kappa \rangle_{J^{\pm}(\Sigma)}
	&- \langle \Az , \rhod E_m F \rangle_\Sigma
	- \langle \Adelta , \rhon E_m F \rangle_\Sigma \notag \\
	&+ \langle \An , \rhodelta E_m F \rangle_\Sigma
	+ \langle \Ad , \rhoz E_m F \rangle_\Sigma
	\end{align}
	specifies the unique smooth solution $A\in \Omega^1(\M)$ of the wave equation $(\square + m^2)A= \kappa$ with the given initial data. Furthermore, the solution depends continuously on the initial data. \label{thm:solution_wave_equation}
\end{theorem}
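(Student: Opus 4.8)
The plan is to take the existence, uniqueness and continuous dependence on the initial data of the smooth solution $A$ to the Cauchy problem $(\square+m^2)A=\kappa$ as given, since this is the standard well-posedness theory for normally hyperbolic operators on globally hyperbolic spacetimes \cite{BaerGinouxPfaeffle2007}. What remains is only to verify that the stated integral formula reproduces this solution. Because $\langle\PH,\PH\rangle_\M$ restricts to an inner product on $\Omega^1_0(\M)$ and $A$ is smooth, knowing $\langle A,F\rangle_\M$ for every test one-form $F\in\Omega^1_0(\M)$ determines $A$ uniquely; so it suffices to establish the identity for all such $F$.

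First I would split the pairing across the Cauchy surface, $\langle A,F\rangle_\M=\langle A,F\rangle_{J^+(\Sigma)}+\langle A,F\rangle_{J^-(\Sigma)}$, and on each region write $F=(\square+m^2)E_m^\mp F$, using the advanced solution $E_m^-F$ on $J^+(\Sigma)$ and the retarded solution $E_m^+F$ on $J^-(\Sigma)$. The reason for these choices is that global hyperbolicity makes $J^+(\Sigma)\cap J^-(\supp{F})$ and $J^-(\Sigma)\cap J^+(\supp{F})$ compact, so that $E_m^\mp F$ has compact support within the respective region and no contribution arises from infinity when integrating by parts.

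Next I would apply Green's identity for the Laplace--Beltrami operator on each region. Its basic ingredient is the boundary relation $\langle d\alpha,\beta\rangle_\Omega-\langle\alpha,\delta\beta\rangle_\Omega=\int_{\partial\Omega}i^*(\alpha\wedge{*\beta})$, which follows from Stokes' theorem applied to $d(\alpha\wedge{*\beta})$. Since $\square=d\delta+\delta d$ is formally self-adjoint and the mass term $m^2$ is symmetric, all bulk contributions cancel and only boundary integrals over $\Sigma$ survive. Using $(\square+m^2)E_m^\mp F=F$ and $(\square+m^2)A=\kappa$, the bulk part of each region yields $\langle\kappa,E_m^\mp F\rangle_{J^\pm(\Sigma)}$, which by symmetry of the pairing assemble into the source term $\sum_\pm\langle E_m^\mp F,\kappa\rangle_{J^\pm(\Sigma)}$.

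The remaining and most delicate step is to evaluate the boundary integrals on $\Sigma$ and identify them with the four Cauchy-data pairings. Here I would decompose $A$ and $E_m^\mp F$ into their tangential and normal parts along $\Sigma$ and rewrite the boundary integrands through the operators $\rhoz,\rhod,\rhon,\rhodelta$ of Definition~\ref{def:cauchy_mapping_operators}. Since $\Sigma$ inherits opposite orientations as the boundary of $J^+(\Sigma)$ and of $J^-(\Sigma)$, the two contributions combine so that only the difference $E_m=E_m^--E_m^+$ appears, which explains its presence in the final formula. The main obstacle is purely computational bookkeeping: tracking all the signs, the interplay of the Hodge stars ${*}$ on $\M$ and ${*_{(\Sigma)}}$ on $\Sigma$, and the sign conventions in the definition of $\delta$, so as to match each boundary term to the correct pairing $\langle\Az,\rhod E_m F\rangle_\Sigma$, $\langle\Adelta,\rhon E_m F\rangle_\Sigma$, $\langle\An,\rhodelta E_m F\rangle_\Sigma$ and $\langle\Ad,\rhoz E_m F\rangle_\Sigma$ with the stated signs. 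The continuous dependence on the initial data then follows directly from the explicit formula together with the continuity of $E_m$ and the $\rho$-operators, and the entire argument is uniform in $m\ge0$, so the massless case requires no separate treatment.
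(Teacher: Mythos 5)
Your proposal is correct and takes essentially the same route as the paper: the paper itself omits the argument and defers to the source-free case of \cite{Furlani1999} and to Theorem 2.3 and Lemma 2.4 of \cite{SandersDappiaggiHack2014}, whose proof is exactly your scheme --- standard well-posedness for the normally hyperbolic operator $\square+m^2$ from \cite{BaerGinouxPfaeffle2007}, the split at $\Sigma$ with the advanced/retarded solutions $E_m^\mp F$ on $J^\pm(\Sigma)$ (compactness of $J^\mp(\supp{F})\cap J^\pm(\Sigma)$ justifying the integration by parts), and Stokes/Green identities yielding the boundary pairings in terms of $\rhoz,\rhod,\rhon,\rhodelta$ with the orientation signs combining $E_m^\pm$ into $E_m$. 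The sign bookkeeping you defer is indeed routine: the very same partial-integration manipulation is carried out explicitly in the paper's own proof of Theorem \ref{thm:solution_proca_unconstrained}, Equation \eqref{eqn:partialintegration}.
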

The proof is a straightforward generalization of the source free case \cite{Furlani1999}, see
e.\,g.~Theorem 2.3 and Lemma 2.4 of \cite{SandersDappiaggiHack2014}. \par
Now that we have solved the wave equation (\ref{eqn:classical_wave_eqation}), we turn to the Lorenz constraint (\ref{eqn:classical_constraint}). Assume that $A\in \Omega^1(\M)$ solves the wave equation, $(\square +m^2)A =  j + m^{-2} d \delta j$. We observe
\begin{align}
(\square +m^2) \delta A
&= \delta (\square + m^2) A
= \delta \left( j + m^{-2} d \delta j \right) \notag\\
&= (\square + m^2 )  m^{-2} \delta j  \label{eqn:waveeqn_constraint}
\formspace.
\end{align}
The solution $A$ to the wave equation therefore yields a Klein-Gordon equation for $\delta A - m^{-2} \delta j$. This ensures that the Lorenz constraint (\ref{eqn:classical_constraint}) propagates and, to impose the constraint (and hence obtain a solution to Proca's equation), it suffices to require that the initial data of $\delta A - m^{-2} \delta j$ vanish on the Cauchy surface $\Sigma$ \cite[Cor. 3.2.4]{BaerGinouxPfaeffle2007}. We will re-express this requirement in terms of constraints on initial data of $A$, making use of the following two lemmas:
\begin{lemma}\label{lem:klein_gordon_cauchy_data}
	Let $\Sigma$ be a Cauchy surface with unit normal vector field $n$. For any smooth zero-form $f\in \Omega^0(\M)$ it holds that
	\begin{equation}
		\rhon f = 0 \formspace, \quad
		\rhodelta f = 0 \formspace, \quad
		\rhoz f = \restr{f}{\Sigma} \formspace, \quad\textrm{and}\quad
		\rhod f = \restr{(df)(n)}{\Sigma} = \restr{\left(n^\alpha \nabla_\alpha f\right)}{\Sigma} \formspace.
	\end{equation}
	Therefore, with respect to the Klein Gordon equation, $\rhoz f$ and $\rhod f$ specify initial data on $\Sigma$.
\end{lemma}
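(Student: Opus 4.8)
The four claimed identities fall into two groups: two of them vanish for purely dimensional reasons, one is immediate from the definition of the pullback, and only the formula for $\rhod f$ requires a genuine computation. The plan is therefore to dispose of the first three at once and to concentrate the work on the last.

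First I would treat the two degree-lowering operators. Since $\rhodelta$ and $\rhon$ map $\Omega^p(\M)$ into $\Omega^{p-1}(\Sigma)$, on a zero-form they would land in $\Omega^{-1}(\Sigma)=\{0\}$; more concretely, $\rhodelta f = i^*\delta f = 0$ because the co-derivative annihilates zero-forms (indeed $*f$ is a top-degree form on $\M$, so $d{*}f=0$), and $\rhon f = -*_{(\Sigma)} i^* * f$ vanishes because $*f$ is a four-form on $\M$, whose pullback to the three-dimensional surface $\Sigma$ is necessarily zero. The identity $\rhoz f = i^* f = \restr{f}{\Sigma}$ is nothing but the definition of $i^*$ applied to a function.

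The substantive claim is $\rhod f = -*_{(\Sigma)} i^* {*d}f = \restr{(df)(n)}{\Sigma}$. The plan is to compute in an orthonormal coframe adapted to $\Sigma$: choose $e^0,e^1,e^2,e^3$ with $e_0=n$ and $e_1,e_2,e_3$ tangent to $\Sigma$, so that $i^*e^0=0$ while $i^*e^1,i^*e^2,i^*e^3$ form an orthonormal coframe on $(\Sigma,h)$ with $(i^*e^1)\wedge(i^*e^2)\wedge(i^*e^3)=\dvolh$ (this is the orientation induced by $\iota_n$ of the volume form on $\M$). Writing $df=\omega_a e^a$, so that $\omega_0 = (df)(e_0) = (df)(n)$, I would compute the Hodge duals of the basis one-forms and observe that each $*e^i$ for $i=1,2,3$ contains the factor $e^0$ and hence pulls back to zero, whereas $*e^0 = -\,e^1\wedge e^2\wedge e^3$ survives. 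Thus $i^* {*}df = -\omega_0\,\dvolh$, and applying $-*_{(\Sigma)}$ together with the Riemannian identity $*_{(\Sigma)}\dvolh = 1$ yields $\rhod f = \omega_0 = (df)(n)$. Finally $(df)(n) = n^\alpha\partial_\alpha f = n^\alpha\nabla_\alpha f$, since $f$ is a scalar and $\nabla$ is torsion-free, giving the stated form $\restr{n^\alpha\nabla_\alpha f}{\Sigma}$; the concluding remark that $\rhoz f$ and $\rhod f$ constitute Cauchy data for the Klein--Gordon equation is then immediate, as they are respectively the value and the normal derivative of $f$ on $\Sigma$.

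The main obstacle is the sign bookkeeping. With the Lorentzian signature $(-,+,+,+)$ and $g(n,n)=-1$, the timelike direction supplies the crucial minus sign in $*e^0 = -\,e^1\wedge e^2\wedge e^3$; for the result to come out as $+(df)(n)$ rather than its negative, this sign, the overall prefactor $-1$ built into the definition of $\rhod$, the identity $*_{(\Sigma)}\dvolh = 1$, and the chosen orientations of $\M$ and $\Sigma$ must all be tracked consistently, so that exactly the two minus signs cancel. Working in the adapted orthonormal coframe is precisely what renders this bookkeeping transparent, and since the statement is pointwise on $\Sigma$ the local choice of frame entails no loss of generality.
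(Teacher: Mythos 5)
Your proof is correct, but it is genuinely more self-contained than the paper's. The paper disposes of this lemma in one line: it observes that $\rhod = \rhon \comp d$ directly from Definition \ref{def:cauchy_mapping_operators}, and then cites the identity $\rhon B = \restr{n^\alpha B_\alpha}{\Sigma}$ for one-forms from Furlani's Appendix A (together with $(df)_\alpha = \nabla_\alpha f$), deferring all remaining details to \cite[Lemma 3.8]{Schambach2016}. You instead prove the substantive identity from scratch in an adapted orthonormal coframe, and your computation checks out: with the paper's convention $A\wedge{*B}=\frac{1}{p!}A^{\mu_1\ldots\mu_p}B_{\mu_1\ldots\mu_p}\,\dvolg$ one gets $e^0\wedge{*e^0}=g^{00}\,\dvolg=-\dvolg$, hence $*e^0=-e^1\wedge e^2\wedge e^3$, each $*e^i$ contains a factor $e^0$ and dies under $i^*$, and $*_{(\Sigma)}\dvolh=1$, so the two minus signs cancel exactly as you track them. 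In effect your coframe computation \emph{is} a proof of the cited Furlani identity (your argument works verbatim for an arbitrary one-form $B=\omega_a e^a$, giving $\rhon B=\restr{B(n)}{\Sigma}$, not just for $B=df$), so you have replaced the paper's external reference by the verification it relies on — at the cost of having to fix an orientation convention on $\Sigma$ (the one induced by $\iota_n\dvolg$, which is indeed the one consistent with the paper's later Stokes argument in Equation (\ref{eqn:partialintegration})). Two trivial remarks: your dimensional argument $\Omega^{-1}(\Sigma)=\{0\}$ and the explicit computations for $\rhodelta f$ and $\rhon f$ match the intended triviality of those cases; and the appeal to torsion-freeness for $(df)(n)=n^\alpha\nabla_\alpha f$ is unnecessary, since $\nabla_\alpha f=\partial_\alpha f$ on scalars by definition of any connection — a cosmetic point, not a gap.
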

\begin{proof}
The proof of these identities is straightforward (cf.~\cite[Lemma 3.8]{Schambach2016}). For example, $\rhod f=\rhon df=
\restr{n^\alpha (df)_\alpha}{\Sigma}$ by \cite[Appendix A]{Furlani1999}, and $(df)_{\alpha}=\nabla_\alpha f$.
\end{proof}
\begin{lemma}[Gaussian Coordinates] \label{lem:normal_vectors}
	Let $\Sigma$ be a Cauchy surface of $\M$ with future pointing unit normal vector field $n$. We can extend $n$ to a neighbourhood of $\Sigma$ such that
	\begin{equation}
	n^\alpha \nabla_\alpha n^\beta = 0 \formspace, \quad
	dn = 2 \nabla_{[\mu} n_{\nu]} = 0 \formspace. \label{eqn:gauss_coordinates}
	\end{equation}
\end{lemma}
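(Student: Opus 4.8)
The plan is to extend $n$ by geodesic flow off $\Sigma$ and then read off both identities from the resulting Gaussian normal coordinates. First I would define the extension via the normal exponential map: for each $p\in\Sigma$ let $\gamma_p$ be the unique timelike geodesic with $\gamma_p(0)=p$ and $\dot\gamma_p(0)=n_p$, parametrised by proper time. Since $n$ is a smooth unit timelike field along the spacelike hypersurface $\Sigma$, the map $(p,t)\mapsto\gamma_p(t)$ is a diffeomorphism from a neighbourhood of the zero section of the (one-dimensional) normal bundle onto a neighbourhood $\OO$ of $\Sigma$ in $\M$. Defining $n$ at $\gamma_p(t)$ to be $\dot\gamma_p(t)$ yields a smooth extension whose integral curves are by construction geodesics, so that $n^\alpha\nabla_\alpha n^\beta=0$ holds directly from the geodesic equation. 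This establishes the first identity.

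For the second identity I would pass to Gaussian normal coordinates $(t,x^1,x^2,x^3)$ on $\OO$, where $t$ is the proper-time parameter above (so that $\Sigma=\{t=0\}$) and the $x^i$ are local coordinates on $\Sigma$ transported along the geodesics. In these coordinates $n=\partial_t$, and $g_{tt}=g(\partial_t,\partial_t)=g(n,n)=-1$ because the curves are unit-speed. The crux is to show $g_{ti}=g(\partial_t,\partial_i)=0$ throughout $\OO$. I would compute, using metric compatibility,
\[
\partial_t\, g(\partial_t,\partial_i)
= g(\nabla_{\partial_t}\partial_t,\,\partial_i) + g(\partial_t,\,\nabla_{\partial_t}\partial_i).
\]
The first term vanishes by the geodesic equation $\nabla_{\partial_t}\partial_t=\nabla_n n=0$; for the second, since coordinate fields commute and $\nabla$ is torsion-free, $\nabla_{\partial_t}\partial_i=\nabla_{\partial_i}\partial_t$, whence $g(\partial_t,\nabla_{\partial_i}\partial_t)=\tfrac12\,\partial_i\,g(\partial_t,\partial_t)=0$. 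Thus $\partial_t g_{ti}=0$, and since $g_{ti}|_{t=0}=g(n,\partial_i)=0$ as $n$ is orthogonal to $\Sigma$, integration gives $g_{ti}\equiv0$ on $\OO$.

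With $g_{tt}=-1$ and $g_{ti}=0$ established, the metric-dual one-form of the vector field $n=\partial_t$ has components $n_\mu=g_{\mu t}$, so that $n_\mu\,dx^\mu=-\,dt$. This one-form is exact, hence $dn=-\,d(dt)=0$, which is precisely the second identity $\nabla_{[\mu}n_{\nu]}=0$.

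The main obstacle will be the middle step, namely propagating the orthogonality $g_{ti}=0$ away from $\Sigma$, which is exactly the content of the Gauss lemma for the normal exponential map; the geodesic equation and the unit normalisation do the essential work. I would also take care that the construction is only local: the exponential map is a diffeomorphism merely on a sufficiently small neighbourhood, and one must ensure that the level sets $\{t=\mathrm{const}\}$ remain spacelike there so that $(t,x^i)$ is a genuine chart. The remaining ingredients — existence and smoothness of the geodesic flow and the coordinate computation above — are standard.
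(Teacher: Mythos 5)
Your proof is correct, and the extension itself---pushing $n$ off $\Sigma$ along the normal geodesics via the normal exponential map, so that $n^\alpha\nabla_\alpha n^\beta=0$ holds by construction---is exactly the construction the paper invokes by citing the standard treatments of Gaussian normal coordinates in Wald and Carroll. Where you genuinely diverge is in the second identity: the paper derives $dn=0$ by appealing to Frobenius' theorem (hypersurface orthogonality of the foliation by level sets, following \cite{SandersDappiaggiHack2014}), whereas you prove it by a direct Gauss-lemma computation, showing $\partial_t\, g_{ti} = g(\nabla_{\partial_t}\partial_t,\partial_i) + \tfrac12\,\partial_i\, g_{tt} = 0$, which propagates the initial orthogonality $g_{ti}=0$ off $\Sigma$, puts the metric in Gaussian form, and exhibits the dual one-form as $n_\mu\,dx^\mu = -dt$, manifestly exact. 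Your route is more self-contained and elementary: it uses only metric compatibility, torsion-freeness, and norm preservation along geodesics, with no integrability theorem, at the cost of carrying out the coordinate computation explicitly; the Frobenius route is shorter given the cited background and isolates the geometric mechanism (hypersurface orthogonality combined with the geodesic and unit-norm conditions). Your closing caveats---that the normal exponential map is a diffeomorphism only on a sufficiently small neighbourhood of the zero section, and that the level sets $\{t=\mathrm{const}\}$ must remain spacelike there---are the right ones to flag, and both hold near $\Sigma$ by continuity, since $g_{tt}=-1$ and $g_{ti}=0$ force the induced metric $h_{ij}$ on the level sets to stay positive definite close to $\Sigma$.
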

\begin{proof}
	An introduction to Gaussian (normal) coordinates is for example given in \cite[pp. 42,43]{Wald1984} or \cite[pp. 445,446]{Carroll2004} where the first equation of (\ref{eqn:gauss_coordinates}) is shown to hold by construction. The second Equation of (\ref{eqn:gauss_coordinates}) can be derived by using Frobenius' theorem (see for example \cite[Theorem B.3.1 and B.3.2]{Wald1984}) as explained in \cite[Section 2.3.2 Equation (5)]{SandersDappiaggiHack2014}.
\end{proof}
With this normal vector field we can write the metric $g$ of the spacetime $\M$ in a neighbourhood of the Cauchy surface as $g\indices{_{\mu\nu}} = - n_\mu n_\nu + h\indices{_{\mu\nu}}$ \cite[Equation 10.2.10]{Wald1984}, where $h$ extends the induced metric on $\Sigma$.

To state the main result of this section we also introduce fundamental solutions for the Proca operator $(\delta d + m^2)$. The Proca operator, being Green hyperbolic, has unique advanced $(-)$ and retarded $(+)$ fundamental solutions $G_m^\pm : \Omega^p_0(\M) \to \Omega^p(\M)$ which are given in terms of the fundamental solutions of the wave operator by
\begin{equation}
G_m^\pm = (m^{-2}{d\delta} +1) E_m^\pm \formspace,
\end{equation}
cf. \cite[Example 2.17]{BaerGinoux2012}. Analogously we define $G_m = (m^{-2}{d\delta} +1) E_m$. We then have
\begin{theorem}[Solution of Proca's equation]\label{thm:solution_proca_unconstrained}
	Let $\Az,\Ad \in \Omega^1(\Sigma)$ on $\Sigma$, $F \in \Omega^1_0(\M)$ a test one-form, $j \in \Omega^1(\M)$ an external source and $m >0$ a mass. Then,
	\begin{equation}
	\langle A, F \rangle_\M = \sum\limits_\pm \langle j , G_m^\mp F   \rangle_{J^{\pm}(\Sigma)}
	- \langle \Az , \rhod G_m F \rangle_\Sigma
	+ \langle \Ad , \rhoz G_m F \rangle_\Sigma \label{eqn:solution_proca_unconstrained}
	\end{equation}
	specifies the unique smooth solution of Proca's equation $\left( \delta d + m^2 \right) A = j$ with the given $\Az$ and $\Ad$. Furthermore, the solution depends continuously on these initial data, and we have
		\begin{equation}\label{eqn:proca_constraints}
		\Adelta = m^{-2} \rhodelta j \quad \text{and} \quad
		m^2 \, \An = \rhon j + \delta_{(\Sigma)}\Ad \formspace.
		\end{equation}
\end{theorem}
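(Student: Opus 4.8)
The plan is to read Proca's equation, for $m>0$, as the wave equation (\ref{eqn:classical_wave_eqation}) together with the Lorenz constraint (\ref{eqn:classical_constraint}), so that Theorem \ref{thm:solution_wave_equation} does the bulk of the work and only the constraint has to be translated into conditions on the Cauchy data. Concretely, for given $\Az,\Ad$ I would first \emph{define} $\Adelta$ and $\An$ through the relations (\ref{eqn:proca_constraints}) — which is possible precisely because $m>0$ — then invoke Theorem \ref{thm:solution_wave_equation} with source $\kappa = j + m^{-2}d\delta j$ and the full Cauchy data $(\Az,\Ad,\An,\Adelta)$ to obtain a smooth solution $A$ of $(\square+m^2)A = j + m^{-2}d\delta j$, and finally show that this $A$ in fact solves Proca's equation. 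Smoothness, uniqueness and continuous dependence on $(\Az,\Ad)$ are then inherited from the corresponding statements of Theorem \ref{thm:solution_wave_equation} together with the continuity of the (differential) constraint maps in (\ref{eqn:proca_constraints}): a difference of two solutions with vanishing $\Az,\Ad$ has, by (\ref{eqn:proca_constraints}) with $j=0$, vanishing $\An,\Adelta$ as well, hence vanishes by wave-equation uniqueness.

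The heart of the argument is to show that, for $A$ solving the wave equation, the conditions (\ref{eqn:proca_constraints}) are exactly the ones that make the Lorenz constraint propagate. Following the discussion around (\ref{eqn:waveeqn_constraint}), the zero-form $C \coloneqq \delta A - m^{-2}\delta j$ satisfies $(\square+m^2)C=0$, so by \cite[Cor.~3.2.4]{BaerGinouxPfaeffle2007} we have $C\equiv0$ (equivalently, $A$ solves Proca) if and only if its Klein--Gordon Cauchy data $\rhoz C$ and $\rhod C$ vanish on $\Sigma$. For the first datum, $\rhoz C = \Adelta - m^{-2}\rhodelta j$ by $\rhoz\delta = \rhodelta$, so $\rhoz C = 0$ is precisely the first equation of (\ref{eqn:proca_constraints}). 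For the second datum I would use Lemma \ref{lem:klein_gordon_cauchy_data} in the form $\rhod f = \rhon(df)$ for zero-forms $f$, together with $\square = d\delta + \delta d$, to write $\rhod\delta A = \rhon(d\delta A) = \rhon(\square A) - \rhon(\delta d A)$; substituting the wave equation for $\square A$ and using $\rhon A = \An$ then yields the second equation of (\ref{eqn:proca_constraints}), provided one knows the purely geometric commutation identity
\begin{equation}
\rhon(\delta d A) = -\,\delta_{(\Sigma)}\,\rhod A = -\,\delta_{(\Sigma)}\Ad. \label{eqn:plan_key}
\end{equation}

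I expect (\ref{eqn:plan_key}) to be the main obstacle. It relates the normal-projection operator $\rhon$, the spacetime co-derivative $\delta$ and the induced co-derivative $\delta_{(\Sigma)}$, and establishing it requires unwinding the Hodge operators in $\rhon=-*_{(\Sigma)}i^* *$ and $\rhod = -*_{(\Sigma)}i^* *d$. The near-$\Sigma$ decomposition $g_{\mu\nu}=-n_\mu n_\nu + h_{\mu\nu}$ together with the Gaussian-coordinate conditions $n^\alpha\nabla_\alpha n^\beta=0$ and $\nabla_{[\mu}n_{\nu]}=0$ of Lemma \ref{lem:normal_vectors} should be what forces the extrinsic-curvature and normal-derivative contributions to cancel; a useful sanity check is the flat case with a constant-time slice, where the normal--normal second derivatives of $A$ drop out and (\ref{eqn:plan_key}) reduces to $\delta_{(\Sigma)}$ applied to the electric field $F_{0i}$, as in Gauss' law. (Correspondingly, the second constraint in (\ref{eqn:proca_constraints}) is the Proca version of Gauss' law.)

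It then remains to derive the explicit formula (\ref{eqn:solution_proca_unconstrained}). Here I would substitute $\Adelta=m^{-2}\rhodelta j$ and $\An=m^{-2}(\rhon j + \delta_{(\Sigma)}\Ad)$ into the formula of Theorem \ref{thm:solution_wave_equation} and simplify using $G_m^\pm = (m^{-2}d\delta+1)E_m^\pm$. Three elementary facts carry most of the load: $\rhod\circ d = 0$ (since $d^2=0$), so the $\Az$-term is already $-\langle\Az,\rhod G_m F\rangle_\Sigma$; the pullback identity $d_{(\Sigma)}\rhodelta = \rhoz\, d\delta$ (from $d_{(\Sigma)}i^*=i^*d$), which merges the $\delta_{(\Sigma)}\Ad$-piece of the $\An$-term with the $\Ad$-term into $\langle\Ad,\rhoz G_m F\rangle_\Sigma$; and the intertwining relations $E_m^\pm d = dE_m^\pm$, $E_m^\pm\delta=\delta E_m^\pm$ with the formal adjointness of $d$ and $\delta$, which turn the source term $\sum_\pm\langle E_m^\mp F, j + m^{-2}d\delta j\rangle_{J^\pm(\Sigma)}$ into $\sum_\pm\langle j, G_m^\mp F\rangle_{J^\pm(\Sigma)}$. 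Because these last pairings are over $J^\pm(\Sigma)$ rather than all of $\M$, integrating $d\delta$ by parts produces boundary contributions on $\Sigma$; the secondary technical point will be to verify that these are exactly the terms $-\langle\rhodelta j,\rhon E_m F\rangle_\Sigma$ and $\langle\rhon j,\rhodelta E_m F\rangle_\Sigma$ arising from the substituted $\Adelta$- and $\An$-data, so that everything collapses to (\ref{eqn:solution_proca_unconstrained}). This bookkeeping is of the same nature as the Green's identity already used in the proof of Theorem \ref{thm:solution_wave_equation}, and so I regard it as routine once (\ref{eqn:plan_key}) is in place.
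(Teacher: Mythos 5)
Your proposal is correct and matches the paper's proof essentially step for step: the same reduction to the wave equation plus the propagated Lorenz constraint, the same translation of the vanishing Klein--Gordon data of $\delta A - m^{-2}\delta j$ into the constraints (\ref{eqn:proca_constraints}), and the same substitution into Theorem \ref{thm:solution_wave_equation} with $d^2=0$, the pullback identity and a Stokes/partial-integration argument over $J^{\pm}(\Sigma)$ producing exactly the boundary terms you anticipate. The key identity you flagged as the main obstacle, $\rhon(\delta d A) = -\,\delta_{(\Sigma)}\Ad$, is precisely the computation the paper carries out in Gaussian normal coordinates via Lemma \ref{lem:normal_vectors} and the decomposition $g_{\mu\nu} = -n_\mu n_\nu + h_{\mu\nu}$, i.e.\ it is settled by exactly the tools you proposed.
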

\begin{proof}
We use the equivalence of the Proca equation (\ref{eqn:Proca}) with the wave equation (\ref{eqn:classical_wave_eqation}) and the vanishing of the initial data of $\delta A-m^{-2}\delta j$. We let $A$ be a solution of the wave equation $(\square + m^2) A = \kappa$ with $\kappa = j + m^{-2} \, d \delta j$. We first show that the specified constraints (\ref{eqn:proca_constraints}) on the initial data are equivalent to the vanishing of the initial data of $\delta A - m^{-2} \delta j$. For this we use Lemma \ref{lem:klein_gordon_cauchy_data} and \ref{lem:normal_vectors}.
The vanishing of the initial value yields, using the linearity of the pullback and Definition \ref{def:cauchy_mapping_operators}:
	\begin{equation}
	0 = \rhoz \left( \delta A - m^{-2} \delta j\right)
	= \rhodelta A - m^{-2} \rhodelta j \formspace.
	\end{equation}
We will calculate the vanishing of the normal derivative in Gaussian normal coordinates and in the end turn back to a coordinate independent notation:
	\begin{equation}
	0 = \rhod \left( \delta A - m^{-2} \delta j\right)
	= \restr{\Big(n^\alpha \nabla_\alpha  \delta A \Big)}{\Sigma} - m^{-2} \rhod  \delta j  \formspace. \label{eqn:cauchy_normal_derivative_tmp}
	\end{equation}
	We will take a separate look at the first summand:
	\begin{align}
	n^\alpha \nabla_\alpha  \delta A
	&= n^\alpha \left( d\delta A \right)_{\alpha}
	= n^\alpha \square A_\alpha - n^\beta \left( \delta  d A \right)_\beta \notag\\
	&= n^\alpha \kappa_\alpha -{m^2} \, n^\mu A_\mu + 2 n^\beta \nabla^\nu \nabla_{[\nu} A_{\beta]} \notag\\
	&= n^\alpha \kappa_\alpha - {m^2} \, n^\mu A_\mu +  2 \nabla^\nu \left( n^\beta  \nabla_{[\nu} A_{\beta]} \right) \formspace ,
	\end{align}
	where we have used that $\nabla^\nu n^\beta$ is symmetric by Lemma \ref{lem:normal_vectors}.
	Writing $g\indices{_{\mu\nu}} = -n_\mu n_\nu + h\indices{_{\mu\nu}}$ and using Lemma \ref{lem:normal_vectors} we find:
	\begin{align}
	\restr{g\indices{^{\sigma\nu}} \nabla_\sigma \left( n^\beta  \nabla_{[\nu} A_{\beta]} \right)}{\Sigma}
	&= \restr{\left(- n^\sigma n^\nu +  h\indices{^{\sigma\nu}} \right) \nabla_\sigma \left( n^\beta  \nabla_{[\nu} A_{\beta]} \right) }{\Sigma} \notag \\
	&= 0 + \restr{\nabla_{(\Sigma)}^{\nu} \left( n^\beta  \nabla_{[\nu} A_{\beta]} \right)}{\Sigma} \formspace.
	\end{align}
	Here we have made use of the identification $ h\indices{^{\sigma\nu}} \nabla_\sigma B_\mu= \nabla_{(\Sigma)}^\nu  B_\mu$ for any one-form $B$ tangential to $\Sigma$ \cite[Lemma 10.2.1]{Wald1984}.
	We identify $2 n^\beta  \nabla_{[\nu} A_{\beta]} = -2 n^\beta  \nabla_{[\beta} A_{\nu]} =- A_{(d)\nu}$ and use $\delta_{(\Sigma)} B =  -\nabla_{(\Sigma)} ^\alpha B_\alpha$ to obtain
	\begin{equation}
	\restr{n^\alpha \nabla_\alpha  \delta A}{\Sigma}
	 =\rhon \kappa - m^2 \An  + \delta_{(\Sigma)} \Ad \formspace.
	\end{equation}
	Inserting this into Equation (\ref{eqn:cauchy_normal_derivative_tmp}) and using the definition of the source term $\kappa = j + m^{-2} {d \delta j}$, we find from $\rhod = \rhon d$ that
	\begin{equation}
	m^2 \An
	 = \rhon \kappa - m^{-2} \rhod \delta j + \delta_{(\Sigma)} \Ad
	=  \rhon j + \delta_{(\Sigma)} \Ad \formspace.
	\end{equation}
This proves that (\ref{eqn:proca_constraints}) are the required constraints.

We now substitute the constraints (\ref{eqn:proca_constraints}) in the formula of Theorem \ref{thm:solution_wave_equation} and show that we recover Equation (\ref{eqn:solution_proca_unconstrained}). We find
	\begin{align} \label{eqn:tmp_proca_solution_constraint_inserted}
	\langle A, F \rangle_\M =
	& \sum\limits_\pm \langle j + m^{-2} d\delta j, E_m^\mp F   \rangle_{J^{\pm}(\Sigma)}
	- \langle \Az , \rhod E_m F \rangle_\Sigma
	- m^{-2} \langle  \rhodelta j, \rhon E_m F \rangle_\Sigma \notag \\
	& + m^{-2} \langle \delta_{(\Sigma)} \Ad , \rhodelta E_m F \rangle_\Sigma
	+ m^{-2} \langle \rhon j, \rhodelta E_m F \rangle_\Sigma
	+ \langle \Ad, \rhoz E_m F \rangle_\Sigma \formspace.
	\end{align}
	Now, for clarity's sake, we take a look at the appearing terms separately. To get rid of the divergence of $\Ad$, we use the formal adjointness of $\delta$ and $d$ and the commutativity of $d$ with the pullback $i^*$:
	\begin{align}
	m^{-2}\langle \delta_{(\Sigma)} \Ad, \rhodelta E_m F \rangle_\Sigma
	=& m^{-2}\langle \Ad, d_{(\Sigma)} i^* \delta  E_m F \rangle_\Sigma \notag\\
	=& m^{-2}\langle \Ad,  i^* d \delta  E_m F \rangle_\Sigma \notag \\
	=& m^{-2}\langle \Ad, \rhoz d \delta  E_m F \rangle_\Sigma \formspace,
	\end{align}
	which, together with $\langle \Ad, \rhoz E_m F \rangle_\Sigma$,  combines to $\langle \Ad , \rhoz G_m F \rangle_\Sigma$. \par
	Next, we have a look at a part of the sum term and use Stoke's theorem (we get a sign $\mp$ due to the orientation of $\Sigma$ with respect to $J^{\pm}(\Sigma)$) for a partial integration, at the cost of some boundary terms:
    \begin{align}\label{eqn:partialintegration}
    \sum\limits_\pm &\langle d \delta j , E_m^\mp F \rangle_{J^{\pm}(\Sigma)} -
    \langle j , d \delta E_m^\mp F \rangle_{J^{\pm}(\Sigma)}\nonumber\\
    &= \sum\limits_\pm \int_{J^{\pm}(\Sigma)} d \delta j \wedge {* E_m^\mp} F -
    d \delta E_m^\mp F \wedge {* j}\nonumber\\
    &= \sum\limits_\pm \int_{J^{\pm}(\Sigma)} d \left(\delta j \wedge {* E_m^\mp} F -
    \delta E_m^\mp F \wedge {* j}\right)
    + \delta j \wedge {* \delta E_m^\mp} F - \delta E_m^\mp F \wedge {* \delta j}\nonumber\\
    &= \sum\limits_\pm \mp \int_\Sigma i^*\left(\delta j \wedge {* E_m^\mp } F - \delta E_m^\mp F \wedge {* j}\right)\nonumber\\
    &= -\int_\Sigma i^*\left(\delta j \wedge {* E_m F} - \delta E_m F \wedge {* j}\right).\nonumber\\
    &= -\langle i^*\delta j, *_{(\Sigma)} {i^* *} E_m F\rangle_{\Sigma} + \langle i^*\delta E_m F, *_{(\Sigma)} {i^* *} j\rangle_{\Sigma}\nonumber\\
    &=\langle \rhodelta j, \rhon E_m F\rangle_{\Sigma} - \langle \rhodelta E_m F, \rhon j\rangle_{\Sigma} \formspace.
    \end{align}
Multiplying this equality by $m^{-2}$ and rearranging, we see that the first, third and fifth terms of (\ref{eqn:tmp_proca_solution_constraint_inserted}) combine to the first term of (\ref{eqn:solution_proca_unconstrained}). Finally, we note that in the second term of (\ref{eqn:tmp_proca_solution_constraint_inserted}), $d^2 =0$ implies
	\begin{equation}
	\rhod G_m = - {*_{(\Sigma)} {i^* *}} d\left( m^{-2}{d \delta} +1 \right) E_m = - {*_{(\Sigma)}{i^* *}} d E_m = \rhod E_m 
	\end{equation}
which completes the proof.
\end{proof}
\subsection{The quantum Proca field in curved spacetimes}\label{sec:quantumProca}
The procedure to quantize the Proca field in a generally covariant way in the framework of Brunetti, Fredenhagen and Verch \cite{BrunettiFredenhagenVerch2003} is well understood, see e.\,g. \cite{Dappiaggi2011} for the source free case. The modifications needed to account for external currents can be made analogously to \cite{SandersDappiaggiHack2014} (see also \cite{FewsterSchenkel2015}). Throughout this section, the mass $m>0$ is assumed to be fixed. For simplicity we will mostly consider a single fixed spacetime $(\M,g)$ and source $j\in\Omega^1(\M)$. \par
The quantum Proca field is then described by the following algebra:
\begin{definition}\label{def:algebra-A(M)}
    The unital $^*$-algebra $\AA_{m,j}$ is obtained from the free algebra, generated by $\mathbbm{1}$ and the objects $\A_{m,j}(F)$, $F\in\Omega_0^1(\M)$, by factoring out the relations
	\begin{subequations}  \label{def:ideal_generators}
		\begin{align}
		\text{(i)}\; &\A_{m,j}(c F + c' F') = c\,\A_{m,j}(F) + c' \A_{m,j}(F') 														&\textrm{linearity,} \\
		\text{(ii)}\; &\A_{m,j}(F)^* = \A_{m,j}(\skoverline{F}\,) 																															&\textrm{hermitian field,} \\
		\text{(iii)}\; &\A_{m,j}\big( (\delta d + m^2) F \big) = \langle j, F \rangle_\M \cdot \mathbbm{1} 	&\textrm{equation of motion,} \\
		\text{(iv)}\; &[\A_{m,j}(F) , \A_{m,j}(F') ] = \i \Green{F}{F'} \cdot \mathbbm{1}															&\textrm{commutation relations},
		\end{align}
	\end{subequations}
for all $c, c' \in \IC$ and $F,F' \in \Omega^1_0(\M)$, where we write $\Green{F}{F'} = \langle F, G_mF' \rangle_\M$.
\end{definition}
For our later investigation of the zero mass limit it will be useful to describe the algebra $\AA_{m,j}$ and its topology in more detail in the next few sections.

\subsubsection{The Borchers-Uhlmann algebra}\label{sec:BU-algebra}

The algebra $\AA_{m,j}$ is obtained as a quotient of the \emph{Borchers-Uhlmann algebra} (BU-algebra), which is defined\footnote{Here, $\otimes$ denotes the algebraic tensor product, without taking any topological completion.} as the tensor algebra of the vector space $\Omega^1_0(\M)$,
\begin{equation}
\BUOmega \coloneqq \bigoplus_{n=0}^{\infty}\big(\Omega^1_0(\M)\big)^{\otimes n} \formspace.
\end{equation}
Elements $f \in \BUOmega$ are tuples $f= \big(f^{(0)}, f^{(1)}, f^{(2)}, \dots\big)$, where the components $f^{(0)} \in \IC$ and for $f^{(n)} \in \big(\Omega^1_0(\M)\big)^{\otimes n}$ for $n > 0$ such that only finitely many $f^{(n)}$'s are non-vanishing. We will call the component $f^{(n)}$ the \emph{degree-n-part} of $f$.

Addition and scalar multiplication in $\BUOmega$ are defined component-wise, and we can define a (tensor) product and $*$-operation by defining their degree-$n$-parts as
\begin{align}
(f\cdot g)^{(n)}(p_1,p_2, \dots , p_n) &=  \sum_{i+j = n} f^{(i)} (p_1, p_2, \dots , p_i) g^{(j)} (p_{i+1} , \dots , p_n) \formspace,\\
(f^*)^{(n)}(p_1, \dots , p_n) &=  \overline{f^{(n)}(p_n, p_{n-1}, \dots , p_1)}\formspace
\end{align}
for all elements $f,g$ and $p_i\in\M$. This makes $\BUOmega$ a *-algebra with unit element $\mathbbm{1}_{\BU(\Omega^1_0(\M))} = (1, 0, 0, \dots)$. The BU-algebra can be endowed with a locally convex topology \cite{SahlmannVerch2000}, obtained from the locally convex topology of\footnote{For a construction, see \cite[Chapter 17.1 to 17.3]{Dieudonne1972}.} $\Omega^1_0(\M)$. More precisely, we can view it as a dense sub-algebra of the \emph{complete BU-algebra}
\begin{equation}
\overline{\BUOmega} \coloneqq \bigoplus_{n=0}^{\infty}\Gamma_0((T^*\M)^{\boxtimes n}) \formspace,
\end{equation}
where $(T^*M)^{\boxtimes n}$ denotes the $n$-fold outer product bundle over $\M ^n$ (cf. \cite[Chapter 3.3]{SahlmannVerch2000}).\\ We note that the multiplication in $\overline{\BUOmega}$ is a jointly continuous bilinear map and hence so is the product in $\BUOmega$.

We want to identify smeared quantum fields $\A_{m,j}(F)$ with elements $(0, F, 0, 0, \dots)$, but the BU-algebra $\BUOmega$ incorporates neither any dynamics, nor the desired quantum commutation relations. It will be convenient to implement the Proca equation (in a distributional sense) and the canonical commutation relations (CCR) in a two step procedure. \par
First we divide out the two-sided ideal $\IMJDYN$ in $\BUOmega$ that is generated by elements
\begin{equation}
\big(-\langle j, F \rangle_\M, (\delta d + m^2)F,0,0,\dots\big) \in \BUOmega\formspace,
\end{equation}
for $F \in \Omega^1_0(\M)$, to implement the dynamics. That means, by definition, that every $f\in\IMJDYN$ can be written as a finite sum
\begin{equation}\label{eqn:ideal}
f = \sum_i g_i \cdot \left(-\langle j, F_i \rangle_\M, (\delta d + m^2)F_i,0,0,\dots\right) \cdot h_i \formspace,
\end{equation}
for some $F_i \in \Omega^1_0(\M)$ and $g_i, h_i \in \BUOmega$. We define
\begin{equation}
\BUmjdyn \coloneqq {\Quotientscale{\BUOmega}{\IMJDYN}} \formspace.
\end{equation}
Elements $f \in \BUmjdyn$ are then equivalence classes $ f = \left[ g \right]_{m,j}^\text{dyn}$ where $g \in \BUOmega$. \par

Now, in the second step, we incorporate the CCR by dividing out the two-sided ideal $\IMJCCR$ that is generated by elements
\begin{equation}
\big[ \big(-\i \Gm{F}{F'}, 0 , F \otimes F' - F' \otimes F, 0 , 0 , \dots\big) \big]_{m,j}^\text{dyn} \in \BUmjdyn
\end{equation}
to obtain the final field algebra
\begin{equation}
\AA_{m,j} = {\Quotientscale{\BUmjdyn}{\IMJCCR}} \formspace.
\end{equation}
We will sometimes equivalently write $\AA_{m,j} = {\Quotientscale{\BUOmega}{\IMJ}}$, where $\IMJ$ is the two-sided ideal generated by both of the wanted relations. A smeared \emph{quantum Proca field} is then an element
\begin{equation}
\A_{m,j}(F) \coloneqq \big[\big(0,F,0,0,\dots\big)\big]_{m,j} \in \AA_{m,j} \formspace,
\end{equation}	
where the equivalence class $[\PH]_{m,j}$ is taken w.r.t. $\IMJ$. By construction, the quantum Proca fields fulfill the desired dynamical and commutation relations.
We can endow $\AA_{m,j}$ with the locally convex quotient topology obtained from $\BUOmega$ (cf. \cite[Theorem 12.14.8]{Dieudonne1970}), which is induced by the semi-norms
\begin{equation}
q_{m,j, \alpha}( [f]_{m,j} ) = \inf\big\{ p_\alpha(g) : g \in [f]_{m,j} \big\}
\end{equation}
where $\left\{ p_\alpha\right\}_\alpha $ is a family of semi-norms on $\BUOmega$ that induces its topology \cite[Lemma 12.14.8]{Dieudonne1970}. Note that the multiplication in $\AA_{m,j}$ is again jointly continuous\footnote{To ensure that the quotient space is Hausdorff, we will show below that the ideals  $\IMJDYN$ and $\IMJCCR$ are closed.}.

\subsubsection{Reduction to the current-free case}\label{sec:no-current}

We now show that the algebra $\AA_{m,j}$ with source dependent dynamics is homeomorphic to the algebra $\AA_{m,0}$ with vanishing source, where the subscript $0$ indicates that we set $j=0$.

Let us fix a solution $\varphi$ of the classical source dependent Proca equation, $(\delta d +m^2) \varphi = j$. We may then define a *-algebra-homomorphism $\Gamma_{\varphi}$ on $\BUOmega$ which preserves the unit and which is then uniquely determined by its action on homogeneous elements of degree one:
	\begin{equation}
	\Gamma_{\varphi} :
	\big(0,F,0,0,\dots\big) \mapsto \big(-\langle \varphi , F \rangle_\M , F , 0,0,\dots\big)
	\end{equation}
for all $F \in \Omega^1_0(\M)$.

\begin{theorem}\label{thm:field-algebra-homeomorphy}
	Let $m > 0$ and $j \in \Omega^1(\M)$ and $\varphi\in\Omega^1(\M)$ a solution of $(\delta d +m^2) \varphi = j$. Then the map $\Gamma_{\varphi}$ is a homeomorphism of $\BUOmega$ which descends to a homeomorphism $\Psi_{\varphi} : \AA_{m,0} \to \AA_{m,j}$.
\end{theorem}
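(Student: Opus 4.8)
The plan is to establish the statement in two stages: first to show that $\Gamma_\varphi$ is a $*$-algebra automorphism and a homeomorphism of $\BUOmega$, and then to show that it carries the defining ideal $\IMZ$ of $\AA_{m,0}$ onto the defining ideal $\IMJ$ of $\AA_{m,j}$, so that it descends to the claimed homeomorphism $\Psi_\varphi$ of the quotients.

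For the first stage I would note that $\Gamma_\varphi$ is by construction a unit-preserving $*$-homomorphism, so only its invertibility and continuity remain. The natural candidate for the inverse is $\Gamma_{-\varphi}$. Since both maps are unital algebra homomorphisms, it suffices to verify $\Gamma_{-\varphi}\circ\Gamma_\varphi=\mathrm{id}$ on the degree-one generators, where a direct computation gives
\[
\Gamma_{-\varphi}\big(-\langle\varphi,F\rangle_\M,F,0,\dots\big)=\big(0,F,0,\dots\big),
\]
the scalar contributions cancelling; the same computation with $\varphi\to-\varphi$ gives $\Gamma_\varphi\circ\Gamma_{-\varphi}=\mathrm{id}$. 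Hence $\Gamma_\varphi$ is bijective with inverse of the same type. For continuity I would use that $\BUOmega$ carries the locally convex direct-sum topology, so that a linear map out of it is continuous iff its restriction to each homogeneous summand $(\Omega^1_0(\M))^{\otimes n}$ is continuous. On the $n$-th summand $\Gamma_\varphi$ sends $F_1\otimes\cdots\otimes F_n$ to the product $\prod_i\big(-\langle\varphi,F_i\rangle_\M\,\mathbbm{1}+F_i\big)$, a finite sum of multilinear expressions built from the continuous linear functional $F\mapsto\langle\varphi,F\rangle_\M$, the identity, and the jointly continuous multiplication of $\BUOmega$; each such expression is continuous, hence so is $\Gamma_\varphi$. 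As $\Gamma_{-\varphi}$ is of the same form, $\Gamma_\varphi$ is a homeomorphism.

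For the second stage -- the algebraic heart of the argument -- I would show $\Gamma_\varphi(\IMZ)=\IMJ$. For the dynamical generators I would use that $\delta d+m^2$ is formally self-adjoint under $\langle\PH,\PH\rangle_\M$ together with $(\delta d+m^2)\varphi=j$: applying $\Gamma_\varphi$ to the $j=0$ generator produces the scalar $-\langle\varphi,(\delta d+m^2)F\rangle_\M=-\langle(\delta d+m^2)\varphi,F\rangle_\M=-\langle j,F\rangle_\M$, so that
\[
\Gamma_\varphi\big(0,(\delta d+m^2)F,0,\dots\big)=\big(-\langle j,F\rangle_\M,(\delta d+m^2)F,0,\dots\big),
\]
which is exactly a generator of $\IMJDYN$. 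For the commutation generators, expanding the product $\Gamma_\varphi(F\otimes F'-F'\otimes F)$ shows that all scalar and degree-one contributions cancel, so that the commutator generator is left invariant with its scalar $-\i\Green{F}{F'}$ unchanged. Because $\Gamma_\varphi$ is a bijective homomorphism, the image of a two-sided ideal is the two-sided ideal generated by the images of its generators, and descending first through the dynamical quotient and then through the CCR quotient yields $\Gamma_\varphi(\IMZ)=\IMJ$.

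Finally, composing $\Gamma_\varphi$ with the quotient projection $\BUOmega\to\AA_{m,j}$ gives a continuous map whose preimage of $0$ is $\Gamma_\varphi^{-1}(\IMJ)=\IMZ$, so it factors through a continuous $\Psi_\varphi:\AA_{m,0}\to\AA_{m,j}$; the inverse is induced in the same way by $\Gamma_\varphi^{-1}=\Gamma_{-\varphi}$, making $\Psi_\varphi$ a homeomorphism for the quotient topologies. I expect the only genuinely delicate point to be the continuity of $\Gamma_\varphi$ on the full locally convex topology of $\BUOmega$: the algebraic identities are routine once the self-adjointness trick is spotted, but reducing continuity to the individual tensor summands and invoking joint continuity of the product must be done with care, and the well-behavedness of the quotients as Hausdorff spaces implicitly relies on the closedness of the ideals established elsewhere.
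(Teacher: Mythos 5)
Your proposal is correct and takes essentially the same route as the paper's proof: the inverse $\Gamma_{-\varphi}$, the computation showing the dynamical generators transform correctly via formal self-adjointness of $\delta d+m^2$ together with $(\delta d+m^2)\varphi=j$, the decomposition of the CCR generator into a commutator of degree-one elements plus a scalar (whose contributions cancel), and the descent $\Gamma_\varphi(\IMZ)=\IMJ$ to a homeomorphism of the quotients. The only difference is that you spell out the continuity of $\Gamma_\varphi$ and $\Gamma_{-\varphi}$ on the homogeneous tensor summands, a point the paper simply asserts, deferring details to the thesis it cites.
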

\begin{proof}
	The inverse $\Gamma_{\varphi}$ is obviously determined by
	\begin{align}
	\Gamma_{\varphi}^{-1} : \BUOmega &\to \BUOmega \\
	\big(0,F,0,0,\dots\big) &\mapsto \big(+\langle \varphi , F \rangle_\M , F , 0,0,\dots\big) \formspace\notag
	\end{align}
	and both $\Gamma_{\varphi}$ and $\Gamma_{\varphi}^{-1}$ are continuous on $\BUOmega$. We now show that $\Gamma_{\varphi}$ maps the ideal $\IMZ$ onto $\IMJ$. It suffices to show that the generators of the source-free ideal map under $\Gamma_{\varphi}$ to the corresponding generators of the source dependent ideal and vice versa. Let $F\in \Omega^1_0(\M)$, then
	\begin{align}
	\Gamma_{\varphi}&\Big( \big( 0,(\delta d +m^2)F , 0 , 0 , \dots   \big)  \Big) \notag\\
	&= \big( - \langle \varphi, (\delta d + m^2)F \rangle_\M,(\delta d +m^2)F , 0 , 0 , \dots   \big) \notag\\
	&= \big( - \langle (\delta d + m^2) \varphi, F \rangle_\M,(\delta d +m^2)F , 0 , 0 , \dots   \big) \notag\\
	&= \big( - \langle j, F \rangle_\M ,(\delta d +m^2)F , 0 , 0 , \dots   \big)  \formspace,
	\end{align}
	so the generators for the dynamics transform in the desired way. For the commutation relations we first decompose:
	\begin{align}
	\big( -\i \Gm{F}{F'},0 , F \otimes F' - F' \otimes F   , 0 ,0, \dots   \big)  &=\big( -\i \Gm{F}{F'}, 0 ,0, \dots   \big) \\
	&\phantom{M}+ \big( 0,F,0,0,\dots  \big)\cdot \big( 0,F',0,0,\dots  \big)  \notag \\
	&\phantom{M}-\big( 0,F',0,0,\dots  \big)\cdot \big( 0,F,0,0,\dots  \big) \notag
	\end{align}
	and therefore obtain
	\begin{align}
	\Gamma_{\varphi}\Big( \big( -\i \Gm{F}{F'},&0 , F \otimes F' - F' \otimes F   , 0 ,0, \dots   \big)  \Big) \notag\\
	&=\big( -\i \Gm{F}{F'}, 0 ,0, \dots   \big)  \notag \\
	&\phantom{M}+ \big( - \langle \varphi,F \rangle_\M,F,0,0,\dots  \big)\cdot \big( - \langle \varphi,F' \rangle_\M,F',0,0,\dots  \big)  \notag \\
	&\phantom{M}-\big( - \langle \varphi,F' \rangle_\M,F',0,0,\dots  \big)\cdot \big( - \langle \varphi,F \rangle_\M,F,0,0,\dots  \big) \notag\\		
	&= \big( -\i \Gm{F}{F'},0 , F \otimes F' - F' \otimes F   , 0 ,0, \dots   \big)   \formspace.
	\end{align}
	It is straightforward to check in a completely analogous fashion that the generators of the source-dependent ideal map under $\Gamma_{\varphi}^{-1}$ to the generators of the source-free ideal.
	In conclusion, we find that $\Gamma_{\varphi}(\IMZ)=\IMJ$, and diving out the ideals yields the diffeomorphism $\Psi_{\varphi}$. We refer to \cite[Theorem 4.15]{Schambach2016} for more details.
\end{proof}
Given an observable of the source free theory $\A_{m,0}(F)$, we obtain
\begin{equation}
\A_{m,j}(F)
 = \langle \varphi, F \rangle_\M \cdot \mathbbm{1}_{\AA_{m,j}} + \Psi_{\varphi}\big(\A_{m,0}(F)\big) \formspace.
\end{equation}
Hence, the dynamics and commutation relations for $\A_{m,0}$ imply those of $\A_{m,j}$ and vice versa.

\subsubsection{Initial value-formulation}\label{sec:initial-value}

In order to divide out the dynamical ideal $\IMZDYN$ in the source-free case it is convenient to make use of an initial value formulation. First, however, we characterise the generators of this ideal:
\begin{lemma}\label{lem:dynamics}
$F\in\Omega^1_0(\M)$ is of the form $F=(\delta d+m^2)F'$ for some $F'\in\Omega^1_0(\M)$ if and only if $G_mF=0$.
\end{lemma}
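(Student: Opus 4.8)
The plan is to exploit that $G_m=G_m^--G_m^+$ is the causal propagator of the Green-hyperbolic Proca operator $P\coloneqq\delta d+m^2$, and to base everything on three properties of its fundamental solutions $G_m^\pm\colon\Omega^1_0(\M)\to\Omega^1(\M)$: (a) $P\,G_m^\pm=\mathrm{id}$ on $\Omega^1_0(\M)$, (b) $G_m^\pm\,P=\mathrm{id}$ on $\Omega^1_0(\M)$, and (c) the causal support bound $\supp{G_m^\pm F}\subset J^\pm(\supp{F})$. Properties (a) and (b) hold either by Green-hyperbolicity directly, or can be read off from the stated relation $G_m^\pm=(m^{-2}d\delta+1)E_m^\pm$ using $dd=\delta\delta=0$, $\square=d\delta+\delta d$ and the intertwining identities $E_m^\pm d=dE_m^\pm$, $E_m^\pm\delta=\delta E_m^\pm$; a short computation reduces both composites to $(\square+m^2)E_m^\pm=\mathrm{id}$.

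For the forward implication I would suppose $F=PF'$ with $F'\in\Omega^1_0(\M)$. Since $P$ is a differential operator, $PF'$ is again compactly supported, so property (b) gives $G_m^\pm PF'=F'$ and therefore $G_mF=(G_m^--G_m^+)PF'=F'-F'=0$.

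For the converse I would assume $G_mF=0$, i.e.\ $G_m^-F=G_m^+F$, and set $F'\coloneqq G_m^-F=G_m^+F$. The key step is to verify that $F'\in\Omega^1_0(\M)$: it is smooth because $G_m^-$ maps into $\Omega^1(\M)$, and by property (c) its support is contained in $J^-(\supp{F})\cap J^+(\supp{F})$, which is compact since $(\M,g)$ is globally hyperbolic and $\supp{F}$ is compact. Applying $P$ and using property (a) then gives $PF'=PG_m^-F=F$, so $F$ is of the required form. I expect the compact support of $F'$ to be the only genuine obstacle; it relies on the causal support bound (c) and on global hyperbolicity, which ensures that the intersection of the causal future and past of a compact set is compact.
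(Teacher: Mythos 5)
Your proposal is correct and follows essentially the same route as the paper's proof: the forward direction via $G_m^\pm P=\mathrm{id}$ on compactly supported forms, and the converse by setting $F'=G_m^+F=G_m^-F$, noting its support lies in $J^+(\supp{F})\cap J^-(\supp{F})$, which is compact by global hyperbolicity, and applying $P$. You merely spell out the standard Green-operator identities and the support argument that the paper leaves implicit.
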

\begin{proof}
If $F=(\delta d + m^2)F'$, then $G_mF=G_m(\delta d+m^2)F'=0$. Conversely, if $G_mF=0$, then $F'=G_m^+F=G_m^-F$ has compact support and $F=(\delta d+m^2)F'$.
\end{proof}
Now let $\Sigma$ be an arbitrary, fixed Cauchy surface. We will use the short-hand notation $\Dzs = \Omega^1_0(\Sigma) \oplus \Omega^1_0(\Sigma)$ for the space of initial data on $\Sigma$. We define the map
\begin{equation}
\kappa_m: \Omega^1_0(\M) \to \Dzs  \formspace,\quad
F \mapsto (\rhoz G_m F , \rhod G_m F) \formspace,
\end{equation}
which maps a test one-form $F$ to the solution $G_mF$ of Proca's equation and then to its initial data on $\Sigma$ (cf.~Theorem \ref{thm:solution_proca_unconstrained} and Definition \ref{def:cauchy_mapping_operators}). In the notation, we omit the dependence of the map on the Cauchy surface.

For any value of $m>0$, $\kappa_m$ is continuous w.r.t.~the direct sum topology on $\Dzs$, and hence $\KERN{\kappa_m}$ is closed \cite[34-36 ]{Treves1967}. 
By Lemma \ref{lem:dynamics} and Theorem \ref{thm:solution_proca_unconstrained} we have
\begin{equation}\label{eqn:ker(kappa_m)}
\KERN{\kappa_m} = \big\{F \in \Omega^1_0(\M) \mid G_mF=0 \big\} = (\delta d + m^2)\Omega^1_0(\M) \formspace .
\end{equation}

By a standard construction \cite[ibid.]{Treves1967}, illustrated in Diagram \ref{dia:homeomorphism_one_particle_level}, $\kappa_m$ gives rise to a linear map  $\xi_m : {\Quotientscale{\Omega^1_0(\M)}{\KERN{\kappa_m}}} \to \IMG{\kappa_m}$, which is the unique bijective map such that $\xi_m([F]_m) = \kappa_m(F)$, where $[F]_m$ denotes equivalence classes in the quotient space \cite[16]{Treves1967}. We will now show

\begin{table}
	\begin{displaymath}
	\xymatrix @R=20mm @C=30mm
	{
		\Omega^1_0 (\M)  \ar[r]^{\kappa_m}   \ar[dr]_{[{\PH}]_m}  			&     \IMG{\kappa_m} 													\ar@<-.5ex>[d]_{\xi_m^{-1}} \ar@{^{(}->}[r]^i &   \Dzs \\
		&      {{\Quotientscale{\Omega^1_0(\M)}{\KERN{\kappa_m}}}}          \ar@<-.5ex>[u]_{\xi_m^{\phantom{-1}}}
	}	
	\end{displaymath}
	\caption{Illustrating the construction of the homeomorphism $\xi_m$ of the space of dynamical test one-forms and the space of initial data.}
	\label{dia:homeomorphism_one_particle_level}
\end{table}
\begin{lemma}\label{lem:one-particle-homeomorphism}
	$\xi_m$ is a homeomorphism onto $\Dzs$.
\end{lemma}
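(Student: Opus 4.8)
The map $\xi_m$ is a bijection onto $\IMG{\kappa_m}$ by construction, so the statement reduces to three claims: that $\kappa_m$ is actually surjective onto all of $\Dzs$, that $\xi_m$ is continuous, and that $\xi_m^{-1}$ is continuous. The continuity of $\xi_m$ is immediate. Since $\Quotientscale{\Omega^1_0(\M)}{\KERN{\kappa_m}}$ carries the quotient topology and $\kappa_m = \xi_m \circ [\PH]_m$ with $\kappa_m$ already known to be continuous, the universal property of the quotient topology \cite{Treves1967} forces $\xi_m$ to be continuous. The remaining two claims I would establish simultaneously by exhibiting a continuous linear right inverse (a section) $s : \Dzs \to \Omega^1_0(\M)$ with $\kappa_m \circ s = \mathrm{id}$.

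To build $s$, given data $(\Az, \Ad)\in\Dzs$ I would first invoke Theorem \ref{thm:solution_proca_unconstrained} in the source-free case $j=0$ to obtain the unique smooth solution $A$ of $(\delta d + m^2)A = 0$ with $\rhoz A = \Az$ and $\rhod A = \Ad$, which by the theorem depends continuously on the data. As the data are compactly supported, finite propagation speed confines $A$ to $J(K)$ with $K=\supp{\Az}\cup\supp{\Ad}$, so $A$ is spatially compact. I would then choose Cauchy surfaces $\Sigma_-,\Sigma_+$ with $\Sigma$ between them and a cutoff $\chi\in C^\infty(\M)$ equal to $1$ to the past of $\Sigma_-$ and to $0$ to the future of $\Sigma_+$, and set $s(\Az,\Ad):=(\delta d + m^2)(\chi A)$. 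Because $(\delta d + m^2)A=0$ and $m^2$ commutes with $\chi$, this equals $[\delta d,\chi]A$, a first-order expression supported in $J^+(\Sigma_-)\cap J^-(\Sigma_+)$; intersected with the spatial support of $A$ this is compact, so $s(\Az,\Ad)\in\Omega^1_0(\M)$ and $s$ is manifestly linear.

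The identity $\kappa_m\circ s = \mathrm{id}$ I would deduce from $G_m\,s(\Az,\Ad)=A$, which yields $\rhoz G_m s(\Az,\Ad)=\Az$ and $\rhod G_m s(\Az,\Ad)=\Ad$. Both $\chi A$ and $(\chi-1)A$ solve $(\delta d + m^2)v = s(\Az,\Ad)$, the former vanishing to the far future (future-compact) and the latter vanishing to the far past (past-compact). By the uniqueness of advanced and retarded solutions of the Green-hyperbolic operator $\delta d + m^2$ \cite{BaerGinouxPfaeffle2007} these coincide with $G_m^- s(\Az,\Ad)$ and $G_m^+ s(\Az,\Ad)$ respectively, whence $G_m s(\Az,\Ad) = \chi A - (\chi-1)A = A$. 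In particular $\kappa_m$ is onto, so $\IMG{\kappa_m}=\Dzs$ and $\xi_m$ is a bijection onto all of $\Dzs$.

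Finally, for the continuity of $s$ I would use that a linear map into the $LF$-space $\Omega^1_0(\M)$ is continuous as soon as its restriction to each subspace of data supported in a fixed compact $K\subset\Sigma$ is continuous. For such data, finite propagation speed places $A$ inside $J(K)$, so $s(\Az,\Ad)$ is supported in the fixed compact set $K' = J(K)\cap J^+(\Sigma_-)\cap J^-(\Sigma_+)$; on this step $s$ is the composition of the continuous solution map of Theorem \ref{thm:solution_proca_unconstrained}, multiplication by $\chi$, and the differential operator $\delta d + m^2$, all continuous between the relevant Fréchet spaces, and hence continuous into $\Omega^1_{K'}(\M)\hookrightarrow\Omega^1_0(\M)$. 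Since $\xi_m^{-1} = [\PH]_m\circ s$ with the quotient map continuous, continuity of $\xi_m^{-1}$ follows. The main obstacle is precisely this section: the support bookkeeping that makes $s$ land in a single fixed compact set, so that $LF$-continuity applies, together with the uniqueness argument identifying $G_m s$ with $A$; everything else is a formal consequence of the quotient construction.
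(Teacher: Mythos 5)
Your proof is correct and follows essentially the same route as the paper: the paper also constructs a continuous section $\vartheta_m(\varphi,\pi)=-(\delta d+m^2)\chi A$ (your $s$ with the cutoff $\chi$ reversed in time orientation), verifies $\kappa_m\comp\vartheta_m=\mathrm{id}$ via the extended advanced/retarded Green operators on future/past-compact supports, and obtains continuity of $\xi_m$ from the quotient construction and of $\xi_m^{-1}=[\PH]_m\comp\vartheta_m$ from continuous dependence on initial data. Your explicit LF-space support bookkeeping for the continuity of $s$ is a slightly more detailed rendering of what the paper compresses into one sentence, but it is the same argument.
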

\begin{proof}
	First we will show that $\kappa_m$ is surjective, by constructing a map $\vartheta_m : \Dzs \to \Omega^1_0(\M)$ such that $\xi_m \comp [\PH]_m \comp \vartheta_m = \mathrm{id}$.

We choose a fixed $\chi \in \Omega^0(\M)$ such that $\chi= 1$ on $J^+(\Sigma_+)$ and $\chi\equiv 0$ on $J^-(\Sigma_-)$, where $\Sigma_\pm$ are Cauchy surfaces in the future (+) and past (-) of $\Sigma$. Now let $(\varphi,\pi) \in \Dzs$ specify initial data on a Cauchy surface $\Sigma$. Then, by Theorem \ref{thm:solution_proca_unconstrained}, there exists a unique solution $A \in \Omega^1(\M)$ to the source free Proca equation $(\delta d + m^2) A = 0$ with the given data. We note that $\supp{A} \subset J\big( \supp{\varphi} \cup \supp{\pi} \big)$ (see \cite[Theorem 3.2.11]{BaerGinouxPfaeffle2007}) and hence, by defining
	\begin{equation}
	\vartheta_m(\varphi,\pi) \coloneqq -(\delta d + m^2) \chi A,
	\end{equation}
	we see that $\vartheta_m(\varphi,\pi)$ is a compactly supported one-form with support contained in the compact set $J\big( \supp{\varphi} \cup \supp{\pi}\big) \cap  J^-(\Sigma_+) \cap J^+(\Sigma_-)$.
	We want to show that $\kappa_m \vartheta_m(\varphi,\pi) = (\varphi,\pi)$. For this we note that the domains of $G_m^\pm$ can be extended to forms with past (+) resp.~future (-) compact supports \cite{Sanders2013,Schambach2016}. With these extended definitions we find
	\begin{align}
	G_m^+ \vartheta_m(\varphi,\pi)
	&= - \chi A \notag ,\\
	G_m^- \vartheta_m(\varphi,\pi)
	&= G_m^- (\delta d + m^2)(1-\chi) A \notag\\
	&= (1-\chi) A \formspace,
	\end{align}
because $\vartheta_m(\varphi,\pi)=(\delta d+m^2)(1-\chi) A$. We therefore find the result
	\begin{align}
	G_m \vartheta_m(\varphi,\pi)
	&= (G_m^- - G_m^+)\vartheta_m(\varphi,\pi) \notag\\
	&= (1- \chi) A + \chi A = A
	\end{align}
	and hence $\kappa_m\vartheta_m(\varphi,\pi)=(\varphi,\pi)$, which completes the proof of surjectivity. That is, we have found $\IMG{\kappa_m} = \Dzs$.\par
	It remains to show that the bijection $\xi_m$ is a homeomorphism. By construction, $\xi_m$ is continuous because $\kappa_m$ is continuous \cite[Proposition 4.6]{Treves1967}. The inverse is given by $\xi_m^{-1} = [\PH]_m \comp \,\vartheta_m$, where $\vartheta_m$ is continuous, because $A$ depends continuously on the initial data $(\varphi,\pi)$. Since $[\PH]_m$ is also continuous, so is $\xi^{-1}_m$.	This completes the proof.
\end{proof}
We will now generalize these ideas to the algebra $\BUOmega$ in order to implement the dynamics by dividing out the ideal generated by $\big(0,(\delta d + m^2)F,0,0,\dots\big)$,  where $F\in \Omega^1_0(\M)$. As we did on the degree-one level, we would like to find a map $K_m : \BUOmega \to \BU\big(\Dzs\big)$ such that $\KERN{K_m} = \IMZDYN$ and then show that $\BUmzdyn$ is homeomorphic to $\BU\big(\Dzs\big)$. We do this by \emph{lifting} the map $\kappa_m$ to the BU-algebra:
We define $K_m : \BUOmega \to \BU\big(\Dzs\big) $ as a BU-algebra-homomorphism which preserves the units and which is then completely determined by its action on homogeneous degree-one elements:
\begin{align}
K_m : \BUOmega &\to \BU\big(\Dzs\big) \\
\big(0,F,0,0,\dots\big) &\mapsto 	\big(0, \kappa_m (F),0,0,\dots\big)\notag \formspace.
\end{align}
With this map we can, analogously to the degree-one-part, construct a homeomorphism that implements the dynamics.
\begin{lemma}\label{lem:field-algebra-homeomorphism}
	Let $m>0$ and $j=0$. Then the map $K_m : \BUOmega \to \BU\big(\Dzs\big)$ descends to a homeomorphism $\Xi_m : \BUmzdyn \to \BU\big( \Dzs \big)$ with $\Xi_m \big([f]_{m,0}^\mathrm{dyn}\big) = K_m(f)$ where $f \in \BUOmega$.
\end{lemma}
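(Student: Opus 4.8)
The plan is to recognise $K_m$ as the functorial lift of the degree-one map $\kappa_m$ to the tensor algebra and then to transport the degree-one homeomorphism of Lemma \ref{lem:one-particle-homeomorphism} up to the Borchers-Uhlmann level. Since $K_m$ is the unit-preserving algebra homomorphism determined on degree-one elements by $\kappa_m$, it preserves the tensor degree, its degree-$n$ part being the $n$-fold tensor power $\kappa_m^{\otimes n}:\big(\Omega^1_0(\M)\big)^{\otimes n}\to\big(\Dzs\big)^{\otimes n}$. Consequently $f\in\KERN{K_m}$ if and only if each homogeneous component $f^{(n)}$ lies in $\KERN{\kappa_m^{\otimes n}}$, which reduces everything to a degree-by-degree analysis.

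First I would establish that $\KERN{K_m}=\IMZDYN$, which simultaneously yields that $\Xi_m$ descends well to the quotient and is injective. The inclusion $\IMZDYN\subseteq\KERN{K_m}$ is immediate: each generator $\big(0,(\delta d+m^2)F,0,\dots\big)$ is sent to $\big(0,\kappa_m\big((\delta d+m^2)F\big),0,\dots\big)=0$ because $(\delta d+m^2)F\in\KERN{\kappa_m}$ by \eqref{eqn:ker(kappa_m)}, and $\KERN{K_m}$ is a two-sided ideal. For the reverse inclusion I would use that $\IMZDYN$, being generated by homogeneous degree-one elements, is a graded ideal, so its degree-$n$ part is exactly $\sum_{i+k=n-1}\big(\Omega^1_0(\M)\big)^{\otimes i}\otimes\KERN{\kappa_m}\otimes\big(\Omega^1_0(\M)\big)^{\otimes k}$. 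The standard linear-algebra identity that the kernel of $\kappa_m^{\otimes n}$ is precisely this subspace—proved by choosing a vector-space complement $\Omega^1_0(\M)=\KERN{\kappa_m}\oplus V'$ on which $\kappa_m$ restricts to a bijection onto $\Dzs$ and noting that $\kappa_m^{\otimes n}$ annihilates every tensor monomial containing a $\KERN{\kappa_m}$-factor while remaining injective on $V'^{\otimes n}$—then gives $\KERN{K_m}=\IMZDYN$ degree by degree.

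Surjectivity descends directly from the degree-one case: Lemma \ref{lem:one-particle-homeomorphism} shows $\kappa_m$ is onto $\Dzs$, and a tensor-algebra homomorphism induced by a surjective linear map is surjective. Together with the kernel computation this makes $\Xi_m$ a bijective algebra isomorphism with $\Xi_m\big([f]_{m,0}^\mathrm{dyn}\big)=K_m(f)$. For the topology, continuity of $\Xi_m$ follows from continuity of $K_m$—which holds because $\kappa_m$ is continuous and the tensor and direct-sum operations defining the locally convex topology are functorial—together with the universal property of the quotient topology on $\BUmzdyn$. For the inverse I would mirror the degree-one argument: the linear section $\vartheta_m:\Dzs\to\Omega^1_0(\M)$ lifts to a unit-preserving tensor-algebra homomorphism $\Theta_m:\BU\big(\Dzs\big)\to\BUOmega$, and then $\Xi_m^{-1}=[\PH]_{m,0}^\mathrm{dyn}\comp\Theta_m$ is continuous since $\vartheta_m$ and the quotient projection are.

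The main obstacle is the reverse kernel inclusion $\KERN{K_m}\subseteq\IMZDYN$ of the second step: this is the one place where the degree-one statement does not transfer formally, and it genuinely requires the graded structure of $\IMZDYN$ together with the tensor-power kernel identity. The surjectivity and the topological statements are then essentially bookkeeping built on top of Lemma \ref{lem:one-particle-homeomorphism}.
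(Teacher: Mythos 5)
Your proof is correct, and its skeleton (degree preservation of $K_m$, reduction to homogeneous components, the trivial inclusion $\IMZDYN\subseteq\KERN{K_m}$, surjectivity inherited from $\kappa_m$) matches the paper's; but you handle the crucial reverse inclusion $\KERN{K_m}\subseteq\IMZDYN$ by a genuinely different route. The paper proves it by induction on the tensor degree: given a homogeneous $f^{(n+1)}=\sum_i F_i\otimes\F_i^{(n)}$ in the kernel, it works inside the \emph{finite-dimensional} span $V$ of the $F_i$, picks a basis of $W=V\cap\KERN{\kappa_m}$ extended to a basis of $V$, splits $f^{(n+1)}=X_1^{(n+1)}+X_2^{(n+1)}$, and applies the induction hypothesis to the right-hand factors of $X_2^{(n+1)}$. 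You instead invoke the one-shot linear-algebra identity $\KERN{\kappa_m^{\otimes n}}=\sum_{i+k=n-1}\big(\Omega^1_0(\M)\big)^{\otimes i}\otimes\KERN{\kappa_m}\otimes\big(\Omega^1_0(\M)\big)^{\otimes k}$, proved via a (Hamel) complement $V'$ of $\KERN{\kappa_m}$ on which $\kappa_m$ is bijective; since the paper explicitly takes $\otimes$ to be the \emph{algebraic} tensor product, this identity is valid and closes the gap in one stroke, at the cost of choosing an infinite-dimensional complement, which the paper's induction avoids by never leaving finite-dimensional spans. A genuine bonus of your write-up is that you explicitly establish continuity of $\Xi_m^{-1}$ by lifting the section $\vartheta_m$ to a unit-preserving homomorphism $\Theta_m$ with $K_m\comp\Theta_m=\mathrm{id}$, whence $\Xi_m^{-1}=[\PH]_{m,0}^\mathrm{dyn}\comp\Theta_m$; the paper's proof of this lemma only verifies continuity of $\Xi_m$ itself and leaves the inverse direction implicit. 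One point to tighten: your claim that continuity of $K_m$ (and likewise of $\Theta_m$) follows because the tensor and direct-sum constructions are ``functorial'' is too quick — the relevant topology on $\big(\Omega^1_0(\M)\big)^{\otimes n}$ is the one induced from $\Gamma_0\big((T^*\M)^{\boxtimes n}\big)$, and continuity of $\kappa_m^{\otimes n}$ with respect to it is not a formal consequence; the paper obtains it from Schwartz' Kernels Theorem, and you should appeal to the same (or an equivalent) argument there.
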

\begin{proof}
	The surjectivity of $K_m$ follow directly from the surjectivity of $\kappa_m$, which was established in the proof of Lemma \ref{lem:one-particle-homeomorphism}. Because $\kappa_m$ is continuous, so is $\kappa_m^{\otimes n}$ on $\Gamma_0(T^*\M^{\boxtimes n})$ for any $n\ge 1$, by Schwartz' Kernels Theorem. Therefore, $\kappa_m^{\otimes n}$ is also continuous on the algebraic tensor product $\big(\Omega^1_0(\M)\big)^{\otimes n}$ and hence $K_m$ is continuous. It follows that $K_m$ descends to a continuous linear map
$\Xi_m : {\Quotientscale{\BUOmega}{\KERN{K_m}}}\to \BU\big( \Dzs \big)$ (cf. \cite[Proposition 4.6]{Treves1967}). The inclusion $\IMZDYN \subset \KERN{K_m}$ is obvious from the facts that $K_m$ is an algebra homomorphisms and that the generators of $\IMZDYN$ are of the form $\big(0,F_i,0,0,\dots \big)$ with $F_i\in\KERN{\kappa_m}$, cf.~Equation.(\ref{eqn:ker(kappa_m)}). The non-trivial part is to show the converse inclusion $\KERN{K_m} \subset \IMZDYN$.\par 
Consider and arbitrary $f = \big(f^{(0)}, f^{(1)}, f^{(2)}, \dots , f^{(N)} , 0 , 0 ,\dots\big) \in \KERN{K_m}$, $f^{(k)} \in \left(\Omega^1_0(\M)\right)^{\otimes k}$. Because $K_m$ preserves degrees, each homogeneous element $\big(0,\dots ,0 , f^{(n)} , 0 ,0, \dots\big)$ is in $\KERN{K_m}$, and it suffices to prove that these homogeneous elements are in the ideal $\IMZDYN$. We will show by induction in the degree $n$ that an arbitrary homogeneous element $\big(0,\dots ,0,  f^{(n)} , 0 ,0, \dots\big)$ with $\kappa_m^{\otimes n}\left(f^{(n)}\right)  = 0$ is in the ideal $\IMZDYN$.\par 
At degree 0, $\kappa_m^{\otimes 0}$ is the identity mapping, so its kernel is trivial. At degree 1, we use the fact that $\kappa_m(F)=0$ if and only if $\big(0,F,0,0,\dots\big)$ is a generator of $\IMZDYN$ (cf.~Equation (\ref{eqn:ker(kappa_m)})).\par 
We can now make the induction step and assume that the claim holds for homogeneous elements of degree $\le n$ for some $n\ge 1$.
Consider a homogeneous element $\big(0,\dots,0,f^{(n+1)},0,0,\dots\big)$ where $f^{(n+1)} \in \left(\Omega^1_0(\M)\right)^{\otimes (n+1)}$ such that $\kappa_m^{\otimes(n+1)}(f^{(n+1)}) = 0$. We can write this more explicitly for some $F_i \in \Omega^1_0(\M)$ and some $\F_i^{(n)}\in \big(\Omega^1_0(\M)\big)^{\otimes n}$ as
	\begin{equation}
	\big(0,\dots,0,f^{(n+1)},0,0, \dots\big)  = \big(0,\dots,0, \sum\limits_{i=1}^{M} F_i \otimes \F_i^{(n)},0, 0 , \dots\big) \formspace.
	\end{equation}
	Let $V \coloneqq \SPAN{F_1,F_2, \dots , F_M}$ and $W \coloneqq V \hspace{0.005em}\cap\hspace{0.01em} \KERN{\kappa_m}$, which define finite dimensional subspaces of $\Omega^1_0(\M)$. We find a basis $\{ \widetilde{F}_1, \dots ,\widetilde{F}_\mu \}$, $\mu \leq M$, of $W$ which we can extend to a basis $\{ \widetilde{F}_1, \dots ,\widetilde{F}_M \}$ of $V$.
	With the use of this basis we can re-write
	\begin{align}
	f^{(n+1)} = \sum\limits_{i=1}^{M} F_i \otimes \F_i^{(n)}
	&= \sum\limits_{i=1}^{\mu} \widetilde{F}_i \otimes \widetilde{\F}_i^{(n)}  +  \sum\limits_{i=\mu + 1}^{M} \widetilde{F}_i \otimes \widetilde{\F}_i^{(n)} \notag\\
	& \eqqcolon X_1^{(n+1)} + X_2^{(n+1)} \formspace.
	\end{align}
	Here, each $\widetilde{\F}_i^{(n)}$ can be constructed as a linear combination of the ${\F}_i^{(n)}$'s. We first have a look at $X_1^{(n+1)}$. We know by construction for $i=1,\dots,\mu$ that $\kappa_m ( \widetilde{F_i} ) = 0$. It follows that
\begin{equation}
\big(0,\ldots,0,X_1^{(n+1)},0,\dots\big) = \sum\limits_{i=1}^{\mu} \big(0,\widetilde{F}_i,0,0, \dots\big) \otimes \big(0,\dots,0,\widetilde{\F}_i^{(n)},0,0, \dots\big)
\end{equation}
is in $\IMZDYN$ and that $\kappa_m^{\otimes(n+1)}(X_1^{(n+1)}) = 0$. Now we have a closer look at the remaining part $X_2^{(n+1)}$, which must then also have $\kappa_m^{\otimes(n+1)}(X_2^{(n+1)}) = 0$. However, by construction, it holds $\SPAN{ \widetilde{F}_{\mu+1}, \dots ,\widetilde{F}_M } \cap \KERN{\kappa_m} = \{ 0 \}$, which implies that the $\kappa_m(\widetilde{F_i})$'s are linearly independent for $i=\mu+1,\dots,M$. With this, it then follows that we must have $\kappa_m^{\otimes n}( \widetilde{\F}_i^{(n)}) = 0$ for all $i=\mu+1,\dots,M$. Since $\widetilde{\F}_i^{(n)}$ is of degree $n$, we can apply the induction hypothesis and find that
$(0,\dots,0,  \widetilde{\F}_i^{(n)} , 0 , 0 , \dots) \in \IMZDYN$ and hence
\begin{equation}
\big(0,\ldots,0,X_2^{(n+1)},0,\dots\big) = \sum\limits_{i=\mu+1}^{M} \big(0,\widetilde{F}_i,0,\dots\big) \otimes \big(0,\dots,0,\widetilde{\F}_i^{(n)},0,\dots\big)
\end{equation}
is also in $\IMZDYN$. Hence
\begin{equation}
\big(0,\dots, 0, f^{(n+1)} , 0 , 0 , \dots \big) = 	\big(0,\dots, 0, X_1^{(n+1)} + X_2^{(n+1)} , 0 , 0 , \dots \big) \in \IMZDYN
\end{equation}
which completes the proof by induction.
\end{proof}
The continuity of $K_m$ and the proof above imply in particular that the ideal $\IMZDYN=\KERN{K_m}$ is closed.

\subsubsection{Canonical commutation relations}\label{sec:CCR}

We are left to include the quantum nature of the fields by dividing out the relation that implements the CCR. In $\BUmzdyn$, we need to divide out the two-sided ideal $\IMJZCCR$ that is generated by elements $\big(-\i \Gm{F}{F'}, 0 , F \otimes F' - F' \otimes F , 0 , 0 , \dots\big)$. For $\BU\big( \Dzs \big)$ we make use of the following lemma:
\begin{lemma}\label{lem:propagator-non-degenerate}
	Let $F,F' \in \Omega^1_0(\M)$ be two test one-forms on $\M$ and let $(\varphi,\pi)=\kappa_m(F)$ and $(\varphi',\pi')=\kappa_m(F')$. Then
\begin{equation}
\Green{F}{F'} = \mathcal{G}^{(\Sigma)}\big(\kappa_m(F),\kappa_m(F')\big) \formspace,
\end{equation}where
	\begin{equation}
	\mathcal{G}^{(\Sigma)}\big((\varphi,\pi),(\varphi',\pi')\big)
	= \langle \varphi, \pi' \rangle_\Sigma  - \langle \pi , \varphi' \rangle_\Sigma   \formspace
	\end{equation}
is a symplectic form on the space $\Dzs$ of initial data, i.\,e., it is bilinear, anti-symmetric and non-degenerate.
\end{lemma}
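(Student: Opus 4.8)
The plan is to prove the identity $\Green{F}{F'} = \mathcal{G}^{(\Sigma)}\big(\kappa_m(F),\kappa_m(F')\big)$ first, and then to deduce the three defining properties of a symplectic form from it.

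For the identity, the key is to recognise $\mathcal{G}^{(\Sigma)}$ as a Cauchy-surface integral that computes the same quantity as the spacetime pairing $\langle F, G_m F'\rangle_\M$. The natural route is to set $A \coloneqq G_m F'$, a solution of the source-free Proca equation $(\delta d + m^2) A = 0$ with initial data $(\varphi',\pi') = (\rhoz A, \rhod A) = \kappa_m(F')$ by definition of $\kappa_m$. Applying the solution formula of Theorem \ref{thm:solution_proca_unconstrained} with $j=0$ and test one-form $F$, one reads off
\begin{equation}
\langle A, F\rangle_\M = -\langle \rhoz A, \rhod G_m F\rangle_\Sigma + \langle \rhod A, \rhoz G_m F\rangle_\Sigma \notag.
\end{equation}
Since $(\rhoz G_m F, \rhod G_m F) = \kappa_m(F) = (\varphi,\pi)$ and $(\rhoz A, \rhod A) = (\varphi',\pi')$, the right-hand side is exactly $-\langle \varphi', \pi\rangle_\Sigma + \langle \pi', \varphi\rangle_\Sigma = \langle \varphi,\pi'\rangle_\Sigma - \langle\pi,\varphi'\rangle_\Sigma = \mathcal{G}^{(\Sigma)}\big(\kappa_m(F),\kappa_m(F')\big)$, using symmetry of the pairing $\langle\PH,\PH\rangle_\Sigma$. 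The left-hand side is $\langle F, G_m F'\rangle_\M = \Green{F}{F'}$ by the symmetry of the spacetime pairing, so the identity follows. (One should double-check the sign bookkeeping against the stated convention $\Green{F}{F'}=\langle F, G_m F'\rangle_\M$ and the orientation of $\Sigma$; this is where I expect the only real friction.)

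Bilinearity of $\mathcal{G}^{(\Sigma)}$ on $\Dzs$ is immediate from the linearity of $\langle\PH,\PH\rangle_\Sigma$ in each slot, and anti-symmetry is manifest from the defining expression $\langle\varphi,\pi'\rangle_\Sigma - \langle\pi,\varphi'\rangle_\Sigma$ together with the symmetry of the pairing. For non-degeneracy, suppose $(\varphi,\pi)\in\Dzs$ satisfies $\mathcal{G}^{(\Sigma)}\big((\varphi,\pi),(\varphi',\pi')\big)=0$ for all $(\varphi',\pi')\in\Dzs$. Choosing $\pi'=0$ and letting $\varphi'$ range over all of $\Omega^1_0(\Sigma)$ forces $\langle\pi,\varphi'\rangle_\Sigma=0$ for all $\varphi'$, hence $\pi=0$ since the pairing is an inner product on $\Omega^1_0(\Sigma)$; choosing $\varphi'=0$ and varying $\pi'$ forces $\varphi=0$ likewise. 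Thus $\mathcal{G}^{(\Sigma)}$ is non-degenerate, completing the proof.

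The main obstacle is not conceptual but bookkeeping: correctly specialising Theorem \ref{thm:solution_proca_unconstrained} to the source-free case $j=0$, identifying $G_m F'$ with the solution carrying initial data $\kappa_m(F')$, and tracking the signs and the placement of the two pairing arguments so that the result lands precisely on the stated form of $\mathcal{G}^{(\Sigma)}$.
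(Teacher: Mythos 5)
Your proof is correct and takes essentially the same approach as the paper: both evaluate the source-free ($j=0$) solution formula of Theorem \ref{thm:solution_proca_unconstrained} on a solution of the form $G_m({\PH})$, read off the initial-data expression via the definition of $\kappa_m$, and treat the symplectic-form properties as the routine verification you spell out (which the paper leaves implicit). The only cosmetic difference is that you take $A = G_mF'$ tested against $F$ and use the symmetry of $\langle \PH,\PH\rangle_\M$, whereas the paper first invokes the adjointness relation $\langle F, G_m^\pm F'\rangle_\M = \langle G_m^\mp F, F'\rangle_\M$ to write $\Green{F}{F'} = -\langle G_mF, F'\rangle_\M$ and then tests $A = G_mF$ against $F'$ --- your variant is marginally more economical since it does not need that adjointness relation at all.
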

\begin{proof}
	It is straightforward to show that $\mathcal{G}^{(\Sigma)}$ is a symplectic form. Now let $F,F' \in \Omega^1_0(\M)$ and recall that $G_m F'$ is a solution to the source free Proca equation with initial data $\kappa_m(F)$, and similarly for $G_mF'$. Then, using the definition of $\Green{F}{F'}$ and $\langle F , G_m^\pm F' \rangle_\M=\langle G_m^\mp F , F' \rangle_\M$,
	\begin{align}
	\Green{F}{F'}
	&= \langle F , G_m F' \rangle_\M
	= - \langle G_m F, F' \rangle_\M  \formspace \notag\\
	&= \langle \rhoz G_m F, \rhod G_mF' \rangle_\Sigma - \langle \rhod G_m F, \rhoz G_m F' \rangle_\Sigma\label{eqn:CCRdata} \notag\\
    &= \mathcal{G}^{(\Sigma)}\big((\varphi,\pi),(\varphi',\pi')\big)
	\end{align}
	by Theorem \ref{thm:solution_proca_unconstrained} with $j=0$.
\end{proof}
It follows from this lemma that $\IMJZCCR$ maps under $\Xi_m$ to the two-sided ideal $\ICCR \subset \BU\big( \Dzs \big)$ that is generated by elements \vspace{-.3cm}
\begin{equation}
\big(-\i ( \langle \varphi , \pi' \rangle_\Sigma - \langle \pi, \varphi'\rangle_\Sigma), 0 , (\varphi, \pi) \otimes (\varphi', \pi') - (\varphi', \pi') \otimes (\varphi, \pi) , 0 , 0 , \dots\big)\formspace.
\end{equation}
Lemma \ref{lem:symmetrization-of-fields} in Appendix \ref{app:lemmata} shows that the ideal $\ICCR$ is the kernel of a continuous linear map. This implies in particular that $\ICCR$, and hence also $\IMJZCCR$, is closed.

With these results the following theorem follows easily.
\begin{theorem}\label{thm:field_algebra_homeomorphy_source_free}
	Let $m>0$ and $j=0$. Then the map $\Xi_m : \BUmzdyn \to \BU\big( \Dzs \big)$ descends to a homeomorphism
$\Lambda_m : \AA_{m,0}(M) \to \Quotientscale{\BU\big( \Dzs \big)}{\ICCR}$.
\end{theorem}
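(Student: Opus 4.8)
The plan is to leverage the two facts established immediately before the statement: the map $\Xi_m$ of Lemma \ref{lem:field-algebra-homeomorphism} is a unit-preserving homeomorphism of $*$-algebras that carries the closed two-sided ideal $\IMJZCCR \subset \BUmzdyn$ bijectively onto the closed two-sided ideal $\ICCR \subset \BU\big(\Dzs\big)$. Granting this, the theorem is an instance of the general principle that an isomorphism of locally convex $*$-algebras identifying two closed ideals descends to an isomorphism of the quotients. I would make this descent explicit in three short steps: construction of $\Lambda_m$, verification that it is a $*$-algebra isomorphism, and the topological statement.

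First I would construct $\Lambda_m$ via the universal property of the quotient. Let $\pi_1 : \BUmzdyn \to \AA_{m,0}$ and $\pi_2 : \BU\big(\Dzs\big) \to \Quotientscale{\BU\big(\Dzs\big)}{\ICCR}$ denote the canonical projections. The composite $\pi_2 \circ \Xi_m$ annihilates $\IMJZCCR$, since $\Xi_m(\IMJZCCR) = \ICCR = \Ker{\pi_2}$, and therefore factors uniquely through $\pi_1$, giving a linear map $\Lambda_m$ characterised by $\Lambda_m \circ \pi_1 = \pi_2 \circ \Xi_m$. Because $\Xi_m$ is a unit-preserving $*$-algebra homomorphism and $\pi_1, \pi_2$ are $*$-algebra homomorphisms, $\Lambda_m$ automatically inherits the unit-preserving $*$-algebra structure.

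Next I would check that $\Lambda_m$ is a bijection. Surjectivity is immediate: $\pi_2 \circ \Xi_m = \Lambda_m \circ \pi_1$ is surjective (both $\Xi_m$ and $\pi_2$ are), and $\pi_1$ is surjective, forcing $\Lambda_m$ to be onto. For injectivity, suppose $\Lambda_m(\pi_1(f)) = 0$ for some $f \in \BUmzdyn$. Then $\Xi_m(f) \in \ICCR = \Xi_m(\IMJZCCR)$, and the injectivity of $\Xi_m$ yields $f \in \IMJZCCR$, so that $\pi_1(f) = 0$; hence $\Lambda_m$ is injective.

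Finally I would treat the topological claim, which is the only place where any care is required. Both targets carry the locally convex quotient topology, i.e. the finest locally convex topology rendering the respective projection continuous. Continuity of $\Lambda_m$ then follows at once from $\Lambda_m \circ \pi_1 = \pi_2 \circ \Xi_m$, the right-hand side being continuous. For the inverse I would simply repeat the construction with $\Xi_m^{-1}$, which is again a homeomorphism and satisfies $\Xi_m^{-1}(\ICCR) = \IMJZCCR$; it therefore induces a continuous map that is manifestly $\Lambda_m^{-1}$. Thus $\Lambda_m$ is a homeomorphism. I do not expect a genuine obstacle here: the only substantive inputs beyond this formal descent are the closedness of $\IMZDYN$ (Lemma \ref{lem:field-algebra-homeomorphism}) and of $\ICCR$ (Lemma \ref{lem:symmetrization-of-fields}), which guarantee that the quotients are Hausdorff and the quotient topologies well behaved, and both have already been recorded.
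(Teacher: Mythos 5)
Your proposal is correct and is precisely the argument the paper intends: it omits the details (referring to \cite[Theorem 4.14]{Schambach2016}) but sets up exactly your inputs in the preceding text, namely that $\Xi_m$ is a homeomorphism carrying $\IMJZCCR$ onto $\ICCR$ (via Lemma \ref{lem:propagator-non-degenerate} and the surjectivity of $\kappa_m$) and that these ideals are closed (Lemma \ref{lem:symmetrization-of-fields}), after which $\Lambda_m$ is obtained by the standard descent through the quotient topologies just as you describe. Your explicit verification of bijectivity and of continuity in both directions via the universal property of quotient maps is a faithful spelling-out of the step the paper declares to ``follow easily.''
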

We omit the proof and refer to \cite[Theorem 4.14]{Schambach2016} for the details.

The results of this section can be combined with those of Section \ref{sec:no-current} and \ref{sec:initial-value} and illustrated as in Diagram \ref{dia:final_structure}.
\begin{table}[]
	\begin{displaymath}
	\xymatrix @R=20mm @C=30mm
	{ 																																	& \BU{\left( \Dzs \right)} 		\ar[r]^{[{\PH}]_{\sim}^\text{CCR}}		\ar@<.5ex>[d]^{\Xi^{-1}_{m}}				& {{\Quotientscale{\BU{\left( \Dzs \right)}}{\ICCR}}} \ar@<.5ex>[d]^{\Lambda^{-1}_m}\\
		\BUOmega \ar[r]^{[{\PH}]_{m,0}^\text{dyn}}  	\ar@<.5ex>[d]^{\Gamma_{\varphi}^{\phantom{-1}}} \ar[ur]^{K_m}	&  \BUmzdyn  \ar[r]^{[{\PH}]_{m,0}^\text{CCR}} \ar@<.5ex>[u]^{\Xi_{m}^{\phantom{-1}}}  \ar@<.5ex>[d] &   \AA_{m,0} \ar@<.5ex>[d]^{\Psi_{\varphi}^{\phantom{-1}}} \ar@<.5ex>[u]^{\Lambda_m^{\phantom{-1}}}  \\
		\BUOmega  \ar[r]_{[{\PH}]_{m,j}^\text{dyn}}      \ar@<.5ex>[u]^{\Gamma^{-1}_{\varphi}}    & \BUmjdyn \ar[r]_{[{\PH}]_{m,j}^\text{CCR}}	\ar@<.5ex>[u]& 	\AA_{m,j} 	\ar@<.5ex>[u]^{\Psi_{\varphi}^{-1}}     \\
	}				
	\end{displaymath}
	\caption{A commutative diagram illustrating the various quotients of BU-algebras and their relations. Bi-directional arrows represent homeomorphisms.}
	\label{dia:final_structure}
\end{table}

\subsection{Locality of the quantum Proca field}\label{sec:LCQFT}

Finally we consider the quantum Proca field in the generally covariant setting, using a categorical framework as Brunetti, Fredenhagen and Verch \cite{BrunettiFredenhagenVerch2003}. For this purpose we introduce the following
\begin{definition}\label{def:categories_alg_spaccurr}
	By an \emph{admissible embedding} $\psi : (\M,g_\M) \to (\N,g_\N)$ we mean an orientation and time orientation preserving isometric embedding $\psi:\M\to\N$ such that for every $p \in \M$ it holds $J_\M^\pm(p) = \psi^{-1} \big( J_\N^\pm ( \psi(p)) \big)$.\par
	The category $\SpacCurr$  consists of triples $M=(\M,g_\M,j_\M)$ as objects, where $(\M,g_\M)$ is a (oriented and time-oriented) globally hyperbolic spacetime  and $j_\M \in \Omega^1(\M)$ is a background current,
	and morphisms $\psi$, where $\psi$ is an admissible embedding such that $\psi^* j_\N = j_\M$. \par
	The category $\Alg$ consists of unital $^*$-algebras as objects and unit preserving  $^*$-algebra-homomorphisms as morphisms. \par
	The category $\Alg'$ is the subcategory of $\Alg$ consisting of the same objects but only injective morphisms.
\end{definition}
\begin{definition}\label{def:generally-coveriant-qftcs}
	A \emph{generally covariant quantum field theory with background source} is a covariant functor between the categories $\SpacCurr$ and $\Alg$. The theory is called \emph{locally covariant} if and only if the range of the functor is contained in $\Alg'$.
\end{definition}
The construction of this functor $\mathbf{A}_m$ for the Proca field of mass $m>0$ is straightforward: To each $M$ we associate the $^*$-algebra $\mathbf{A}_m(M)\coloneqq \AA_{m,j}$ constructed on $M$ as above and to any morphism $\psi:M\to N$ we associate the unit preserving $^*$-algebra-homomorphism $\mathbf{A}_m(\psi) \equiv \alpha_\psi : \mathbf{A}_m(M) \to \mathbf{A}_m(N)$, whose action is fully determined by the action on the generators $\A_{m,M}(F)$,  which we previously denoted by $\A_{m,j}(F)$ without explicitly referring to the background spacetime $M$, as 
	\begin{equation}
	\alpha_\psi \big(\A_{m,M}(F)\big) = \A_{m,N}(\psi_*(F)) \formspace.
	\end{equation}
It is straightforward to show that the above functor is well-defined for all $m>0$. A detailed verification is given in \cite{Schambach2016}.

We now show that for $m>0$ the functor $\mathbf{A}_m$ defines a \emph{locally} covariant QFT, i.\,e.~that the homomorphisms $\mathbf{A}_m(\psi) \equiv \alpha_\psi$ are injective.

\begin{theorem}
	$\mathbf{A}_m$ as given above defines a locally covariant QFT of the Proca field, i.\,e.~it is a functor
$\mathbf{A}_m : \SpacCurr \to \Alg'$.
\end{theorem}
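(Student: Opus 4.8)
The plan is to reduce the injectivity of $\alpha_\psi$ to the injectivity of an induced map on the underlying symplectic spaces of solutions, and then to propagate this injectivity to the whole algebra using the symmetric-algebra structure established in Lemma \ref{lem:symmetrization-of-fields}. First I would dispose of the external current. Given a morphism $\psi : M \to N$ with $\psi^* j_\N = j_\M$, I would choose a solution $\varphi_\N$ of $(\delta d + m^2)\varphi_\N = j_\N$ on $\N$ and set $\varphi_\M \coloneqq \psi^* \varphi_\N$, which solves $(\delta d + m^2)\varphi_\M = j_\M$ on $\M$ because the isometric embedding $\psi$ intertwines $\delta$ and $d$ and because $\psi^* j_\N = j_\M$. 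One then checks that $\alpha_\psi$ is conjugate, via the homeomorphisms $\Psi_{\varphi_\M}$ and $\Psi_{\varphi_\N}$ of Theorem \ref{thm:field-algebra-homeomorphy}, to the corresponding source-free morphism $\AA_{m,0}(M) \to \AA_{m,0}(N)$. Since $\Psi_{\varphi}$ is a homeomorphism, injectivity for $j_\M = j_\N = 0$ implies injectivity in general, so it suffices to treat the source-free case.

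In the source-free case I would work with the symplectic space $\mathcal{S}_m(P) \coloneqq \Quotientscale{\Omega^1_0(P)}{(\delta d + m^2)\Omega^1_0(P)}$ for $P \in \{\M,\N\}$, equipped with the form $\mathcal{G}_{m,P}$, which is non-degenerate by Lemma \ref{lem:propagator-non-degenerate} and Lemma \ref{lem:dynamics} (recall that $(\delta d + m^2)\Omega^1_0(P) = \KERN{G_{m,P}}$, and that $\mathcal{S}_m(P)$ is identified with the initial data space via $\kappa_m$ by Lemma \ref{lem:one-particle-homeomorphism}). The pushforward $\psi_*$ descends to a linear map $\tau_\psi : \mathcal{S}_m(\M) \to \mathcal{S}_m(\N)$, which is well defined because $\psi_*$ commutes with $\delta d + m^2$ and hence maps $(\delta d + m^2)\Omega^1_0(\M)$ into $(\delta d + m^2)\Omega^1_0(\N)$. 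The key geometric input is the naturality of the fundamental solutions under admissible embeddings, namely $G_{m,\N}(\psi_* F)\big|_{\psi(\M)} = \psi_*\big(G_{m,\M}F\big)$ for $F \in \Omega^1_0(\M)$, which follows from the uniqueness of the advanced and retarded fundamental solutions together with the causal compatibility condition $J^\pm_\M(p) = \psi^{-1}\big(J^\pm_\N(\psi(p))\big)$ in the definition of an admissible embedding. Since $\supp{\psi_* F} \subset \psi(\M)$, only the values on $\psi(\M)$ enter the pairing, and combining this with the isometry invariance of $\langle \PH , \PH \rangle$ gives
\begin{equation}
\GreenN{\psi_* F}{\psi_* F'} = \langle \psi_* F,\, \psi_*\big(G_{m,\M}F'\big)\rangle_\N = \GreenM{F}{F'},
\end{equation}
so $\tau_\psi$ is symplectic. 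A symplectic map out of a space carrying a non-degenerate form is automatically injective: if $\tau_\psi([F]) = 0$ then $\GreenM{F}{F'} = \GreenN{\psi_* F}{\psi_* F'} = 0$ for every $F'$, whence $[F] = 0$ by non-degeneracy of $\mathcal{G}_{m,\M}$.

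It remains to lift the injectivity of $\tau_\psi$ to the algebra. Under the homeomorphism of Theorem \ref{thm:field_algebra_homeomorphy_source_free}, together with the identification of $\mathcal{S}_m(P)$ with the space of initial data, $\alpha_\psi$ is identified with the map induced on the CCR quotients by the tensor-algebra homomorphism $\BU(\tau_\psi)$. Because $\tau_\psi$ is symplectic it carries the CCR ideal on $\M$ into the CCR ideal on $\N$, so the map descends to the quotients. By Lemma \ref{lem:symmetrization-of-fields} each of these quotients is, as a vector space, the symmetric (Bose--Fock) algebra over the corresponding symplectic space, and under this identification the descended map is precisely the functorial extension of $\tau_\psi$ to symmetric tensors. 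Since we are over $\IC$, the symmetric power $S^n(\tau_\psi)$ of an injective linear map is again injective, hence so is their direct sum, and therefore $\alpha_\psi$ is injective.

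I expect the main obstacle to be the careful justification of the naturality relation for the fundamental solutions and, relatedly, the claim that $\BU(\tau_\psi)$ respects the two CCR ideals. The former is the single place where the full force of the admissibility condition on $\psi$ is needed, namely the causal compatibility rather than merely the isometry, and the latter is what makes the purely linear injectivity of $\tau_\psi$ transfer to the full polynomial algebra through the symmetrisation of Lemma \ref{lem:symmetrization-of-fields}.
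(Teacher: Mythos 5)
Your proof is correct, but it follows a genuinely different route from the paper's. The paper settles injectivity with a short softness argument: by Lemma \ref{lem:propagator-non-degenerate} the form $\mathcal{G}^{(\Sigma)}$ is a non-degenerate symplectic form on $\Dzs$, so the algebra ${\Quotientscale{\BU\big(\Dzs\big)}{\ICCR}}$ is \emph{simple} (citing \cite[Scholium 7.1]{BaezSegalZhou1992}); by the homeomorphisms of Theorems \ref{thm:field_algebra_homeomorphy_source_free} and \ref{thm:field-algebra-homeomorphy} the same holds for $\mathbf{A}_m(M)$, whence the kernel of $\alpha_\psi$, being a two-sided ideal, is either trivial or everything, and unit preservation excludes the latter. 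That argument uses nothing about $\alpha_\psi$ beyond unit preservation --- the non-degeneracy of Lemma \ref{lem:propagator-non-degenerate} is cashed in once, through simplicity, and then applies to \emph{every} unital morphism out of $\AA_{m,j}$. You instead prove injectivity constructively: after the (correct) reduction to $j=0$ via $\Psi_{\varphi}$ with $\varphi_{\M}=\psi^*\varphi_{\N}$, you show the induced map $\tau_\psi$ on the one-particle symplectic space is symplectic using the naturality $\restr{G^{\pm}_{m,\N}(\psi_*F)}{\psi(\M)}=\psi_*\big(G^{\pm}_{m,\M}F\big)$, which indeed rests precisely on the causal-compatibility clause of Definition \ref{def:categories_alg_spaccurr} together with the uniqueness of the advanced and retarded Green operators; non-degeneracy then gives injectivity of $\tau_\psi$, and Lemma \ref{lem:symmetrization-of-fields} plus the injectivity of symmetric tensor powers of injective linear maps over $\IC$ lifts this to the full algebra. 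Your route is longer and re-derives a fact the paper leaves implicit (the Green-operator naturality is also the essential input to the well-definedness of the functor, which the paper defers to \cite{Schambach2016}), but it buys strictly more structural information, identifying $\alpha_\psi$ as the second-quantized, symmetric lift of a symplectic injection of one-particle spaces, whereas the paper's argument is shorter and more robust, working uniformly for any unit-preserving morphism regardless of its concrete form. One step you should make explicit: the identification of the descended map with $\bigoplus_n S^n(\tau_\psi)$ requires $\BU(\tau_\psi)$ to commute with the symmetrization projection $S$; this follows from the uniqueness of the decomposition $\BU\big(\Dzs\big)=\BU_S\big(\Dzs\big)\oplus\ICCR$ in Lemma \ref{lem:symmetrization-of-fields}, because $\BU(\tau_\psi)$ maps symmetric elements to symmetric elements and the CCR ideal into the CCR ideal.
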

\begin{proof}
    $\mathbf{A}_m$ is given as a functor into $\Alg$, so it only remains to show that the morphisms $\mathbf{A}_m(\psi) \equiv \alpha_\psi$ are injective. By Lemma \ref{lem:propagator-non-degenerate} $\mathcal{G}^{(\Sigma)}$ is a symplectic form on $\Dzs$ and hence the algebra $\Quotientscale{\BU\big( \Dzs \big)}{\ICCR}$ is simple (cf. \cite[Scholium 7.1]{BaezSegalZhou1992}). The same is true for the homeomorphic algebra $\mathbf{A}_m(M)$ (cf. Theorems \ref{thm:field_algebra_homeomorphy_source_free} and \ref{thm:field-algebra-homeomorphy}). Since $\mathbf{A}_m(M)$ is simple, the homomorphism $\alpha_\psi$ has either full or trivial kernel. As $\alpha_\psi$ is defined to be unit preserving, it follows that the kernel is trivial and hence $\alpha_\psi$ is injective.
\end{proof}

\section{The zero mass limit}\label{sec:limit}
For the main results of this article we will investigate the zero mass limit of the Proca field a in curved spacetime in both the classical and the quantum case. In Section \ref{sec:continuity} we will formulate the key notion of continuity of the field theory with respect to the mass and establish its basic properties. We then define the massless limit in a general, state independent setup first for the classical Proca field in Section \ref{sec:zero-mass-limit-classical} and then for the quantum Proca field in Section \ref{sec:mass_dependence_and_limit}. At given points, we compare our results with the theory of the (quantum) vector potential of electromagnetism in curved spacetimes as studied in \cite{SandersDappiaggiHack2014, Pfenning2009}.\par

\subsection{Continuity in the mass}\label{sec:continuity}

When defining a notion of continuity of the field theory with respect to the mass, the basic problem is that at different masses the smeared fields $\A_{m,j}(F)$ are elements of different algebras $\AA_{m,j}$. Indeed, when constructing $\AA_{m,j}$ as a quotient of the BU-algebra, the ideals that implement the dynamics and the commutation relations both depend on the mass. We therefore need to find a way of comparing the Proca fields at different masses with each other.

One could try to solve this using the $C^*$-Weyl algebra to describe the quantum Proca field and the notion of a continuous field of $C^*$-algebras depending on the mass parameter (cf.~\cite{BinzHoneggerRieckers2004}). This would work very nicely, if the theories were described by a weakly continuous family of (non-degenerate) symplectic forms on a fixed linear space (cf.~\cite[Appendix A]{Schambach2016}, which generalises \cite{BinzHoneggerRieckers2004}). However, as it turns out, this approach is ill-suited for the problem at hand. Indeed, one would like linear combinations of Weyl operators
\begin{equation}
W_{m,j}(F_i)=\e^{\i\A_{m,j}(F_i)}
\end{equation}
with fixed test-forms $F_i\in\Omega_0^1(\mathcal{M})$ to depend continuously on the mass, but for $j=0$ the norm of an operator like $W_{m,0}\big((\delta d +m_0^2)F\big)-1$, with a fixed $F$ and $m_0$, can be seen to be discontinuous at $m=m_0$, where the operator vanishes.

A different attempt, which we have hinted at in Section \ref{sec:BU-algebra}, is to use the semi-norms
\begin{equation}
q_{m, j, \alpha}\big( [f]_{m,j} \big) = \inf\big\{ p_\alpha(g) : g \in [f]_{m,j} \big\}
\end{equation}
to define a notion of continuity of the theory with respect to the mass $m$. We could call a family of operators $\left\{O_m\right\}_{m>0}$ with $O_m \in \AA_{m,j}$ continuous if and only if the map $m \mapsto q_{m, j, \alpha}\big(O_m\big)$ is continuous for all $\alpha$ with respect to the standard topology in $\IR$.  While this definition seems appropriate at first sight, it is non-trivial to show the desirable property that for a fixed $F\in\Omega_0^1(\mathcal{M})$ the smeared field operators $\A_{m,j}(F)$ vary continuously with $m$. Even for $j=0$ and considering only the one-particle level, we were unable to prove this.

In this paper we therefore opt for the following solution, which makes use of the Borchers-Uhlmann algebra of initial data. For simplicity we first consider the case $j=0$ and a family of operators $\left\{O_m\right\}_{m>0}$ with $O_m \in \AA_{m,0}$. Since we have found for every mass $m>0$ that $\AA_{m,0}$ is homeomorphic to $\Quotientscale{\BU\big(\Dzs\big)}{\ICCR}$, we can map the family $\left\{O_m\right\}_{m>0}$ to a family of operators in the single algebra ${\Quotientscale{\BU\big(\Dzs\big)}{\ICCR}}$, which already carries a topology and hence a notion of continuity. When $j\not=0$ we combine this idea with the fact that $\AA_{m,j}$ is homeomorphic to $\AA_{m,0}$. In this way we arrive at the following notion of continuity.
\newpage
\begin{definition}[Continuity with respect to the mass]\label{def:field_continuity_general}
	Let $j \in \Omega^1(\M)$ be fixed and let $\left\{O_m\right\}_{m>0}$ be a family of operators with $O_m\in\AA_{m,j}$. We call $\left\{O_m\right\}_{m>0}$ continuous if and only if the map
	\begin{align}
	\IR_+ &\to {\Quotientscale{\BU\big( \Dzs \big) }{\ICCR}} \formspace,\\
	m &\mapsto \big( \Lambda_m \comp  \Psi_{\varphi_{m,j}}^{-1} \big) \left(O_m\right)\notag
	\end{align}
	is continuous, where $\Lambda_{m}$ and $\Psi_{\varphi_{m,j}}$ are as defined in Section \ref{sec:CCR} and \ref{sec:no-current} and $\left\{ \varphi_{m,j} \right\}_{m>0}$ is a family of classical solutions to the inhomogeneous Proca equation $(\delta d + m^2) \varphi_{m,j} = j$ which depends continuously on $m$ (i.\,e. $m\mapsto \varphi_{m,j}\in \Omega^1(\M)$ is continuous).\par 
	Equivalently, identifying $O_m = \big[\tilde{O}_m\big]_{m,j}$ for some $\tilde{O}_m \subset \BUOmega$, the family $\left\{O_m\right\}_{m>0}$ is continuous if and only if the map
	\begin{align}
	\IR_+ &\to {\Quotientscale{\BU\big( \Dzs \big) }{\ICCR}} \formspace,\\
	m &\mapsto \big[  \big( K_m \comp  \Gamma_{\varphi_{m,j}}^{-1} \big) (\tilde{O}_m) \big]_\sim^\text{CCR}\notag
	\end{align}
	is continuous, with $K_m$ and $\Gamma_{\varphi_{m,j}}$ as defined in Section \ref{sec:initial-value} and \ref{sec:no-current}.
\end{definition}
We now aim to establish some desirable properties of this notion of continuity, most importantly that it is independent of the choice of Cauchy surface $\Sigma$ and of the choice of the continuous family $\varphi_{m,j}$ of classical solutions. Our arguments will make essential use of the following result for normally hyperbolic operators:
\begin{theorem}\label{thm:Emcont}
Let $P$ be a normally hyperbolic operator on a real vector bundle $V$ over a globally hyperbolic spacetime $M$. Let $u_0,u_1\in\Gamma(V|_{\Sigma})$ be initial data on a Cauchy surface $\Sigma$ and $f\in\Gamma_0(V)$. For $r\in\mathbb{R}$, let $u^{(r)}$ be the unique solution to $(P+r)u^{(r)}=f$ with initial data $u_0,u_1$ on $\Sigma$. Then $r\mapsto u^{(r)}$ is a continuous map from $\mathbb{R}$ to $\Gamma(V)$.
\end{theorem}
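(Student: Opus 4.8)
The plan is to prove continuity of $r\mapsto u^{(r)}$ at an arbitrary fixed $r_0\in\mathbb{R}$ by comparing $u^{(r)}$ with $u^{(r_0)}$, thereby reducing everything to the continuous dependence of the Cauchy problem on its source for the \emph{single}, $r$-independent operator $P+r_0$, for which well-posedness and continuous dependence are available from \cite{BaerGinouxPfaeffle2007}. First I would set $w:=u^{(r)}-u^{(r_0)}$ and observe that, since $(P+r)u^{(r)}=f=(P+r_0)u^{(r_0)}$, one has $(P+r_0)u^{(r)}=f-(r-r_0)u^{(r)}$, so that
\begin{equation}
(P+r_0)\,w=-(r-r_0)\,u^{(r)},
\end{equation}
while the initial data of $w$ on $\Sigma$ vanish because $u^{(r)}$ and $u^{(r_0)}$ share the same $u_0,u_1$. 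Thus $w$ is the unique solution of the Cauchy problem for the fixed operator $P+r_0$ with zero initial data and source $-(r-r_0)u^{(r)}$.

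Since $\Gamma(V)$ carries the topology of uniform convergence of all derivatives on compacta, it suffices to estimate $w$ on an arbitrary compact $K\subset M$. By global hyperbolicity the causal region $\widetilde{K}:=\big(J^-(K)\cap J^+(\Sigma)\big)\cup\big(J^+(K)\cap J^-(\Sigma)\big)$ between $\Sigma$ and $K$ is compact, and by finite propagation speed the restriction $w|_K$ is unchanged if the source is multiplied by a fixed scalar cut-off $\chi\in C^\infty_0(M)$ equal to $1$ on a neighbourhood of $\widetilde{K}$. Hence $w|_K$ equals the restriction of the zero-data solution, for $P+r_0$, of the \emph{compactly supported} source $-(r-r_0)\chi\,u^{(r)}$. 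Applying the continuous dependence of this fixed-operator solution map on its source yields, for suitable $C=C(K,k,\chi,r_0)$ and $\ell=\ell(K,k,\chi,r_0)$,
\begin{equation}
\big\|w\big\|_{C^k(K)}\le C\,|r-r_0|\;\big\|u^{(r)}\big\|_{C^\ell(\supp{\chi})}.
\end{equation}

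The remaining and main difficulty is to control the factor $\|u^{(r)}\|_{C^\ell(\supp{\chi})}$ uniformly for $r$ near $r_0$, since the source still involves the unknown $u^{(r)}$. I would obtain this from energy estimates for the Cauchy problem on the relatively compact time slab bounded by $\Sigma$ and Cauchy surfaces to the future and past of $K$: because the coefficients of $P+r$ depend affinely, hence continuously, on $r$, the constants in these estimates can be taken uniform for $r$ in the compact interval $[r_0-1,r_0+1]$, giving a bound on $\|u^{(r)}\|_{C^\ell(\supp{\chi})}$ independent of $r$ in that range. Combined with the previous inequality this forces $\|w\|_{C^k(K)}=O(|r-r_0|)\to 0$, which is the claim.

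I expect this uniform energy estimate to be the only genuinely technical ingredient. An equivalent and perhaps cleaner route avoids a separate a priori bound altogether: writing $\mathcal{S}_{r_0}$ for the zero-data solution operator of $P+r_0$ on the slab, the relation above becomes the Volterra-type equation $w=-(r-r_0)\,\mathcal{S}_{r_0}\big[u^{(r_0)}+w\big]$, which for $|r-r_0|$ small is solved by a convergent Neumann series and directly yields $\|w\|=O(|r-r_0|)$; the boundedness of $\mathcal{S}_{r_0}$ on the fixed slab is again exactly the energy estimate, and the passage from the resulting energy/Sobolev bound to the $C^k$-seminorms is by Sobolev embedding, applied at all orders since every datum is smooth.
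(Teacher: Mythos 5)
Your proposal is correct in outline, but it organizes the perturbation differently from the paper. The paper keeps the $r$-dependence in the operator: it writes $(P+r)(u^{(r)}-u^{(0)})=-ru^{(0)}$, so the source is \emph{known} and manifestly tends to zero, and it then runs a double induction over the covariant derivatives $v^{(k,r)}=\nabla_{\alpha_1}\cdots\nabla_{\alpha_k}(u^{(r)}-u^{(0)})$, each of which solves a normally hyperbolic equation with operator $P+B^{(k)}+r$ and a source built from $u^{(0)}$ and lower-order differences; the price is that the energy estimate must hold with a constant uniform in $r$ on compact intervals, which is exactly what Appendix B of the paper establishes. You instead freeze the operator at $P+r_0$ and move all the $r$-dependence into the source, $(P+r_0)w=-(r-r_0)u^{(r)}$, which lets you quote fixed-operator well-posedness and continuous dependence from \cite{BaerGinouxPfaeffle2007} after your (correct) cutoff-and-domain-of-dependence localization. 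Note, however, that your first sub-route does not actually avoid the paper's technical core: the uniform bound on $\|u^{(r)}\|_{C^\ell(\supp{\chi})}$ for $r$ near $r_0$ is again an $r$-uniform energy estimate --- precisely the content of the paper's appendix --- combined with the same commute-derivatives-through-the-equation induction; you should also record that the higher-order initial data of $\nabla^k u^{(r)}$ on $\Sigma$ are determined by the equation and hence depend on $r$, though only polynomially, so they stay bounded on compact $r$-intervals. Your Neumann-series variant is the genuinely cleaner alternative: since only $\mathcal{S}_{r_0}$ appears, the energy estimate is needed for a single fixed operator, uniformity in $r$ is automatic, and the smallness condition $|r-r_0|\,\|\mathcal{S}_{r_0}\|<1$ is harmless for proving continuity at $r_0$; but to get convergence in $\Gamma(V)$ rather than merely in $L^2_{\mathrm{loc}}$ you must run the fixed-point argument (or the a priori bounds) at every Sobolev order on the compact slab and only then invoke Sobolev embedding, which is the same order-by-order induction the paper makes explicit. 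In summary, both arguments share the identical finish (higher-order energy estimates on $D(K)$ plus Sobolev embedding); yours buys a fixed operator and, in the Neumann form, dispenses with the $r$-uniformity of the energy constants, while the paper's buys a known, convergent source and avoids any fixed-point or smallness argument.
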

\begin{proof}
It suffices to prove continuity at $r=0$, after shifting $P$ by a constant. We may write $P=\nabla^{\alpha}\nabla_{\alpha}+B$, where $B$ is a bundle endomorphism \cite{BaerGinouxPfaeffle2007}. Here,  $\nabla_{\alpha}$ is a connection on $V$, which may be extended with the Levi-Civita connection to tensor product bundles of $V$, $TM$ and their dual bundles. We write for $k=0,1,2,\ldots$
\begin{equation}\label{Eqn_Defvk}
v^{(k,r)}_{\alpha_1\cdots\alpha_k} \coloneqq \nabla_{\alpha_1}\cdots\nabla_{\alpha_k}(u^{(r)}-u^{(0)})
\end{equation}
and we note that $(P+r)(u^{(r)}-u^{(0)})=-ru^{(0)}$ and hence
\begin{equation}\label{Eqn_hyperbolicvk}
(P+r)v^{(k,r)}_{\alpha_1\cdots\alpha_k} = -r\nabla_{\alpha_1}\cdots\nabla_{\alpha_k}u^{(0)}
-(B^{(k)}v^{(k,r)})_{\alpha_1\cdots\alpha_k}+\sum_{l=0}^{k-1}(C^{(k,l)}v^{(l,r)})_{\alpha_1\cdots\alpha_k} \formspace,
\end{equation}
where $B^{(k)}$ and $C^{(k,l)}$ are bundle homomorphisms which involve $B$ and the curvature of $\nabla$. It follows that $v^{(k,r)}$ solves an inhomogeneous normally hyperbolic equation with the operator $P+B^{(k)}+r$ and an inhomogeneous term determined by $u^{(0)}$ and $v^{(l,r)}$ with $l<k$.

We now first prove by induction over $k$ that the initial data of $v^{(k,r)}$ converge to 0 in $\Gamma(V|_{\Sigma})$ as $r\to0$. For $k=0$ this claim is trivial, because $v^{(0,r)}=u^{(r)}-u^{(0)}$ has vanishing initial data for all $r$. Now suppose that the claim is true for all $0\le l\le k-1$ and consider $v^{(k,r)}_{\alpha_1\cdots\alpha_k}$. Using the unit normal vector field $n$ to $\Sigma$ we may express $v^{(k,r)}_{\alpha_1\cdots\alpha_k}$ as a sum of terms in which all indices are either projected onto the conormal direction or onto the space-like directions cotangent to $\Sigma$. If one of the indices is projected onto the space-like directions, then we may commute the derivatives in Equation (\ref{Eqn_Defvk}) to bring the space-like index to the left. The commutator terms involve the curvature, which is independent of $r$, and at most $k-2$ derivatives. Hence its initial data vanish as $r\to0$ by the induction hypothesis. Similarly, if the first index is space-like, then the initial data of the term vanish as $r\to 0$ by the induction hypothesis, since convergence in $\Gamma(V|_{\Sigma})$ entails convergence of all spacelike derivatives. Finally we consider the term where all indices are projected onto the conormal direction. For this term we may use Equation (\ref{Eqn_hyperbolicvk}) to eliminate two normal derivatives in favour of spacelike derivatives and lower order terms. Again the initial data of this term vanish in the limit $r\to0$ by the induction hypothesis. Adding all components together proves that the initial data of $v^{(k,r)}$ converge to 0 in $\Gamma(V|_{\Sigma})$ as $r\to0$.

At this point our proof uses an energy estimate. To formulate it, we endow the vector bundles $V$ and $TM$ with auxiliary smooth Riemannian metrics, and we denote the corresponding pointwise norms by $\|{\PH} \|$. For every compact $K\subset \Sigma$ and $L\subset\mathbb{R}$ there is a $C>0$ such that for all $r\in L$
\begin{equation}\label{Eqn_energyestimate}
\int_{D(K)}\norm{v^{(r)}}^2\le C\int_K \left(\norm{ \restr{v^{(r)}}{\Sigma}}^2
+ \norm{n^{\alpha}\nabla_{\alpha} \restr{v^{(r)}}{\Sigma}}^2\right)+C\int_{D(K)} \norm{f^{(r)}}^2 \formspace,
\end{equation}
where $D(k)$ is the domain of dependence and $v^{(r)}$ is a solution to\footnote{An explicit proof of this estimate is in Appendix \ref{app_energy_estimate}. Cf.~\cite[App.3, Thm.3.2]{Choquet-Bruhat} for an energy estimate of a quite similar form, where the independence of $C$ on $r$ can be established by retracing the steps in the proof.} $(P+r)v^{(r)}=f^{(r)}$.

We now apply this result to $T^*M^{\otimes k}\otimes V$ instead of $V$ and prove by induction that each $v^{(k,r)}$ converges to 0 in the $L^2$-sense on every compact set $\widetilde{K}\subset M$. Indeed, $\widetilde{K}\subset D(K)$ for some compact $K\subset\Sigma$, so it suffices to apply the above energy estimate to $v^{(k,r)}$ and show that the right-hand side converges to 0. Note that the initial data of $v^{(k,r)}$ converge to 0 in $\Gamma(V|_{\Sigma})$, and hence also in the $L^2$-norm on every compact $K$. It remains to consider the source term of Equation (\ref{Eqn_hyperbolicvk}),
\begin{equation}
-r\nabla_{\alpha_1}\cdots\nabla_{\alpha_k}u^{(0)}
+\sum_{l=0}^{k-1}(C^{(k,l)}v^{(l,r)})_{\alpha_1\cdots\alpha_k} \formspace.
\end{equation}
Because $u^{(0)}$ is independent of $r$ we see immediately that the first term converges to 0 as $r\to0$. For $k=0$ the summation vanishes, so the energy estimate proves the desired convergence of $v^{(0,r)}$. For $k>0$ we use a proof by induction. Assuming that $v^{(l,r)}\to 0$ in the $L^2$-sense as $r\to 0$ for all $0\le l\le k-1$, the energy estimate then proves the claim also for $v^{(k,r)}$.\par 
Finally, since $v^{(0,r)}$ and all its derivatives converge to 0 in an $L^2$ sense on every compact set, they also converge in $\Gamma(V)$ by the Sobolev Embedding Theorem (\cite[Sec.5.6 Theorem 6]{Evans} ).
\end{proof}
For us, the following consequence is most relevant:
\begin{corollary}\label{Cor:Emcont}
For fixed $F\in \Omega^1_0(\M)$, the advanced and retarded solutions $E^{\pm}_mF$ depend continuously on $m\in\mathbb{R}$. Consequently, $E_mF$ is continuous in $m\in\mathbb{R}$ and $G_m^\pm F$ and $G_mF$ are continuous in $m>0$.
\end{corollary}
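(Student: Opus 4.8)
The plan is to realise the fundamental solutions $E_m^\pm F$ as solutions of a Cauchy problem in which the \emph{source} and the \emph{initial data} are held fixed and the mass enters only through the parameter $r = m^2$ of the operator, so that Theorem \ref{thm:Emcont} applies directly. The operator $\square = d\delta + \delta d$ is normally hyperbolic on $T^*\M$, and $\square + m^2$ is precisely $P + r$ with $P = \square$ and $r = m^2$. Fix $F \in \Omega^1_0(\M)$. Since $\supp{F}$ is compact and $\M$ is globally hyperbolic, I can choose \emph{once and for all} a Cauchy surface $\Sigma_-$ with $\supp{F} \subset I^+(\Sigma_-)$; crucially, this choice is independent of $m$, because the set $J^+(\supp{F})$ does not depend on $m$. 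Then $J^+(\supp{F}) \cap \Sigma_- = \emptyset$, so the retarded solution $E_m^+ F$, whose support lies in $J^+(\supp{F})$, vanishes on the open set $\M \setminus J^+(\supp{F}) \supset \Sigma_-$; in particular both its restriction and its normal derivative on $\Sigma_-$ are zero. By uniqueness of the Cauchy problem for the normally hyperbolic operator $\square + m^2$, the form $E_m^+ F$ is therefore exactly the solution $u^{(r)}$ of Theorem \ref{thm:Emcont} at $r = m^2$, with vanishing initial data and source $f = F$. Since $r \mapsto u^{(r)}$ is continuous from $\mathbb{R}$ to $\Gamma(T^*\M)$ by that theorem and $m \mapsto m^2$ is continuous, the composite $m \mapsto E_m^+ F$ is continuous on $\mathbb{R}$. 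The argument for $E_m^- F$ is identical after choosing a Cauchy surface $\Sigma_+$ with $\supp{F} \subset I^-(\Sigma_+)$, so that $E_m^- F$ has vanishing Cauchy data on $\Sigma_+$.

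With both $E_m^\pm F$ continuous in $m \in \mathbb{R}$, the combination $E_m F = E_m^- F - E_m^+ F$ is continuous as a difference of continuous maps. For the Proca fundamental solutions I then use the identities $G_m^\pm = (m^{-2} d\delta + 1) E_m^\pm$ and $G_m = (m^{-2} d\delta + 1) E_m$ stated before Theorem \ref{thm:solution_proca_unconstrained}. The second-order differential operator $d\delta$ is continuous on $\Gamma(T^*\M)$ in its usual $C^\infty$ topology, and $m \mapsto m^{-2}$ is continuous on $m > 0$; since scalar multiplication is jointly continuous, $m \mapsto m^{-2} d\delta E_m^\pm F$ is continuous for $m > 0$, and adding the continuous map $m \mapsto E_m^\pm F$ yields continuity of $G_m^\pm F$, and likewise of $G_m F$, on $m > 0$. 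The restriction to $m > 0$ here is forced precisely by the factor $m^{-2}$, whereas $E_m^\pm F$ and $E_m F$ remain continuous down to $m = 0$, matching the statement of the corollary.

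The one genuinely delicate point is the identification of $E_m^\pm F$ with the Cauchy-problem solution $u^{(m^2)}$: it rests on the support property $\supp{E_m^\pm F} \subset J^\pm(\supp{F})$ combined with the well-posedness (existence and uniqueness) of the Cauchy problem for $\square + m^2$, which together guarantee that $E_m^\pm F$ actually carries the vanishing initial data assumed in Theorem \ref{thm:Emcont}. A minor bookkeeping issue is that $\Omega^1_0(\M)$ consists of \emph{complex} one-forms while Theorem \ref{thm:Emcont} is stated for a real vector bundle; this is handled by applying the theorem separately to the real and imaginary parts of $F$ (equivalently, to the realification of $T^*\M \otimes \mathbb{C}$) and recombining by linearity, which changes nothing in the topology.
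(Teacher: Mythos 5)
Your proposal is correct and takes essentially the same approach as the paper: apply Theorem \ref{thm:Emcont} with $r=m^2$, vanishing initial data on a Cauchy surface chosen to the past (resp.\ future) of $\supp{F}$, and then deduce continuity of $G_m^\pm F$ and $G_mF$ for $m>0$ from $G_m^\pm = (m^{-2}d\delta+1)E_m^\pm$. The extra details you supply --- the explicit uniqueness argument identifying $E_m^\pm F$ with the Cauchy-problem solution $u^{(m^2)}$, and the remark about applying the real-bundle theorem to real and imaginary parts of complex one-forms --- are points the paper's one-line proof leaves implicit, and both are handled correctly.
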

\begin{proof}
We apply Theorem \ref{thm:Emcont} to $\Box+m^2$ with $r=m^2$. Choosing e.\,g.~$u_0,u_1=0$ and $\Sigma$ to the past/future of the support of $f$, we see that $E^{\pm}_mF$ depend continuously on $m\in\mathbb{R}$, and hence so does $E_mF$. The continuity of $G^{\pm}_mF$ and $G_mF$ follows from the formula $G_m^\pm = (m^{-2}{d\delta} +1) E_m^\pm$ as long as $m\not=0$.
\end{proof}
Let us now return to the continuity of families of observables and verify that it behaves well in the simplest examples.
\newpage
\begin{lemma}
For a fixed $F \in \Omega^1_0(\M)$ and $j\in\Omega^1(\M)$ the family of operators $\left\{\A_{m,j}(F)\right\}_{m>0}$ is continuous.
\end{lemma}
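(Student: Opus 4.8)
The plan is to reduce the assertion to the continuity of $G_m F$ in the mass, which is exactly the content of Corollary \ref{Cor:Emcont}. I would work with the second, equivalent formulation of continuity in Definition \ref{def:field_continuity_general}. Since $\A_{m,j}(F) = \big[(0,F,0,0,\dots)\big]_{m,j}$ by construction, I may choose the mass-independent representative $\tilde{O}_m = (0,F,0,0,\dots) \in \BUOmega$. Applying $\Gamma_{\varphi_{m,j}}^{-1}$ produces $(\langle \varphi_{m,j}, F\rangle_\M, F, 0, 0, \dots)$, and since $K_m$ is a unit-preserving algebra homomorphism acting on degree-one elements by $F \mapsto \kappa_m(F)$, I obtain
\[
\big(K_m \circ \Gamma_{\varphi_{m,j}}^{-1}\big)(\tilde{O}_m) = \big(\langle \varphi_{m,j}, F\rangle_\M,\, \kappa_m(F),\, 0,\, 0,\, \dots\big) \in \BU\big(\Dzs\big).
\]
As the quotient projection $[\PH]_\sim^\text{CCR}:\BU\big(\Dzs\big)\to\Quotientscale{\BU\big(\Dzs\big)}{\ICCR}$ is continuous by definition of the quotient topology, it suffices to show that $m\mapsto \big(\langle \varphi_{m,j}, F\rangle_\M,\kappa_m(F), 0, 0,\dots\big)$ is continuous into $\BU\big(\Dzs\big)$.

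This element is concentrated in degrees $0$ and $1$ for all $m$, and convergence in $\BU\big(\Dzs\big)$ is checked component-wise on its finitely many non-vanishing entries, so I would treat the two components in turn. For the degree-zero part I use the hypothesis that $m\mapsto\varphi_{m,j}\in\Omega^1(\M)$ is continuous together with the fact that, for fixed $F\in\Omega^1_0(\M)$, the pairing $\langle\PH,F\rangle_\M$ is a continuous linear functional on $\Omega^1(\M)$; hence $m\mapsto\langle\varphi_{m,j},F\rangle_\M$ is continuous in $\IC$. For the degree-one part, $\kappa_m(F) = (\rhoz G_m F, \rhod G_m F)$, I invoke Corollary \ref{Cor:Emcont} to the effect that $m\mapsto G_m F\in\Omega^1(\M)$ is continuous in the usual $C^\infty$-topology, and then observe that the data maps $\rhoz=i^*$ and $\rhod$ of Definition \ref{def:cauchy_mapping_operators} are continuous linear operators $\Omega^1(\M)\to\Omega^p(\Sigma)$, being built from a pullback along $i$, Hodge stars, and at most one exterior derivative. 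Composing, $m\mapsto\kappa_m(F)$ is continuous.

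The only technical point demanding care — and the main, if modest, obstacle — is the inductive-limit (LF) topology on $\Dzs=\Omega^1_0(\Sigma)\oplus\Omega^1_0(\Sigma)$, for which continuity of a curve is delicate unless the supports remain inside a fixed compact set. This is automatic here: since $\supp{G_m F}\subset J(\supp{F})$ uniformly in $m$, every initial datum $\kappa_m(F)$ is supported in the fixed compact set $J(\supp{F})\cap\Sigma$. On the Fréchet subspace of forms supported in that compact set the LF-topology restricts to the ordinary $C^\infty$-topology, in which the continuity established above holds verbatim, and continuity into the full space $\Dzs$ then follows. Post-composing with the continuous projection $[\PH]_\sim^\text{CCR}$ yields the claimed continuity of $\left\{\A_{m,j}(F)\right\}_{m>0}$.
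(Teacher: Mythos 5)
Your proposal is correct and follows essentially the same route as the paper's own proof: reduce to the explicit representative $\big(\langle \varphi_{m,j},F\rangle_\M,\kappa_m F,0,0,\dots\big)$, use Corollary \ref{Cor:Emcont} for the continuity of $G_mF$ in $m$, the continuity of the mass-independent operators $\rhoz,\rhod$ and of $m\mapsto\varphi_{m,j}$, and finally the continuity of the quotient projection $[\PH]_\sim^{\text{CCR}}$. Your additional remark on the LF-topology of $\Dzs$ — that the uniform support bound $\supp{G_mF}\subset J(\supp{F})$ confines all initial data to a fixed Fr\'echet step, so continuity can be checked in the ordinary $C^\infty$-topology — is a valid point which the paper's terser proof leaves implicit.
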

\begin{proof}
We see from the definitions of the maps involved in Definition \ref{def:field_continuity_general} that
\begin{equation}
\big(\Lambda_m \comp  \Psi_{\varphi_{m,j}}^{-1}\big)(\A_{m,j}(F)) =
\big[\big( \langle \varphi_{m,j} , F \rangle_{\M} , \kappa_mF,0,0,\dots\big)\big]_\sim^\text{CCR},
\end{equation}
where $[{\PH}]_\sim^\text{CCR}$ is continuous and does not depend on the mass. Because $\varphi_{m,j}$ depends continuously on $m>0$, so does
$\langle \varphi_{m,j} , F \rangle_{\M}$. Furthermore, $G_mF$ is continuous in $m>0$ by Corollary \ref{Cor:Emcont} and the operators $\rho_{({\PH})}$ are continuous and independent of $m$, therefore, the initial data $\kappa_m F=(\rhoz G_m F , \rhod G_m F)$ also depend continuously on $m>0$. Combining these continuous maps proves the lemma.
\end{proof}
We have found the desirable property that the quantum fields vary continuously with respect to the mass. Note that this result is in fact independent of the choice of the Cauchy surface $\Sigma$, since $\kappa_m(F)$ is continuous in $m$ for every Cauchy surface. Indeed, we will now show quite generally that the notion of continuity in Definition \ref{def:field_continuity_general} is independent of the choice of the Cauchy surface $\Sigma$ and of the family of classical solutions $\left\{\varphi_{m,j} \right\}_m$.
\begin{theorem}\label{thm:continuity-independence-source-free}
The notion of continuity in Definition \ref{def:field_continuity_general} is independent of the choice of the Cauchy surface $\Sigma$ and of the family $\left\{ \varphi_{m,j} \right\}_{m>0}$ of classical solutions to the inhomogeneous Proca equation.
\end{theorem}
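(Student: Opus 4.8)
The plan is to handle both independences by a single mechanism. In each case, changing the datum (the family $\{\varphi_{m,j}\}$ or the Cauchy surface $\Sigma$) replaces the map of Definition \ref{def:field_continuity_general} by its composition, on the left, with an $m$-dependent family of $*$-isomorphisms of the relevant quotient BU-algebras. Each such isomorphism will be induced on the one-particle space $\Dzs$ by a map that varies \emph{continuously} in $m$, and the theorem reduces to showing that post-composition with such a family preserves continuity of curves into the quotient.

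First I would fix $\Sigma$ and compare two continuous families $\{\varphi_{m,j}\}$, $\{\varphi'_{m,j}\}$ of solutions of $(\delta d+m^2)\varphi=j$. Setting $\psi_m\coloneqq\varphi_{m,j}-\varphi'_{m,j}$, which is a continuous family of \emph{source-free} solutions, a computation on degree-one generators gives $\Gamma_{\varphi'_{m,j}}^{-1}=\Gamma_{\psi_m}\comp\Gamma_{\varphi_{m,j}}^{-1}$ on $\BUOmega$. The automorphism $\bar\tau_m\coloneqq\Lambda_m\comp\Psi_{\varphi'_{m,j}}^{-1}\comp\Psi_{\varphi_{m,j}}\comp\Lambda_m^{-1}$ of $\Quotientscale{\BU(\Dzs)}{\ICCR}$ then intertwines the two continuity maps, and I would identify it by lifting to $\BU(\Dzs)$: one checks $K_m\comp\Gamma_{\psi_m}=S_m\comp K_m$, where $S_m$ is the unit-preserving BU-homomorphism determined on one-particle elements by $(0,\mathbf d,0,\dots)\mapsto(-\ell_m(\mathbf d),\mathbf d,0,\dots)$, with $\ell_m(\kappa_m F)=\langle\psi_m,F\rangle_\M$. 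This $\ell_m$ is well defined on $\IMG(\kappa_m)=\Dzs$ because $\langle\psi_m,(\delta d+m^2)H\rangle_\M=\langle(\delta d+m^2)\psi_m,H\rangle_\M=0$, and Theorem \ref{thm:solution_proca_unconstrained} with $j=0$ gives the closed form $\ell_m(\mathbf d)=-\mathcal{G}^{(\Sigma)}\big((\rhoz\psi_m,\rhod\psi_m),\mathbf d\big)$, isolating the $m$-dependence in the data $(\rhoz\psi_m,\rhod\psi_m)$, which is continuous in $m$ by hypothesis. As the added scalars are central they cancel in all commutators, so $S_m$ fixes every generator of $\ICCR$ and descends to $\bar\tau_m$.

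For the Cauchy surface I would proceed analogously with a second surface $\Sigma'$. Let $T_m:\Dzs\to\Dzsp$ carry initial data on $\Sigma$ to the data on $\Sigma'$ of the source-free Proca solution they determine, so that $T_m\comp\kappa_m^{\Sigma}=\kappa_m^{\Sigma'}$. By Lemma \ref{lem:propagator-non-degenerate} both $\mathcal{G}^{(\Sigma)}$ and $\mathcal{G}^{(\Sigma')}$ agree with $\Gm{F}{F'}$ on corresponding data, so $T_m$ is a symplectomorphism; its lift to a BU-homomorphism $\BU(T_m)$ therefore maps $\ICCRS$ onto $\ICCRSP$ and descends to an isomorphism $\overline{\BU(T_m)}$ of the quotients. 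Evaluating on the generators $\A_{m,0}(F)$ shows $\Lambda_m^{\Sigma'}=\overline{\BU(T_m)}\comp\Lambda_m^{\Sigma}$, so the two continuity maps again differ only by this isomorphism. That $T_m$ depends continuously on $m>0$ follows from Corollary \ref{Cor:Emcont} together with the continuous dependence of the Proca solution on its data in Theorem \ref{thm:solution_proca_unconstrained}.

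The hard part is the last reduction: given a continuous curve $m\mapsto y_m$ into the quotient, to show $m\mapsto\bar\tau_m(y_m)$, respectively $m\mapsto\overline{\BU(T_m)}(y_m)$, is continuous. The obstacle is that the quotient topology does not allow one to lift the curve $y_m$ to a continuous curve in $\BU(\Dzs)$, so the $m$-dependence of the isomorphisms cannot be absorbed naively. I would circumvent this using that the CCR-quotient map $\varpi$ is continuous and open, whence $\mathrm{id}_{\IR_+}\times\varpi$ is again open and continuous, hence a quotient map; by its universal property it suffices to prove the \emph{joint} continuity of $(m,g)\mapsto\varpi\big(S_m g\big)$ and $(m,g)\mapsto\varpi\big(\BU(T_m)g\big)$ on $\IR_+\times\BU(\Dzs)$. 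At the BU-level this is tractable because $S_m$ and $\BU(T_m)$ respect the grading and act on each homogeneous degree $n$ by the multilinear maps built from $S_m|_{1}$, respectively $T_m^{\otimes n}$, whose coefficients are the $m$-continuous one-particle data above; since every element of $\BU(\Dzs)$ has finite degree and the topology is the direct-sum topology, joint continuity of these finitely many multilinear operations with continuously varying coefficients gives the claim. Composing with the continuous curve $m\mapsto\big(\Lambda_m\comp\Psi_{\varphi_{m,j}}^{-1}\big)(O_m)$ (respectively its $\Sigma$-analogue) from Definition \ref{def:field_continuity_general} then completes the proof.
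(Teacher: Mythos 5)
Your overall architecture coincides with the paper's: you intertwine the two continuity maps by the same $m$-dependent isomorphisms (your $S_m$ is, up to sign conventions, the paper's translation automorphism $L_m$ built from the initial data of $\psi_m=\varphi_{m,j}-\varphi'_{m,j}$, and your $T_m$ with its BU-lift is the paper's data-propagation map $\tau_m$ and its lift), you identify them correctly on the quotients, and you correctly reduce the theorem to showing that post-composition with a weakly $m$-continuous family of homeomorphisms preserves continuity of curves into the quotient. Your reduction via openness of the canonical quotient map (so that $\mathrm{id}\times\varpi$ is again a quotient map) is valid and formally a little different from the paper, which only establishes joint continuity in $(m,m')$ of $L_m\eta(m')$ along the given curve.

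However, the step you yourself flag as the hard part contains a genuine gap. You assert that joint continuity of $(m,g)\mapsto S_m g$, respectively $(m,g)\mapsto \BU(T_m)g$, on $\IR_+\times\BU\big(\Dzs\big)$ follows because these maps act degreewise by ``multilinear operations with continuously varying coefficients'' and every element has finite degree. This does not suffice: on LF-type spaces separately continuous multilinear maps are not automatically jointly continuous, and on the locally convex direct sum a neighbourhood of a finite-degree element contains elements with nonzero components in \emph{all} degrees, so continuity at a point cannot be checked degree-by-degree with finitely many operations — one needs control uniform both in $m$ and across all degrees. The paper obtains precisely this via equicontinuity: weak continuity in $m$ gives pointwise boundedness, and the uniform boundedness principle upgrades this to equicontinuity, but only on a \emph{barrelled} space (Lemma \ref{lem:jointcontinuity}). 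Since the algebraic BU-algebra is not known to be barrelled, the paper extends the maps by Schwartz' Kernels Theorem to the \emph{completed} BU-algebra, proves that this completion is barrelled (Lemma \ref{lem:BU-algebra-barreled}), passes to the closed ideals $\overline{\ICCRS}$ and $\overline{\ICCRSP}$, and applies the joint continuity lemma in the barrelled quotients of the completions, into which the algebraic quotients embed densely. Your argument omits this entire mechanism (no completions, no barrelledness, no equicontinuity), so the asserted joint continuity is unsupported exactly where the analytic content of the theorem lies. A secondary inaccuracy: the weak continuity of $T_m$ does not follow from Corollary \ref{Cor:Emcont}, which concerns $E^\pm_mF$ and $G^\pm_mF$ for fixed \emph{test forms}; you need Theorem \ref{thm:Emcont} (continuity in the mass parameter of solutions with fixed \emph{initial data}) applied to the wave equation, combined with the $m$-dependent constraint extension of the Proca data — which the paper again handles with the joint continuity lemma, although here the linearity of the data-to-solution map and the explicit $m^{-2}$ factor in the constraint would also let you conclude directly.
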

\begin{proof}
In this proof we will make repeated use of a joint continuity lemma, which we state and prove as Lemma \ref{lem:jointcontinuity} in Appendix \ref{app:lemmata}. This lemma makes use of barrelled locally convex spaces, and we prove in Lemma \ref{lem:BU-algebra-barreled} that the complete BU-algebra is such a space.\par 
Let $\left\{ O_m\right\}_{m>0}$ be a family of operators with $O_m\in\AA_{m,j}$. We first verify the independence of the choice of Cauchy surface. For this we choose two Cauchy surfaces $\Sigma$, $\Sigma'$ and we consider the family of operators $O'_m \coloneqq \Psi_{\varphi_{m,j}}^{-1} (O_m)\in \AA_{m,0}$. It then suffices to prove that the continuity of $\Lambda^{(\Sigma)}_m(O'_m)$ implies the continuity of $\Lambda^{(\Sigma')}_m(O'_m)$, where we have made the dependence on the Cauchy surfaces explicit.\par 
Let us first consider the space of initial data on the Cauchy surface for the wave equation on one-forms, $\widetilde{\mathcal{D}}_0(\Sigma) \coloneqq \Omega^1_0(\Sigma)\oplus\Omega^1_0(\Sigma)\oplus\Omega^0_0(\Sigma)\oplus\Omega^0_0(\Sigma)$ and its analogue $\widetilde{\mathcal{D}}_0(\Sigma')$. For each $m$ we may define a continuous linear map $L_m:\widetilde{\mathcal{D}}_0(\Sigma)\to\widetilde{\mathcal{D}}_0(\Sigma')$, which propagates the initial data under the wave operator $\square+m^2$. By Theorem \ref{thm:Emcont}, $L_m$ is weakly continuous. \par 
Fixed initial data $\psi= (\varphi, \pi) \in \Dzs$ can be extended to initial data $\Psi_m\in\widetilde{\mathcal{D}}_0(\Sigma)$, using the constraint equations of Theorem \ref{thm:solution_proca_unconstrained} with $m>0$ and $j=0$. Note that $\Psi_m$ depends continuously on the mass $m$. Because $\widetilde{\mathcal{D}}_0(\Sigma)$ is a barrelled space (cf.~the proof of Lemma \ref{lem:BU-algebra-barreled}) we may apply Lemma \ref{lem:jointcontinuity} and conclude that $L_m\Psi_{m'}$ is jointly continuous in $(m,m')$ on $\IR_+\times\IR_+$. In particular, $m\mapsto L_m\Psi_m$ depends continuously on $m>0$. Consequently, the map $\tau_m : \Dzs \to \mathcal{D}_0(\Sigma')$, which propagates initial data for the Proca field of mass $m$, is also weakly continuous in $m>0$.\par 
We now extend this result as follows. For $1\le n\le N$ we consider the continuous linear map $T^{N,n}_m \coloneqq 1^{\otimes n-1}\otimes\tau_m\otimes 1^{\otimes N-n}$ on $\Gamma_0\big((\TsS\oplus\TsS)^{\boxtimes n}\boxtimes(\TsS'\oplus\TsS')^{\boxtimes N-n}\big)$, which may be defined using Schwartz' Kernels Theorem. One may extend the proof of Theorem \ref{thm:Emcont} and Corollary \ref{Cor:Emcont} to show that $T^{N,n}_m$ is also weakly continuous in $m>0$. We then define the map
$T^N_m \coloneqq T^{N,1}\comp T^{N,2}\comp \cdots \comp T^{N, N}$ which is again weakly continuous in $m>0$, by a repeated application of the joint continuity Lemma  \ref{lem:jointcontinuity}, using the fact that each of the spaces $\Gamma_0\big((\TsS\oplus\TsS)^{\boxtimes n}\boxtimes(\TsS'\oplus\TsS')^{\boxtimes N-n}\big)$ is barrelled. Let us now consider the lift of $\tau_m$ to a continuous linear map $T_m : \overline{\BU\big( \Dzs\big)}\to \overline{\BU\big( \mathcal{D}_0(\Sigma')\big)}$ between complete BU-algebras (using sections of the bundle $\TsS\oplus\TsS$ and its analogue on $\Sigma'$). Its action on $\Gamma_0\big((\TsS\oplus\TsS)^{\boxtimes n}\big)$ is simply given by $T^N_m$, which shows that $T_m$ is also weakly continuous in $m>0$. We note that each $T_m$ is a homeomorphism and that it maps the ideal $\ICCRS$ onto $\ICCRSP$. This means that it also maps the closed ideal $\overline{\ICCRS}$ onto $\overline{\ICCRSP}$ and it descends to a homeomorphism $\widetilde{T}_m$ between the quotient algebras ${\Quotientscale{\overline{\BU\big(\Dzs\big)}}{\;\overline{\ICCRS}}}$ and  ${\Quotientscale{\overline{\BU\big( \mathcal{D}_0(\Sigma')\big)}}{\; \overline{\ICCRSP}}}$. The weak continuity of $T_m$ in $m>0$ implies the weak continuity of $\widetilde{T}_m$ in $m>0$.\par 
The complete algebra $\overline{\BU\big(\Dzs\big)}$ is barrelled, as shown in Lemma \ref{lem:BU-algebra-barreled}, and hence so is the quotient ${\Quotientscale{\overline{\BU\big(\Dzs\big)}}{\;\overline{\ICCRS}}}$ \cite[Proposition 33.1]{Treves1967}. Furthermore, because the ideal $\ICCR$ is a closed subspace of $\BU\big(\Dzs\big)$ (cf.~Section \ref{sec:CCR}), the quotient ${\Quotientscale{\BU\big(\Dzs\big)}{\ICCRS}}$ is a dense subspace of ${\Quotientscale{\overline{\BU\big(\Dzs\big)}}{\;\overline{\ICCRS}}}$. On this subspace, $\widetilde{T}_m$ restricts to $\Lambda^{(\Sigma')}_{m} \comp \big( \Lambda^{(\Sigma)}_{m} \big)^{-1}$. Identifying
	\begin{equation}
	\Lambda^{(\Sigma')}_m(O'_m) = \widetilde{T}_m \Lambda^{(\Sigma)}_m(O'_m)
	\end{equation}
we may therefore use the assumed continuity of $\Lambda^{(\Sigma)}_m(O'_m)$ in $m>0$ and the known weak continuity of $\widetilde{T}_m$ together with Lemma \ref{lem:jointcontinuity} to find that $m\mapsto \Lambda^{(\Sigma')}_m(O'_m)$ is continuous in $m>0$. This proves the independence of the choice of $\Sigma$.\par 
%
We now turn to the independence of the choice of classical solutions. Let $\left\{\varphi_{m,j} \right\}_m$ and $\left\{\varphi'_{m,j} \right\}_m$ specify continuous families of classical solutions to the inhomogeneous Proca equation and fix a Cauchy surface $\Sigma$. We denote the initial data of $\varphi_{m,j}$ and $\varphi'_{m,j}$ by $(\phi_m,\pi_m)$ and $(\phi_m',\pi_m')$, respectively. For each $m>0$ we now define an algebra homeomorphism $L_m$ on the BU-algebra 
$\BU\big(\Dzs\big)$ by setting stipulating that $L_m$ preserves the unit and acts on homogeneous elements of degree 1 as
\begin{equation}
L_m\big(0,(\alpha,\beta),0,0,\ldots\big) \coloneqq \big(\mathcal{G}^{(\Sigma)}\big((\phi_m-\phi'_m,\pi_m-\pi'_m),(\alpha,\beta)\big),(\alpha,\beta),0,0,\ldots\big) \formspace.
\end{equation}
We can extend each $L_m$ in a unique way to a homeomorphism of the completed BU-algebra $\overline{\BU\big(\Dzs\big)}$, using Schwartz' Kernels Theorem. We denote the extended operator by the same symbol $L_m$. The action of $L_m$ on a homogeneous element $\psi^{(N)}$ of degree $N$, i.\,e. on a section $\psi^{(N)}\in\Gamma_0\big((\TsS\oplus\TsS)^{\boxtimes N}\big)$, can be written out explicitly as a sum of terms of degrees $\le N$. Because $\phi_m,\pi_m,\phi_m'$ and $\pi_m'$ depend continuously on $m>0$, so does the section $(\phi_m-\phi'_m)\oplus(\pi_m-\pi'_m)$ and also the sections $\big((\phi_m-\phi'_m)\oplus(\pi_m-\pi'_m)\big)^{\boxtimes n}$ for each\footnote{This may be shown by induction over $n\ge1$, e.\,g. using the joint continuity Lemma  \ref{lem:jointcontinuity} and noting that the linear map $\gamma\mapsto ((\phi_m-\phi'_m)\oplus(\pi_m-\pi'_m))\boxtimes \gamma$ is weakly continuous in $m>0$ for any section $\gamma$ of any vector bundle.} $n\ge 1$. It follows that the components of $L_m\psi^{(N)}$ also depend continuously on $m>0$. Thus we see that $L_m$ is weakly continuous in $m>0$.

Note that $L_m$ preserves the ideal $\ICCRS$ (just as in the proof of Theorem \ref{thm:field-algebra-homeomorphy}), and hence it also preserves the closed ideal $\overline{\ICCRS}$. The $L_m$ therefore descend to homeomorphisms $\tilde{L}_m$ of the quotient algebra ${\Quotientscale{\overline{\BU\big(\Dzs\big)}}{\ \overline{\ICCRS}}}$, and the weak continuity of $L_m$ in $m>0$ implies the weak continuity of $\tilde{L}_m$ in $m>0$. We note that $\tilde{L}_m$ preserves the dense subalgebra ${\Quotientscale{\BU\big(\Dzs\big)}{\ICCRS}}$ and one may verify directly from Theorem \ref{thm:solution_proca_unconstrained} and the definitions of the relevant maps that $\tilde{L}_m$ acts on this subalgebra as
\begin{equation}
\Lambda_m \comp  \Psi_{\varphi'_{m,j}}^{-1} \comp  \Psi_{\varphi'_{m,j}} \comp \Lambda_m ^{-1}.
\end{equation}
If $\Lambda_m \comp  \Psi_{\varphi_{m,j}}^{-1} (O_m)$ depends continuously on $m>0$, then so does
\begin{equation}
\Lambda_m \comp  \Psi_{\varphi'_{m,j}}^{-1} (O_m) = \tilde{L}_m \comp \Lambda_m \comp  \Psi_{\varphi_{m,j}}^{-1} (O_m)
\end{equation}
by the joint continuity Lemma \ref{lem:jointcontinuity}.
\end{proof}
\subsection{The classical case}\label{sec:zero-mass-limit-classical}
For fixed initial data $\Az,\Ad \in \Omega^1(\Sigma)$ on a fixed Cauchy surface $\Sigma$ there is a family of solutions $A_{m,j}$ to the Proca equation of mass $m>0$ with source term $j\in\Omega^1_0(\M)$. We have seen in Theorem \ref{thm:solution_proca_unconstrained} that these solutions take the form
\begin{equation}\label{eqn:recallsoln}
\langle A_{m,j} , F \rangle_\M = \sum\limits_\pm \langle j , G_m^\mp F   \rangle_{J^\pm(\Sigma)} +\langle \Az , \rhod G_m F \rangle_\Sigma
		- \langle \Ad , \rhoz G_m F \rangle_\Sigma
\end{equation}
for any fixed $F \in \Omega^1_0(\M)$.

We may think of $F$ as the mathematical representation of an experimental setup which measures the field configuration $A$ through the pairing $\langle A,F\rangle_{\M}$ and we wish to investigate for which $F$, if any, we can take the limit $m\to0$ in Equation (\ref{eqn:recallsoln}) above for all choices of $\Sigma$ and all initial data $\Az,\Ad$.
\begin{lemma}\label{lem:limit_existence_classical_equivalence}
For fixed $F \in \Omega^1_0(\M)$, the limit $m\to 0$ of the right-hand side of Equation (\ref{eqn:recallsoln}) exists for all smooth space-like Cauchy surfaces $\Sigma$ and all initial data $\Az,\Ad\in \Omega^1(\Sigma)$, if and only if $F=F'+F''$ with $F'\in\Omega^1_{0,\delta}(\M)$ and
$F''\in\Omega^1_{0,d}(\M)$ such that $\langle j,F''\rangle_{\M}=0$.
\end{lemma}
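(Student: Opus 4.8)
The plan is to reduce the right-hand side of (\ref{eqn:recallsoln}) to the sum of its three displayed terms and to expose the factors of $m^{-2}$ hidden inside $G_m$. Two computations organise the whole argument. First, as already noted in the proof of Theorem \ref{thm:solution_proca_unconstrained}, $\rhod G_m = \rhod E_m$, so the term $\langle \Az,\rhod G_m F\rangle_\Sigma=\langle \Az,\rhod E_m F\rangle_\Sigma$ is continuous in $m$ up to and including $m=0$ by Corollary \ref{Cor:Emcont}; it never obstructs the limit. Second, using $G_m^\pm=(m^{-2}d\delta+1)E_m^\pm$ together with $(\square+m^2)E_m^\pm=\mathrm{id}$ and $\square=d\delta+\delta d$ one finds the identity
\begin{equation}
G_m^\pm F = m^{-2}\big(F-\delta E_m^\pm\, dF\big)\formspace,
\end{equation}
from which $\rhoz G_m F=i^*(G_m^--G_m^+)F=-m^{-2}\,i^*\delta E_m\,dF$. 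Hence only the content of $F$ carried by $dF$ can produce singular behaviour, and the two remaining terms to control are $\langle \Ad,\rhoz G_m F\rangle_\Sigma$ and $\sum_\pm\langle j,G_m^\mp F\rangle_{J^\pm(\Sigma)}$.

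For the \emph{if} direction I would use linearity and treat the two pieces separately. If $F'\in\Omega^1_{0,\delta}(\M)$ then $\delta F'=0$ makes the $d\delta$-term in $(m^{-2}d\delta+1)E_m^\pm$ vanish, so $G_m^\pm F'=E_m^\pm F'$ and all three terms built from $F'$ are continuous in $m$ by Corollary \ref{Cor:Emcont}. If $F''\in\Omega^1_{0,d}(\M)$ then $dF''=0$ in the identity above gives $G_m^\pm F''=m^{-2}F''$, so $G_mF''=0$ and $dF''=0$ make both initial-data terms vanish, while $\sum_\pm\langle j,G_m^\mp F''\rangle_{J^\pm(\Sigma)}=m^{-2}\langle j,F''\rangle_\M=0$ by hypothesis. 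Thus the full right-hand side for $F=F'+F''$ converges.

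For the \emph{only if} direction I assume the limit exists for all $\Sigma$, $\Az$ and $\Ad$. Since the $\Az$-term is harmless and the $\Ad$-term is linear in $\Ad$, convergence for all $\Ad$ forces the $\Ad$-term, and hence the remaining $j$-term, to converge separately. To analyse them I expand $E_m^\pm=E_0^\pm-m^2E_m^\pm E_0^\pm$ on the extended (future/past compact) domains used in Lemma \ref{lem:one-particle-homeomorphism}, so that $\delta E_m^\pm dF=\delta E_0^\pm dF+O(m^2)$ with the $m^{-2}$-rescaled remainder convergent. This yields $\rhoz G_m F=-m^{-2}\,i^*\delta E_0\,dF+(\text{convergent})$; pairing with a suitable $\Ad$ shows the limit exists on a given $\Sigma$ only if $i^*\delta E_0\,dF=0$ there, and letting $\Sigma$ range over all Cauchy surfaces (whose tangent planes exhaust the spacelike directions) forces $\delta E_0\,dF=0$ on $\M$. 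I then set $\gamma:=\delta E_0^-dF=\delta E_0^+dF$, which is co-closed ($\delta\gamma=0$), compactly supported (its support lies in $J^-(\supp{dF})\cap J^+(\supp{dF})$), and satisfies $d\gamma=d\delta E_0^\pm dF=\square E_0^\pm dF-\delta d E_0^\pm dF=dF$, so that $F-\gamma$ is closed. The same expansion turns the $j$-term into $m^{-2}\big[\langle j,F\rangle_\M-\langle j,\delta E_0^-dF\rangle_{J^+(\Sigma)}-\langle j,\delta E_0^+dF\rangle_{J^-(\Sigma)}\big]+(\text{convergent})$, which by $\delta E_0^\mp dF=\gamma$ equals $m^{-2}\langle j,F-\gamma\rangle_\M+(\text{convergent})$; convergence forces $\langle j,F-\gamma\rangle_\M=0$. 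Writing $F=\gamma+(F-\gamma)$ with $F'=\gamma\in\Omega^1_{0,\delta}(\M)$ and $F''=F-\gamma\in\Omega^1_{0,d}(\M)$ then gives exactly the claimed decomposition.

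The main obstacle is this expansion step: justifying $E_m^\pm=E_0^\pm-m^2E_m^\pm E_0^\pm$ and, crucially, the convergence of the $m^{-2}$-rescaled remainders. This requires extending the fundamental solutions $E_m^\pm$ to forms with future/past compact support and a continuity-in-$m$ statement for them on such supports, which is the natural strengthening of Corollary \ref{Cor:Emcont} and Theorem \ref{thm:Emcont}. Once this is in hand, all other steps are routine manipulations with $d$, $\delta$ and the pullback $i^*$.
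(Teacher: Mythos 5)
Your decomposition and endpoints agree exactly with the paper's: your $\gamma=\delta E_0^{\pm}dF=E_0^{\pm}\delta dF$ is precisely the paper's $F'$, your verification that $\gamma$ is compactly supported, co-closed and satisfies $d\gamma=dF$ matches, and your \emph{if} direction (via $G_m^{\pm}F'=E_m^{\pm}F'$ and $G_m^{\pm}F''=m^{-2}F''$) is essentially the paper's computation. However, the \emph{only if} direction as written has a genuine gap, which you yourself flag: both of its key deductions — extracting the singular part $-m^{-2}i^*\delta E_0\,dF$ of $\rhoz G_mF$, and isolating $m^{-2}\langle j,F-\gamma\rangle_{\M}$ in the source term — rest on the expansion $E_m^{\pm}=E_0^{\pm}-m^2E_m^{\pm}E_0^{\pm}$ on past/future-compactly supported forms together with continuity in $m$ of the \emph{extended} propagators, and neither is available in the paper's toolkit (Theorem \ref{thm:Emcont} and Corollary \ref{Cor:Emcont} concern fixed compactly supported sources only). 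The gap is fillable by localization: to evaluate $E_m^{\mp}E_0^{\mp}dF$ on a compact set $C$, the relevant region $J^{\pm}(C)\cap J^{\mp}(\supp{dF})$ is compact in a globally hyperbolic spacetime, so a cutoff reduces the claim to Corollary \ref{Cor:Emcont}; but as it stands this is an unproven lemma on which your entire forward direction hangs.

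The paper's proof shows the expansion is unnecessary altogether. From the existence of $\lim_{m\to0}\rhoz G_mF$ it writes $G_mF=m^{-2}E_m d\delta F+E_mF$ — note $d\delta F$ \emph{is} compactly supported, so only the standard Corollary \ref{Cor:Emcont} is needed — and multiplies by $m^2$:
\begin{equation}
\rhoz E_0 d\delta F=\lim_{m\to0}m^2\,\rhoz\big(G_mF-E_mF\big)
=\Big(\lim_{m\to0}m^2\Big)\cdot\Big(\lim_{m\to0}\rhoz G_mF-\rhoz E_0F\Big)=0 \formspace,
\end{equation}
and varying $\Sigma$ gives $E_0d\delta F=0$, which is equivalent to your $\delta E_0\,dF=0$ since $E_0\square F=0$. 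Having \emph{then} defined $F',F''$, the exact, remainder-free identity $G_m^{\pm}F=E_m^{\pm}F'+m^{-2}F''$ disposes of the $j$-term in one line. Indeed, even within your own scheme the second use of the expansion is avoidable: once $\gamma$ is constructed, $G_m^{\mp}F=E_m^{\mp}\gamma+m^{-2}(F-\gamma)$ holds exactly for all $m>0$. So the skeleton of your argument is sound and lands on the same decomposition, but the expansion step must either be proven via the localization sketched above or, more economically, replaced by the $m^2$-multiplication trick, which never leaves the class of compactly supported sources.
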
	
\begin{proof}
Suppose that for a given $F \in \Omega^1_0(\M)$ the right-hand side of Equation (\ref{eqn:recallsoln}) converges as $m\to 0$ for all smooth space-like Cauchy surfaces $\Sigma$ and all $\Az,\Ad\in \Omega^1(\Sigma)$. Because we can vary the initial data arbitrarily and independently, all three terms in Equation (\ref{eqn:recallsoln}) must converge separately. In particular, $\lim_{m\to0}\rhoz G_mF$ must exist in a distributional sense.
Recall that $G_mF=m^{-2}E_md\delta F+E_mF$, where the second term is in $\Omega^1(\M)$ and depends continuously on $m\in\mathbb{R}$ by Corollary \ref{Cor:Emcont}. It then follows from the same corollary and from the continuity and linearity of $\rhoz$ that
    \begin{align}
    \rhoz E_0d\delta F &= \lim_{m\to0} \rhoz E_md\delta F\notag\\
    &= \lim_{m\to0} m^2 \rhoz \left(G_mF-E_mF\right)\notag\\
    &= \lim_{m\to0} m^2 \cdot \left(\lim_{m\to0}\rhoz G_mF - \rhoz E_0F\right)=0,
    \end{align}
where we used the existence of the limit of $\rhoz G_mF$. Because this holds on every Cauchy surface, the one-form $E_0d\delta F$ must annihilate every space-like vector at every point. Because all tangent vectors are linear combinations of space-like vectors we conclude that $E_0d\delta F=0$ and hence also $E_0\delta dF=E_0(\delta d+d\delta)F=0$. We may then define $F' \coloneqq E_0^+\delta dF=E_0^-\delta dF$ and 
$F'' \coloneqq E_0^+d\delta F=E_0^-d\delta F$ and note that these have compact supports. Furthermore, since $\delta$ and $d$ intertwine with $E_0^+$ on forms,  $\delta F'=0=dF''$ and
	\begin{equation}
	F' + F '' = E_0^+ (d \delta + \delta d) F = F \formspace.
	\end{equation}
Combining this formula with $G_m^\pm=E_m^\pm(m^{-2}d\delta+1)$ we find
    \begin{align}\label{eqn:GmF}
    G_m^\pm F &=E_m^\pm F' + m^{-2}E_m^\pm (d\delta+\delta d+m^2)F''\notag\\
    &=E_m^\pm F' + m^{-2}F''.
    \end{align}
Substituting this in the first term of Equation (\ref{eqn:recallsoln}) we see that
    \begin{equation}\label{eqn:jterm}
    \sum\limits_\pm \langle j , G_m^\mp F   \rangle_{J^\pm(\Sigma)}
     = \sum\limits_\pm \langle j , E_m^\pm F' \rangle_{J^\pm(\Sigma)} + m^{-2} \langle j , F'' \rangle_{\M}
    \end{equation}
must converge as $m\to0$. The terms in the first sum converge as $m\to0$ by Corollary \ref{Cor:Emcont}, and hence the last term must also converge. This clearly implies $\langle j , F'' \rangle_{\M}=0$, showing that $F$ must have the stated form.

Conversely, when $F=F'+F''$ with $\delta F'=0=dF''$ and $\langle j , F'' \rangle_{\M}=0$, then it follows from Equation (\ref{eqn:GmF}) that $G_mF=E_mF'$, which has a limit as $m\to0$. Together with Equation (\ref{eqn:jterm}) and the continuity of $\rhod$ and $\rhoz$ it follows that the right-side of Equation (\ref{eqn:recallsoln}) converges as $m\to0$.
\end{proof}
Note that $F'$ and $F''$ are uniquely determined by $F=F'+F''$ and $\delta F'=0=dF$, because $\Omega^1_{0,\delta}(\M){\hspace{0.1em}\cap\hspace{0.1em}}\Omega^1_{0,d}(\M)=\{0\}$. Indeed, if $\tilde{F}\in \Omega^1_0(\M)$ satisfies $d\tilde{F}=\delta\tilde{F}=0$, then also $\square \tilde{F}=0$ and hence $\tilde{F}=0$ by \cite[Corollary 3.2.4]{BaerGinouxPfaeffle2007}.\par 
For a fixed $m>0$ and $j\in\Omega^1(\M)$ there are $F\in\Omega^1_0(\M)$ which define trivial observables in the sense that
$\langle A_{m,j},F\rangle_{\M}=0$ for all field configurations (i.\,e. for all initial data). The following lemma characterizes them:
\begin{lemma}\label{lem:trivialobservables}
For fixed $m>0$ and $j\in\Omega^1(\M)$, $F\in\Omega^1_0(\M)$ defines a trivial observable if and only if $F=(\delta d+m^2) \tilde{F}$ for some $\tilde{F}\in\Omega^1_0(\M)$ with $\langle j,\tilde{F}\rangle_{\M}=0$.
\end{lemma}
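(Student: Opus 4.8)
The plan is to read off everything from the explicit solution formula (\ref{eqn:recallsoln}) of Theorem \ref{thm:solution_proca_unconstrained}, regarding $\langle A_{m,j},F\rangle_{\M}$ as an affine function of the initial data $(\Az,\Ad)$ on a fixed Cauchy surface $\Sigma$. For fixed $F$ this function consists of a data-independent source term $\sum_\pm\langle j,G_m^\mp F\rangle_{J^\pm(\Sigma)}$ together with the two data-linear terms $\langle \Az,\rhod G_mF\rangle_\Sigma$ and $\langle \Ad,\rhoz G_mF\rangle_\Sigma$. That $F$ defines a trivial observable means this affine function vanishes identically as $(\Az,\Ad)$ ranges over $\Omega^1(\Sigma)\oplus\Omega^1(\Sigma)$. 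Setting the data to zero isolates the source term, and the surviving linear functional must then vanish for all data; since $G_mF$ has compact support on $\Sigma$, both $\rhod G_mF$ and $\rhoz G_mF$ lie in $\Omega^1_0(\Sigma)$, so the non-degeneracy of the Riemannian $L^2$-pairing $\langle\PH,\PH\rangle_\Sigma$ forces the two linear terms to vanish separately. First I would therefore record the three conditions $\rhoz G_mF=0$, $\rhod G_mF=0$ and $\sum_\pm\langle j,G_m^\mp F\rangle_{J^\pm(\Sigma)}=0$.

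Next I would identify the first two conditions with membership in the kernel of $\kappa_m$. By definition $\kappa_m(F)=(\rhoz G_mF,\rhod G_mF)$, so those two conditions are exactly $\kappa_m(F)=0$. Invoking (\ref{eqn:ker(kappa_m)}), which identifies $\KERN{\kappa_m}=(\delta d+m^2)\Omega^1_0(\M)$ (equivalently Lemma \ref{lem:dynamics}), this yields $F=(\delta d+m^2)\tilde F$ for some $\tilde F\in\Omega^1_0(\M)$. This already produces the stated form of $F$; it then remains to convert the surviving source condition into the claim $\langle j,\tilde F\rangle_{\M}=0$.

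For the source term I would use that $G_m^\pm$ are fundamental solutions of the Proca operator, so that $G_m^\mp F=G_m^\mp(\delta d+m^2)\tilde F=\tilde F$ for both choices of sign. The sum then collapses to $\langle j,\tilde F\rangle_{J^+(\Sigma)}+\langle j,\tilde F\rangle_{J^-(\Sigma)}$, and since $J^+(\Sigma)\cup J^-(\Sigma)=\M$ with the overlap $\Sigma$ carrying no volume, this equals $\langle j,\tilde F\rangle_{\M}$. Hence the vanishing of the source term is precisely $\langle j,\tilde F\rangle_{\M}=0$, which completes the ``only if'' direction.

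The converse is the same chain run in reverse: if $F=(\delta d+m^2)\tilde F$ with $\langle j,\tilde F\rangle_{\M}=0$, then $G_mF=0$ by Lemma \ref{lem:dynamics}, so $\kappa_m(F)=0$ and both data-linear terms in (\ref{eqn:recallsoln}) vanish, while the computation above shows the source term equals $\langle j,\tilde F\rangle_{\M}=0$; thus $\langle A_{m,j},F\rangle_{\M}=0$ for every choice of initial data. The one point demanding care---the ``main obstacle'', though it is mild---is the bookkeeping of the two half-space integrals: one must confirm that $G_m^\mp F=\tilde F$ holds over both $J^+(\Sigma)$ and $J^-(\Sigma)$ and that the two integrals reassemble into the single integral over $\M$, without double-counting the overlap or leaving a boundary contribution on $\Sigma$.
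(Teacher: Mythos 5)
Your proposal is correct and follows essentially the same route as the paper's proof: varying the initial data in Equation (\ref{eqn:recallsoln}) to force the three terms to vanish separately, identifying $\kappa_m(F)=0$ with $F\in(\delta d+m^2)\Omega^1_0(\M)$ via Lemma \ref{lem:dynamics} and Equation (\ref{eqn:ker(kappa_m)}), and collapsing the source term to $\langle j,\tilde F\rangle_{\M}$ using $G_m^\mp F=\tilde F$. You merely spell out details the paper leaves implicit (non-degeneracy of the pairing on $\Sigma$, compactness of the restricted data, and the reassembly of the two half-space integrals), all of which check out.
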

\begin{proof}
Arguing as in the proof of Lemma \ref{lem:limit_existence_classical_equivalence} we see that $F$ defines a trivial observable if and only if $G_mF=0$ and $\langle j,G_m^+F\rangle_{\M}=0$. The first condition is equivalent to $F=(\delta d+m^2)\tilde{F}$ for some $\tilde{F}\in\Omega^1_0(\M)$ by Lemma \ref{lem:dynamics}. The second condition then means that $\langle j,\tilde{F}\rangle_{\M}=0$.
\end{proof}
For any fixed $m$ and $j$ one would normally divide out these trivial observables, because they are redundant. For our purposes, however, this is rather awkward, because the space of trivial observables depends on $m$ and $j$. However, we can remove some of the redundancy in the following way:
\begin{theorem}[Existence of the zero mass limit]\label{thm:limit_existence_classical}
Fix $j\in\Omega^1(\M)$. For $F\in\Omega^1_0(\M)$, Equation (\ref{eqn:recallsoln}) admits a massless limit for all initial data on all Cauchy surfaces if and only if
there is a $F'\in\Omega^1_{0,\delta}(\M)$ such that $F-F'$ is a trivial observable for all $m>0$.
\end{theorem}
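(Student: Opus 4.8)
The plan is to read this statement as a repackaging of the two preceding lemmas. Lemma~\ref{lem:limit_existence_classical_equivalence} already characterises exactly when the limit of the right-hand side of~(\ref{eqn:recallsoln}) exists, namely when $F=F'+F''$ with $F'\in\Omega^1_{0,\delta}(\M)$, $F''\in\Omega^1_{0,d}(\M)$ and $\langle j,F''\rangle_\M=0$, while Lemma~\ref{lem:trivialobservables} characterises the trivial observables. I would therefore establish the equivalence by showing that this decomposition condition is the same as the existence of a co-closed $F'$ with $F-F'$ trivial for every $m>0$, proving the two implications separately.

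For the forward direction I would assume the massless limit exists and invoke Lemma~\ref{lem:limit_existence_classical_equivalence} to obtain the decomposition $F=F'+F''$ with $F'$ co-closed, $F''$ closed and $\langle j,F''\rangle_\M=0$. I would then take this very $F'\in\Omega^1_{0,\delta}(\M)$ and verify that $F-F'=F''$ is a trivial observable for all $m>0$ by checking the criterion of Lemma~\ref{lem:trivialobservables}. The witness is $\tilde F_m\coloneqq m^{-2}F''\in\Omega^1_0(\M)$: since $dF''=0$ one has $(\delta d+m^2)\tilde F_m=m^{-2}\delta dF''+F''=F''$, and $\langle j,\tilde F_m\rangle_\M=m^{-2}\langle j,F''\rangle_\M=0$. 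Thus a single co-closed $F'$ works for all $m>0$, with only the auxiliary form $\tilde F_m$ depending on the mass, which is exactly the condition required by the theorem.

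For the reverse direction I would use that the right-hand side of~(\ref{eqn:recallsoln}) is linear in $F$. If $F-F'$ is trivial for all $m>0$ with $F'\in\Omega^1_{0,\delta}(\M)$, then by definition $\langle A_{m,j},F-F'\rangle_\M=0$ for all initial data on every Cauchy surface, so the expression~(\ref{eqn:recallsoln}) evaluated at $F$ coincides with the same expression evaluated at $F'$, for every $m>0$ and every choice of data. Since $F'$ is co-closed it admits the trivial decomposition $F'=F'+0$, whose closed part $0$ satisfies $\langle j,0\rangle_\M=0$, so Lemma~\ref{lem:limit_existence_classical_equivalence} guarantees that the limit $m\to0$ of~(\ref{eqn:recallsoln}) at $F'$, and hence at $F$, exists for all data on all Cauchy surfaces.

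The computations here are light, so the main point to handle carefully is not an estimate but the matching of quantifiers between the two lemmas: I must keep the single co-closed form $F'$ fixed once and for all while permitting the auxiliary forms $\tilde F_m$ to vary with $m$, and I must ensure that the phrase ``for all field configurations'' defining triviality in Lemma~\ref{lem:trivialobservables} aligns with the ``for all initial data on all Cauchy surfaces'' appearing in the limit statement and in Lemma~\ref{lem:limit_existence_classical_equivalence}. These agree because a field configuration is nothing but a solution of Proca's equation, parametrised bijectively by initial data on any chosen Cauchy surface, so triviality is independent of that choice and feeds directly into both lemmas.
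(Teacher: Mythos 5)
Your proof is correct and follows essentially the same route as the paper: both deduce the decomposition $F=F'+F''$ from Lemma \ref{lem:limit_existence_classical_equivalence} and then verify that $F''$ is a trivial observable for every $m>0$ via Lemma \ref{lem:trivialobservables} --- the paper checks $G_mF''=0$ and $\langle j,G_m^+F''\rangle_\M=0$ using Equation (\ref{eqn:GmF}), which amounts precisely to your explicit witness $\tilde F_m=m^{-2}F''$. Your write-up additionally spells out the converse direction and the matching of the ``all initial data on all Cauchy surfaces'' quantifiers, both of which the paper leaves implicit.
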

\begin{proof}
It follows from Lemma \ref{lem:limit_existence_classical_equivalence} that $F=F'+F''$ with $F' \in \Omega^1_{0,\delta}(\M)$ and $F''\in\Omega^1_{0,d}(\M)$ such that $\langle j, F''\rangle_{\M}=0$. From Equation (\ref{eqn:GmF}) we see that for all $m>0$ it holds $G_mF''=0$ and $\langle j,G_m^+F''\rangle_{\M}=m^{-2}\langle j,F''\rangle_{\M}=0$, so $F''=F-F'$ defines a trivial observable for all $m>0$ by Lemma \ref{lem:trivialobservables} and its proof.
\end{proof}
%
In other words, for the massless limit it suffices\footnote{It is unclear if there is any remaining redundancy.}  to consider all co-closed forms $\Omega^1_{0,\delta}(\M)$. The meaning of this can be quite easily understood under the duality $\langle {\PH} , {\PH} \rangle_\M$. One finds that $\Quotientscale{\mathcal{D}^1(\M)}{d\mathcal{D}^{0}(\M)}$ is dual to $\Omega^1_{0,\delta}(\M)$ (see \cite[Section 3.1]{SandersDappiaggiHack2014}). Here, $\mathcal{D}^1(\M)$ denotes the set of distributional one-forms (in a physical sense, these are classical vector potentials), so restricting to co-closed test one-forms is equivalent to implementing the gauge equivalence $A \to A + d\chi$, for $A \in \mathcal{D}^1(\M)$ and $ \chi \in \mathcal{D}^0(\M)$ in the theory. This dual relation is easily checked for $A' = A + d\chi$ and $F \in \Omega^1_{0,\delta}(\M)$
\begin{align}
\langle A', F \rangle_{\M}
&= \langle A, F \rangle_{\M} + \langle d\chi, F \rangle_{\M} \notag\\
&= \langle A, F \rangle_{\M} + \langle \chi, \delta F \rangle_{\M} = \langle A, F \rangle_{\M} \formspace.
\end{align}
This is a nice result, because it elucidates the gauge equivalence in the Maxwell theory. Note that it is a priori unclear how to implement the gauge equivalence in Maxwell's theory on curved spacetimes due to the non-trivial topology. Maxwell's equation $\delta d A = 0$ suggests that two solutions that differ by a closed one-form give rise to the same configuration, but one can argue that only exact one-forms should be treated as pure gauge solutions, because the Aharonov-Bohm effect \emph{does} distinguish between configurations that differ by a form that is closed but not exact \cite{SandersDappiaggiHack2014}. It is gratifying to see that we arrive at a gauge equivalence given by the class of exact forms, simply by keeping the set of linear observables as large as possible in the limit, i.\,e.~$\Omega^1_{0,\delta}(\M)$.

Hence, we have already captured one important feature of the Maxwell theory in the massless limit of the Proca theory! It remains to check whether also the dynamics are well behaved in the massless limit.
\subsubsection{Dynamics and the zero mass limit}\label{sec:limit_dynamics_classical}

In the massless limit one may hope to find a vector potential $A_{0,j}$ satisfying Maxwell's equations $\delta dA_{0,j}=j$ at least in a distributional sense, i.\,e.
$\langle A_{0,j} , \delta d F \rangle_\M=\langle j , \delta d F \rangle_\M$ for every test one-form $F\in\Omega^1_0(\M)$. Note that $\delta dF$ is co-closed, so by Theorem \ref{thm:limit_existence_classical} we may substitute $\tilde{F}=\delta dF$ in the limit 
\begin{align}
\langle A_{0,j} , \tilde{F} \rangle_\M
&\coloneqq  \lim\limits_{m \to 0} \langle A_{m,j}, \tilde{F} \rangle_\M \\
&=  \lim\limits_{m \to 0}\Big( \sum\limits_\pm \langle j , G_m^\mp \tilde{F}   \rangle_{J^{\pm}(\Sigma)} +\langle \Az , \rhod G_m \tilde{F} \rangle_\Sigma
- \langle \Ad , \rhoz G_m \tilde{F} \rangle_\Sigma \Big) \formspace\notag
\end{align}
for any given initial data $\Az,\Ad$ on any Cauchy surface $\Sigma$. However, using
\begin{align}
\lim_{m\to0}G_m^{\pm}\delta dF
&=\lim_{m\to0}E_m^{\pm}\delta dF= E_0^{\pm}\delta dF\notag\\
&=F - E_0^{\pm}d\delta F
\end{align}
we only find
\begin{align}\label{eqn:dynamics_limit_classical_unconstraint}
\langle A_{0,j} , \delta d F \rangle_\M&=
 \sum\limits_\pm \langle j , F-E_0^{\mp}d \delta F\rangle_{J^\pm(\Sigma)} - \langle \Az , \rhod E_0 d\delta F \rangle_\Sigma
+ \langle \Ad , \rhoz E_0 d\delta F \rangle_\Sigma\notag\\
&=
\langle j , F\rangle_{\M}
- \sum\limits_\pm \langle j , E_0^{\mp}d \delta F\rangle_{J^\pm(\Sigma)}
+ \langle \Ad , \rhoz E_0 d\delta F \rangle_\Sigma,\formspace
\end{align}
where we used the fact that $\rhod E_m d\delta  F = - *_{(\Sigma)}i^* * d E_m d \delta F = 0$ since $d$ and $E_m$ commute.
The second term in Equation (\ref{eqn:dynamics_limit_classical_unconstraint}) will not vanish in general (e.\,g. when $dF=0$ but $\langle j,F\rangle_{\M}\not=0$). Ergo, the fields $A_{0,j}$ defined as the zero mass limit of the Proca field $A_{m,j}$ will not fulfill Maxwell's equation in a distributional sense. While this might seem surprising at first, it is quite easy to understand when we recall how we have found solutions to Proca's equation, using the massive wave equation (\ref{eqn:classical_wave_eqation}) combined with constraint equations on the initial data to ensure that the Lorenz constraint (\ref{eqn:classical_constraint}) is fulfilled. Similarly, one solves Maxwell's equation by specifying a solution to the massless wave equation $(\delta d + d \delta )A_{0,j} = j$ and restricting the initial data such that the Lorenz constraint $\delta A_{0,j} = 0$ is fulfilled. The problem in the massless limit lies with the constraints. Recall from Theorem \ref{thm:solution_proca_unconstrained} that, in order to implement the Lorenz constraint, we have restricted the initial data by
\begin{equation}
\Adelta = m^{-2}\rhodelta j \formspace , \quad \textrm{and}  \quad
\An = m^{-2}\left( \rhon j  + \delta_{(\Sigma)} \Ad \right) \formspace.
\end{equation}
It is obvious that, in general, the resulting $\Adelta$ and $\An$ diverge in the zero mass limit, so there is no corresponding solution to Maxwell's equations with the same initial data. In order to keep the dynamics in the zero mass limit, we need to make sure that the constraints are well behaved in the limit. Since we do not want the external source or the initial data to be dependent of the mass, we have to require that $\Adelta$ and $\An$ vanish, i.\,e. we need to specify\footnote{The first equation follows from $\rhodelta j=0$ on all Cauchy surfaces.}
\begin{align}
\delta j &= 0 \formspace , \quad \textrm{and} \\
\delta_{(\Sigma)} \Ad &= -\rhon j \formspace.
\end{align}
 This corresponds exactly to the constraints on the initial data for the Maxwell equation which implement the Lorenz gauge (cf. Pfenning \cite[Theorem 2.11]{Pfenning2009}). With these constraints, we can now look at the remaining term of $\langle A_{0,j} , \delta d F\rangle_\M$ in Equation (\ref{eqn:dynamics_limit_classical_unconstraint}). We do this separately for the two summands. Using that $d$ commutes with pullbacks and inserting the constraints on the initial data, we find
\begin{align}
\langle \Ad , \rhoz E_0 d\delta  F \rangle_\Sigma
&= \langle \Ad , d_{(\Sigma)} \rhoz E_0 \delta  F \rangle_\Sigma \notag\\
&=	\langle \delta_{(\Sigma)}\Ad ,  \rhoz E_0 \delta  F \rangle_\Sigma \notag\\
&= -\langle \rhon j ,  \rhoz E_0 \delta  F \rangle_\Sigma 
\end{align}
For the first summand $\sum_\pm \langle j ,  E_0^\mp d\delta F   \rangle_{\Sigma^\pm}$ we use the partial integration in Equation (\ref{eqn:partialintegration}) in the proof of Theorem \ref{thm:solution_proca_unconstrained} and find, using $m=0$ and the constraint $\delta j = 0$ as specified above,
\begin{align}
\sum_\pm \langle j ,  E_0^\mp d\delta F   \rangle_{J^\pm(\Sigma)}
&= \sum_\pm \langle d \delta j ,  E_0^\mp  F   \rangle_{J^\pm(\Sigma)} + 
\langle \rhodelta E_0F,\rhon j\rangle_\Sigma - \langle \rhodelta j,\rhon E_0F\rangle_\Sigma\notag\\
&= \langle \rhoz E_0\delta F,\rhon j\rangle_\Sigma  \formspace.
\end{align}
Using the symmetry of the inner product $\langle {\PH},{\PH}\rangle_{\M}$ we find that the remaining terms of Equation (\ref{eqn:dynamics_limit_classical_unconstraint}) cancel when restricting the initial data such that they are well defined in the zero mass limit. We therefore obtain the correct dynamics in that case:
\begin{align}
\langle A_{0,j} , \delta d F \rangle_\M
&=  \langle j , F \rangle_\M - \lim\limits_{m \to 0}\Big(\sum\limits_\pm \langle j ,  E_m^\mp d\delta F   \rangle_{J^\pm(\Sigma)}
- \langle \Ad , \rhoz E_m d\delta  F \rangle_\Sigma \Big) \notag\\
&= \langle j , F \rangle_\M \formspace.
\end{align}
In combination with Theorem \ref{thm:limit_existence_classical} we have thus shown
\begin{theorem}[The zero mass limit of the Proca field]
	Let $F\in \Omega^1_0(\M)$ be a test one-form and $j \in \Omega^1(\M)$ an external current.
	Let $A_{m,j}$ be the solution to Proca's equation specified by initial data $\Az, \Ad \in \Omega^1_0(\Sigma)$ via Theorem \ref{thm:solution_proca_unconstrained}.	\par
	Defining the zero mass limit $\langle A_{0,j} , F \rangle_\M = \lim_{m \to 0} \langle A_{m,j}, F \rangle_\M$ of the Proca field, the following holds:
	\begin{enumerate}
		\item The limit exists if and only if $F$ is equivalent to an observable $F'$ (for all $m>0$) with $\delta F' = 0$, effectively implementing the gauge equivalence of the Maxwell theory.
		\item The field $A_{0,j}$ is a Maxwell field, that is, it solves Maxwell's equation, if and only if the current is conserved, $\delta j = 0$, and $\rhon j = - \delta_{(\Sigma)} \Ad$, implementing the Lorenz gauge.
	\end{enumerate}
\end{theorem}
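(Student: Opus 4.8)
The plan is to obtain this theorem by assembling results already established in this section, since each of its two claims is a repackaging of an earlier statement. Claim (i) is precisely the content of Theorem \ref{thm:limit_existence_classical}: the limit $\langle A_{0,j},F\rangle_\M=\lim_{m\to0}\langle A_{m,j},F\rangle_\M$ exists for all initial data on all Cauchy surfaces if and only if there is an $F'\in\Omega^1_{0,\delta}(\M)$, i.e. with $\delta F'=0$, such that $F-F'$ is a trivial observable for every $m>0$ (in the sense of Lemma \ref{lem:trivialobservables}); this is exactly the assertion that $F$ is equivalent to a co-closed observable. I would then recall the duality argument following that theorem, namely that $\Quotientscale{\mathcal{D}^1(\M)}{d\mathcal{D}^0(\M)}$ is dual to $\Omega^1_{0,\delta}(\M)$ (cf. \cite{SandersDappiaggiHack2014}), to justify the phrase ``implementing the gauge equivalence of the Maxwell theory'': pairing $A$ only against co-closed test forms is insensitive to the replacement $A\mapsto A+d\chi$, since $\langle d\chi,F\rangle_\M=\langle\chi,\delta F\rangle_\M=0$.

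For claim (ii), I would first record a clean formula for the would-be Maxwell equation in distributional form. Assuming the limit exists, so that by (i) we may substitute the co-closed test form $\delta dF$ (note $\delta\delta dF=0$) into the solution formula (\ref{eqn:recallsoln}), I would pass to the limit using Corollary \ref{Cor:Emcont} together with the fact that $d$ and $\delta$ commute with $E_m^\pm$. This gives $\lim_{m\to0}G_m^\pm\delta dF=E_0^\pm\delta dF=F-E_0^\pm d\delta F$ and $\rhod E_m d\delta F=-*_{(\Sigma)}i^* * d E_m d\delta F=0$, reducing the limit to the expression in Equation (\ref{eqn:dynamics_limit_classical_unconstraint}): $\langle A_{0,j},\delta dF\rangle_\M$ equals $\langle j,F\rangle_\M$ plus an obstruction consisting of a sum of $J^\pm(\Sigma)$-terms in $j$ and a single boundary term involving $\Ad$. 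The field $A_{0,j}$ is a distributional Maxwell field precisely when this obstruction vanishes for every $F$.

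I would then treat the two directions separately. For the ``if'' direction, under the hypotheses $\delta j=0$ and $\delta_{(\Sigma)}\Ad=-\rhon j$, I would rewrite the boundary term by passing $d$ through the pullback, using the adjointness of $d_{(\Sigma)}$ and $\delta_{(\Sigma)}$, and inserting the constraint, and rewrite the $J^\pm(\Sigma)$-sum by means of the partial-integration identity (\ref{eqn:partialintegration}) specialised to $m=0$ with $\delta j=0$; the symmetry of the pairing $\langle{\PH},{\PH}\rangle_\Sigma$ then makes the two pieces cancel, exactly as in the computation preceding the theorem, so that $\langle A_{0,j},\delta dF\rangle_\M=\langle j,F\rangle_\M$. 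For the ``only if'' direction it is cleanest to read the constraints directly off the initial-data relations (\ref{eqn:proca_constraints}) of Theorem \ref{thm:solution_proca_unconstrained}: the Proca solution carries the mass-dependent data $\Adelta=m^{-2}\rhodelta j$ and $m^2\An=\rhon j+\delta_{(\Sigma)}\Ad$, which diverge as $m\to0$ unless $\rhodelta j=0$ on every Cauchy surface, i.e. $\delta j=0$, and $\rhon j+\delta_{(\Sigma)}\Ad=0$. Hence a well-defined massless field with the same source and initial data exists only under these two constraints.

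I expect the main obstacle to lie in making the ``only if'' direction of (ii) rigorous. Deducing the constraints from the divergence of $\Adelta$ and $\An$ is morally transparent but relies on the interpretation that a genuine Maxwell field must arise as a limit with stable, mass-independent source and initial data; a more self-contained alternative would be to show directly that the scalar obstruction in Equation (\ref{eqn:dynamics_limit_classical_unconstraint}) can vanish for all $F\in\Omega^1_0(\M)$ and all Cauchy surfaces only when the two pointwise constraints hold, exploiting that $\delta F$ ranges over a sufficiently large space of test functions and that $\Ad$ and $j$ are prescribed independently. Everything else reduces to the routine manipulations with $E_0^\pm$, pullbacks, and the identity (\ref{eqn:partialintegration}) already carried out in the proof of Theorem \ref{thm:solution_proca_unconstrained}.
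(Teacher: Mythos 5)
Your proposal is correct and follows essentially the same route as the paper: part (i) is exactly Theorem \ref{thm:limit_existence_classical} together with the duality remark on $\Omega^1_{0,\delta}(\M)$, and part (ii) reproduces the paper's computation from Equation (\ref{eqn:dynamics_limit_classical_unconstraint}) through the partial-integration identity (\ref{eqn:partialintegration}) to the cancellation, with the ``only if'' direction read off from the divergence of the constraint data (\ref{eqn:proca_constraints}). Your concern about the rigour of that last step is reasonable, but the paper itself argues it in the same informal way (demanding mass-independent source and initial data so that $\Adelta$ and $\An$ stay finite), so your assembly matches the paper's own proof.
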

Note that the conservation of the external current $\delta j = 0$ is not required to solve Proca's equation, but it is necessary to solve Maxwell's equations ($\delta dA=j$ entails $\delta j=0$). It is therefore not surprising that this condition is also necessary to recover the dynamics in the zero mass limit.
In analogy to the quantum theory, we may think of the field configuration $A$ as a state, whereas $F$ is an observable. We then see from the theorem that the limits of observables give rise to the gauge equivalence of the classical vector potential, but additional conditions on the limits of states and external currents are needed in order to recover Maxwell's equation.

\subsection{The quantum case}\label{sec:mass_dependence_and_limit}
In the quantum case we define the observables in the zero mass limit as follows:
\begin{definition}[Zero mass limit theory]\label{def:limittheory}
	For any fixed $j \in \Omega^1(\M)$ and $O\in \BUOmega$ we say that $[O]_{m,j}\in \AA_{m,j}$ has a zero mass limit if and only if
	\begin{equation}
	\lim_{m\to0}\big( \Lambda_m \comp  \Psi_{\varphi_{m,j}}^{-1} \big) \big([O]_{m,j}\big)\notag
	\end{equation}
	exists for all Cauchy surfaces $\Sigma$ and all families $\left\{ \varphi_{m,j} \right\}_{m\ge0}$ of classical solutions to the inhomogeneous equation $(\delta d + m^2) \varphi_{m,j} = j$ which depend continuously on $m$. Here, $\Lambda_{m}$ and $\Psi_{\varphi_{m,j}}$ are as defined in Section \ref{sec:CCR} and \ref{sec:no-current}.\par
	We call the zero mass limit trivial if and only if the above limit vanishes for all Cauchy surfaces $\Sigma$ and all families $\{ \varphi_{m,j}\}_{m\ge0}$. If the zero mass limit exists, we denote its equivalence class modulo trivial observables by $[O]_{0,j}$.
\end{definition}
Note that we included $m=0$ in the family $\left\{ \varphi_{m,j} \right\}_{m\ge0}$. This is done for the following reason. Even when $j=0$ we may choose a non-trivial family $\left\{ \varphi_{m,0} \right\}_{m\ge0}$ and due to the isomorphism $\Psi_{\varphi_{m,0}}^{-1}$ we are then considering quantum fluctuations around the classical solutions $\varphi_{m,0}$. If the quantum field is to converge, it seems reasonable to require that the classical background field $\varphi_{m,0}$ also converges. For general sources this implies that $\varphi_{0,j}$ satisfies Maxwell's equations and hence the current must be conserved, $\delta j=0$.
%
%

We can think of the zero mass limit of an operator $O$ as a family of operators in the algebras ${\Quotientscale{\BU\big( \Dzs \big) }{\ICCR}}$, indexed by $\Sigma$ and by the family $\left\{ \varphi_{m,j} \right\}_{m\ge0}$. Using the properties of the topological algebras ${\Quotientscale{\BU\big( \Dzs \big) }{\ICCR}}$, it is not hard to see that the operators $O\in \BUOmega$ which have a zero mass limit form a $^*$-subalgebra of $\BUOmega$ in which the operators with a trivial zero mass limit form an ideal. We are interested in the quotient algebra which we denote by $\AA_{0,j}$ and which is generated by $\mathbbm{1}$ and by homogeneous degree-one elements, which we denote by $\A_{0,j}(F)$. These are the massless field operators and we can think of them as the massless limits of the field operators $\A_{m,j}(F)$. Our next theorem focuses on these field operators.\par 
As our main result we determine for which $F\in\Omega^1_0(\M)$ the limit $\A_{0,j}(F)$ exists.
\begin{theorem}[Existence of the zero mass limit]\label{thm:limit_existence_sourcefree}
For given $j\in\Omega^1_{\delta}(\M)$, $\A_{m,j}(F)$ has a zero mass limit $\A_{0,j}(F)$ if and only if $F\in\Omega^1_0(\M)$ is of the form $F=F'+F''$ with $F'\in\Omega^1_{0,\delta}(\M)$ and $F''\in\Omega^1_{0,d}(\M)$ such that $\langle j,F''\rangle_{\M}=0$. The zero mass limit is trivial when $F'=0$.
\end{theorem}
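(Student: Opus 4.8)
The plan is to exploit the explicit formula for the image of the field operator under the maps appearing in Definition \ref{def:limittheory}. Exactly as in the continuity lemma preceding Theorem \ref{thm:continuity-independence-source-free}, one has
\begin{equation}
\big(\Lambda_m \comp \Psi_{\varphi_{m,j}}^{-1}\big)\big(\A_{m,j}(F)\big)
= \big[\big(\langle \varphi_{m,j}, F\rangle_\M,\, \kappa_m F,\, 0,0,\dots\big)\big]_\sim^\text{CCR},
\end{equation}
so the representative is supported in degrees $0$ and $1$ only. I would first argue that convergence of this equivalence class in $\Quotientscale{\BU\big(\Dzs\big)}{\ICCR}$ is equivalent to the separate convergence of the scalar part $\langle \varphi_{m,j}, F\rangle_\M$ in $\IC$ and of the degree-one part $\kappa_m F = (\rhoz G_m F, \rhod G_m F)$ in $\Dzs$. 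For this I would invoke the symmetrization map of Lemma \ref{lem:symmetrization-of-fields}, whose kernel is $\ICCR$: being continuous, linear, and equal to the identity on the degree-$0$ and degree-$1$ components, it acts as the identity on representatives of the above form, so applying it shows that quotient convergence forces, and is forced by, convergence of the two components separately.

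The degree-one part is then handled by the classical analysis. Convergence of $\kappa_m F$ on every Cauchy surface entails convergence of $\rhoz G_m F$ on every $\Sigma$, and the argument of Lemma \ref{lem:limit_existence_classical_equivalence}, which used only this fact, gives $E_0 d\delta F = 0$ and hence the decomposition $F = F' + F''$ with $F'\in\Omega^1_{0,\delta}(\M)$ and $F''\in\Omega^1_{0,d}(\M)$. Conversely, once $F$ has this form, Equation (\ref{eqn:GmF}) yields $G_m F = E_m F'$, which converges as $m\to0$ by Corollary \ref{Cor:Emcont}; hence $\kappa_m F$ converges. Thus the degree-one part converges for all $\Sigma$ if and only if the decomposition exists, with no further constraint.

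The scalar part supplies the remaining condition. Writing $\varphi_{m,j}$ through Theorem \ref{thm:solution_proca_unconstrained} and inserting $G_m^\pm F = E_m^\pm F' + m^{-2} F''$ from Equation (\ref{eqn:GmF}), one finds
\begin{equation}
\langle \varphi_{m,j}, F\rangle_\M = \sum_\pm \langle j, E_m^\mp F'\rangle_{J^\pm(\Sigma)} + m^{-2}\langle j, F''\rangle_\M + \big(\text{terms from the initial data of }\varphi_{m,j}\big).
\end{equation}
All terms except $m^{-2}\langle j, F''\rangle_\M$ converge, using Corollary \ref{Cor:Emcont}, the continuity of $\rhoz,\rhod$, and the assumed continuity of $m\mapsto\varphi_{m,j}$ (whose initial data therefore converge). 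Crucially, the divergent term is the same for every admissible family $\left\{\varphi_{m,j}\right\}_{m\ge0}$, so requiring the limit to exist for \emph{all} families is equivalent to the single condition $\langle j, F''\rangle_\M = 0$. This establishes both implications. Finally, if $F'=0$ then $G_m F = 0$ and, using $\langle j, F''\rangle_\M = 0$, the scalar part equals $m^{-2}\langle j, F''\rangle_\M = 0$; the representative is identically zero, so the limit is trivial.

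The main obstacle I expect is the reduction in the first paragraph: one must be certain that convergence in the quotient ${\Quotientscale{\BU\big(\Dzs\big)}{\ICCR}}$ cannot hide a failure of convergence of the representatives, i.e.~that the CCR relations cannot repair a divergent degree-zero or degree-one component, which is precisely why the degree-preserving symmetrization map is needed. The role of the hypothesis $j\in\Omega^1_\delta(\M)$ should also be pinned down: it is exactly what guarantees that continuous families $\left\{\varphi_{m,j}\right\}_{m\ge0}$ including $m=0$ exist at all, since $\delta d\varphi_{0,j}=j$ forces $\delta j=0$, so that Definition \ref{def:limittheory} is non-vacuous.
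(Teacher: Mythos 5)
Your proposal is correct and follows essentially the same route as the paper: the explicit image formula for $\big(\Lambda_m\comp\Psi_{\varphi_{m,j}}^{-1}\big)\big(\A_{m,j}(F)\big)$, the symmetrization projection of Lemma \ref{lem:symmetrization-of-fields} to reduce convergence in the quotient to separate convergence of the degree-zero and degree-one components, the argument of Lemma \ref{lem:limit_existence_classical_equivalence} for the degree-one part, and the same triviality computation when $F'=0$ --- your scalar-part computation via Theorem \ref{thm:solution_proca_unconstrained} merely spells out a step the paper compresses into a citation. One caveat on framing: since Definition \ref{def:limittheory} requires the families $\left\{\varphi_{m,j}\right\}_{m\ge0}$ to be continuous including at $m=0$, the scalar part $\langle\varphi_{m,j},F\rangle_{\M}$ converges automatically for \emph{every} admissible family (the paper notes exactly this), so your identity shows that $\langle j,F''\rangle_{\M}=0$ is \emph{forced} by the existence of any admissible family rather than being a ``divergent term'' excluded by quantifying over all families; relatedly, $\delta j=0$ is only \emph{necessary} for such families to exist (since $\delta d\varphi_{0,j}=j$), not sufficient as your last paragraph asserts, because topological obstructions of Gauss-law type may remain.
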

\begin{proof}
Note that
\begin{equation}
\big(\Lambda_m \comp \Psi^{-1}_{\varphi_{m,j}}\big)\big(\A_{m,j}(F)\big)
 =\big[\big(\langle \varphi_{m,j},F\rangle_{\M},\kappa_mF,0,0,\ldots\big)\big]_\sim^{\text{CCR}}\formspace .
\end{equation}
Just as in the last paragraph of the proof of Lemma \ref{lem:limit_existence_classical_equivalence} we see that all $F$ of the stated form have a limit $\lim_{m\to0}G_mF=\lim_{m\to0}E_mF$ and hence the limit of the initial data $\lim_{m\to0}\kappa_mF$ exists on every Cauchy surface. By assumption on the $\varphi_{m,j}$, 
$\langle \varphi_{m,j},F\rangle_{\M}$ also has a limit as $m\to0$. Because $[{\PH}]_\sim^\text{CCR}$ is continuous and independent of $m$ we see that $\lim_{m\to0}
\big(\Lambda_m \comp \Psi^{-1}_{\varphi_{m,j}}\big)\big(\A_{m,j}(F)\big)$ exists for all $F$ of the stated form.\par 
When $F'=0$, then $F=F''$ and $G_mF=0$ (cf. the proof of Theorem \ref{thm:limit_existence_classical}) and hence $\kappa_mF$ on every Cauchy surface. Furthermore, $\langle\varphi_{m,j},F\rangle_{\M}=m^{-2}\langle j,F''\rangle_{\M}=0$ by Theorem \ref{thm:solution_proca_unconstrained} and Equation (\ref{eqn:GmF}). Thus the zero mass limit is trivial.\par
Assume that $\lim_{m\to0}\big(\Lambda_m \comp \Psi^{-1}_{\varphi_{m,j}}\big)\big(\A_{m,j}(F)\big)$ exists. This means that for each Cauchy surface $\Sigma$ there is a family of elements $g_m\in\ICCR$ such that $\lim_{m\to0} \big(\langle \varphi_{m,j},F\rangle_{\M},\kappa_mF,0,\ldots\big)+g_m$ exists in $\BU\left(\Dzs \right)$. Using the projection $S$ of Lemma \ref{lem:symmetrization-of-fields}, we have
\begin{align}
S\big(\big(\langle \varphi_{m,j},F\rangle_{\M},\kappa_mF,0,0, \ldots\big)+g_m\big)
&=\big(\langle \varphi_{m,j},F\rangle_{\M},\kappa_mF,0,0, \ldots\big)\notag\\
&=S\big(\langle \varphi_{m,j},F\rangle_{\M},\kappa_mF,0,0, \ldots\big)\formspace,
\end{align}
because $\big(\langle \varphi_{m,j},F\rangle_{\M},\kappa_mF,0,\ldots\big)$ is homogeneous of degree 1 and hence symmetric. The continuity of $S$ then implies that
\begin{align}
S\big(\lim_{m\to0} \big(\langle \varphi_{m,j},F\rangle_{\M},\kappa_mF,0,0, \ldots\big)+g_m\big)
&=\lim_{m\to0} S\big(\big(\langle \varphi_{m,j},F\rangle_{\M},\kappa_mF,0,0, \ldots\big)+g_m\big)\notag\\
&=\lim_{m\to0} \big(\langle \varphi_{m,j},F\rangle_{\M},\kappa_mF,0,0, \ldots\big),
\end{align}
exists. This implies that both $\lim_{m\to0}\langle \varphi_{m,j},F\rangle_{\M}$ and $\lim_{m\to0}\kappa_mF$ exist. The first of these conditions already follows from the assumptions on $\varphi_{m,j}$ but the second implies in particular that $\lim_{m\to0}\rhoz G_mF$ exists. Because this is required for every Cauchy surface, the argument presented in the proof of Lemma \ref{lem:limit_existence_classical_equivalence} shows that $F$ must be of the stated form.
\end{proof}
As in the classical case we find that the algebra $\AA_{0,j}$ of the massless limit is generated by field operators $\A_{0,j}(F)$ with $F\in\Omega^1_{0,\delta}(\M)$ ranging over the co-closed test one-forms. Just as in the classical case, discussed in Section \ref{sec:zero-mass-limit-classical}, this implements the gauge equivalence of the Maxwell theory, using the choice of gauge equivalence of \cite{SandersDappiaggiHack2014}. Hence also in the quantum case, the limit exists only if we implement the gauge beforehand.
We now turn to the algebraic relations in $\AA_{0,j}$. For this we view $[O]_{0,j}$ as an equivalence class of a family of limits 
$\lim_{m\to0}\big( \Lambda_m \comp  \Psi_{\varphi_{m,j}}^{-1} \big) \big([O]_{m,j}\big)$ in the algebras ${\Quotientscale{\BU\big( \Dzs \big) }{\ICCR}}$, indexed by the Cauchy surface $\Sigma$ and the family $\left\{ \varphi_{m,j} \right\}_{m\ge0}$ and we set in particular $\A_{0,j}(F) \coloneqq [(0,F,0,\ldots)]_{0,j}$. Exploiting the algebraic structure of the algebras ${\Quotientscale{\BU\big( \Dzs \big) }{\ICCR}}$ we then find in a natural way\footnote{This means that the relations below hold for the corresponding limits $\lim_{m\to0}\big( \Lambda_m \comp  \Psi_{\varphi_{m,j}}^{-1} \big) ([O]_{m,j})$ for each Cauchy surface and for each family of classical solutions $\left\{ \varphi_{m,j} \right\}_{m\ge0}$.} that
\begin{align}
\A_{0,j}(\alpha F + \beta F') &= \alpha \A_{0,j}(F) + \beta \A_{0,j}(F')  \\
\A_{0,j}(F)^* &= \A_{0,j}(\skoverline{F}\,) 	
\end{align}
for all $F \in \Omega^1_{0,\delta}(\M)$ and $\alpha, \beta \in \IC$, corresponding to the linearity and the hermitian field property. For the canonical commutation relations we note that for all $F,F' \in \Omega^1_{0,\delta}(\M)$, $G_mF'=E_mF'$ and hence
\begin{align}\label{eqn:limCCR}
\big[ \A_{0,j}(F) ,\A_{0,j}(F')  \big]
&= \lim\limits_{m \to 0} \big[ \A_{m,j}(F) ,\A_{m,j}(F')  \big]  \notag\\
&=\i \cdot \lim\limits_{m \to 0} \Gm{F}{F'}\cdot \mathbbm{1} \notag\\
&=\i \cdot \lim\limits_{m \to 0} \langle F,E_F'\rangle_{\M}\notag\\
&= \i \, \Ez{F}{F'}\cdot \mathbbm{1}\formspace.
\end{align}
For co-closed test one-forms $F \in \Omega^1_{0,\delta}$, the fundamental solutions $E^\pm_0$ of the massless Klein-Gordon operator are actually also fundamental solutions to Maxwell's equation, i.\,e. it holds $E_0^\pm \delta d F = E_0^\pm (\delta d + d \delta) F = F$, so we find that the fields in the zero mass limit are subject to the correct canonical commutation relations. Indeed, using $\rhodelta E_0 F' = i^* \delta E_0 F' = i^* E_0 \delta F' = 0$ and the analogous expression for $F$, we may rewrite commutator in terms of initial data as
\begin{align}
\Ez{F}{F'}
&=\langle F, E_0 F' \rangle_\M = -\langle E_0 F , F' \rangle_\M \notag\\
&= \langle \rhoz E_0 F , \rhod E_0 F'  \rangle_\Sigma - \langle \rhod E_0 F , \rhoz E_0 F' \rangle_\Sigma
\end{align}
in analogy to Equation (\ref{eqn:CCRdata}).

Note that $\Ez{F}{F'}$ for $F,F' \in \Omega^1_{0,\delta}(\M)$ is in general degenerate, hence the quantum field theory associated with $\A_{0,j}$ will in general fail to be local in the sense of Definition \ref{def:generally-coveriant-qftcs}. However, this is perfectly in line with the free vector potential as presented in \cite{SandersDappiaggiHack2014}.\par

It remains to verify whether $\A_{0,j}$ solves Maxwell's equation, i.\,e. if $\A_{0,j}(\delta d F) = \langle j , F\rangle_\M$ holds for all $F\in \Omega^1_0(\M)$. Because $\delta d F$ is co-closed, the limit $\A_{0,j}(\delta dF)$ is well defined. For any Cauchy surface and any family $\left\{ \varphi_{m,j} \right\}_{m\ge0}$ we have
\begin{align}\label{eqn:Maxwellfail}
\big(\Lambda_m \comp \Psi^{-1}_{\varphi_{m,j}}\big)\big(\A_{m,j}(\delta dF)\big)
&=\big[\big(\langle \varphi_{m,j},\delta dF\rangle_{\M},\kappa_m\delta dF,0,0, \ldots\big)\big]_\sim^{\text{CCR}}\notag\\
&=\big[\big(\langle \delta d\varphi_{m,j},F\rangle_{\M},\kappa_m\delta dF,0,0, \ldots\big)\big]_\sim^{\text{CCR}}\notag\\
&=\langle j,F\rangle_{\M} \mathbbm{1} + \big[\big(0,\kappa_m\delta dF,0,0, \ldots\big)\big]_\sim^{\text{CCR}},
\end{align}
which is independent of $\left\{ \varphi_{m,j} \right\}_{m\ge0}$. This essentially means that it suffices to consider the source free case, because the second term in Equation (\ref{eqn:Maxwellfail}) is $\Lambda_m\big(\A_{m,0}(\delta dF)\big)$. Because $G_m\delta dF=E_m\delta dF$ converges to $E_0\delta dF$ we have
\begin{equation}
\lim_{m\to0}\kappa_m\delta dF=\big(\rhoz E_0\delta dF,\rhod E_0\delta dF\big)=\big(\rhoz E_0\delta dF,0\big),
\end{equation}
where we have used that $E_0\delta dF=-E_0d\delta F$ is closed and hence $\rhod E_0\delta dF=\rhon dE_0\delta dF=0$.\par
To recover Maxwell's equation, we need to verify that the second term in Equation (\ref{eqn:Maxwellfail}) vanishes in the limit $m\to0$ for any Cauchy surface. However, this fails in general. Indeed, if $B\in\Omega^1(\M)$ is the solution of the wave equation $\Box B=0$ with initial data $\rhoz B=\rhod B=\rhon B=0$ and $\rhodelta B\in\Omega^0_0(\M)$ not constant, then $B=E_0F$ for some compactly supported $F\in\Omega^1_0(\M)$ (cf. the proof of Lemma \ref{lem:one-particle-homeomorphism}). However, $E_0\delta dF=-E_0d\delta F=-d\delta E_0F=-d\delta B$ does not vanish, because $\delta B\in\Omega^0(\M)$ is a function which is not constant. In particular, because $d$ commutes with pull-backs, $\rhoz E_0\delta dF=-d_{\Sigma}\rhodelta B\not\equiv0$ because $\rhodelta B$ is not a constant function. Conversely, following the proof of Theorem \ref{thm:limit_existence_sourcefree} and Lemma \ref{lem:limit_existence_classical_equivalence} we see that the limit only vanishes for all Cauchy surfaces if $E_0\delta dF=0$, which means that $F\in \Omega^1_{0,\delta}(\M)+\Omega^1_{0,d}(\M)$.\par 
We have encountered a similar situation in the investigation of the classical theory in Section \ref{sec:limit_dynamics_classical} (cf. Equation (\ref{eqn:dynamics_limit_classical_unconstraint})). There we could get rid of similar remaining terms by restricting the initial data of the field configuration (i.\,e. of the state of the system) such that the Lorenz constraint is well behaved in the limit. In the quantum scenario, our definition of the massless limit already requires $\delta j=0$, but the remaining constraint equation has not been imposed. Indeed, in our present setting, which focuses on observables, the Lorenz constraint does not appear directly at all.\par 
Nevertheless, we may impose the desired dynamics in a consistent way by dividing out a corresponding ideal. Note in particular that the limit algebra is not simple, because the skew-symmetric form in Equation (\ref{eqn:limCCR}) is degenerate: $\langle F,E_0\delta dF'\rangle_{\M}=0$ when $\delta F=0$. It follows that the operators $\A_{0,j}(\delta dF)-\langle j,F\rangle_{\M}\mathbbm{1}$ commute with all other operators in the algebra $\AA_{0,0}$ and they therefore generate a two-sided ideal.\par 
In the source free case this ideal is generated by the operators $\A_{0,j}(\delta dF)$, which correspond to 
$[(0,\kappa_m\delta dF,0,\ldots)]_\sim^{\text{CCR}}$ with $\kappa_m\delta dF=(\rhoz E_0\delta dF,0)$. It is interesting to note that 
$A_F \coloneqq E_0 \delta dF$ is a space-like compact solution to the source free Maxwell equation, $\delta d A_F = -\delta d  E_0 d \delta F = 0$, and that it is of the form 
$A_F=d\chi$ with the space-like compact function $\chi \coloneqq -E_0\delta F$. Solutions of the form $A_F$ can also be characterized in terms of their initial data,
\begin{equation}
\big(\rhoz A_F, \rhod A_F\big) =  \big(-d_{(\Sigma)} \rhoz \chi, 0\big) \formspace.
\end{equation}
Under the correspondence $F\mapsto E_0F$ of observables (with $\delta F=0$) and space-like compact solutions to Maxwell's equation, the observables $\delta dF$ therefore generate a subspace that looks like a kind of pure gauge solutions (see for example \cite{SandersDappiaggiHack2014} or \cite{Pfenning2009}). However, the kind of ``gauge equivalence'' on the level of the observables, rather than the fields, does not seem to come out of the limiting procedure naturally.

It seems plausible that one can recover the correct dynamics by including states in the investigation and formulating conditions on their limiting behaviour, which essentially require that the remaining constraint equations is well behaved in the limit. It is unclear if our limiting procedure can also be improved to directly recover the dynamics without considering states. One idea is to consider the homeomorphisms that propagate the algebras of initial data $\Quotientscale{\BU\big(\Dzs\big)}{\ICCR}$ from one Cauchy surface to another. If one can formulate a condition that ensures that these homeomorphisms remain well behaved in the limit, then the resulting limits should have a well behaved time evolution. It would be of interest to develop these ideas and to compare the results with the massless limit of Stueckelberg's theory, which preserves the gauge invariance at all masses at the cost of introducing a coupling to an additional scalar field and all the associated additional complications \cite{BelokogneFolacci2016}. We leave the investigation of these worthwhile questions to the future.
%
%
%

%
\section{Conclusion and Outlook} \label{chpt::conclusion}
We have studied the classical and quantum Proca field in curved spacetimes, using a general setting including external sources and without restrictive assumptions on the spacetime topology. We have shown that the quantum theory is locally covariant in the sense of \cite{BrunettiFredenhagenVerch2003}, where the injectivity of the morphisms is related to the non-degeneracy of the symplectic form.\par
We have shown that the theory depends continuously on the mass $m>0$, in a way which we have defined. Using specific BU-algebra homeomorphism we mapped families of smeared Proca fields at different masses, initially elements in different BU-algebras, into the BU-algebra of initial data. The topology of the latter algebra
then determines a notion of continuity for the family of operators. For $m>0$ we showed that this notion of continuity is independent of the choice of Cauchy surface and of the classical inhomogeneous solutions $\varphi_{m,j}$ appearing in the homeomorphisms. This result relied crucially on the use of energy estimates. Note that a $C^*$-Weyl algebra approach is ill-suited for the investigation of the zero mass limit, as one of us has argued in \cite[Appendix A]{Schambach2016}.\par
For the quantum theory we defined the zero mass limit by requiring a continuous family of observables to converge on every Cauchy surface and for every continuous family $\left\{ \varphi_{m,j} \right\}_{m\ge0}$ of inhomogeneous classical solutions. (For the classical theory we considered a somewhat simplified setting.) Investigating the zero mass limit we found in both cases that the limit exists and the theory is generated by the class of observables described by co-closed test one-forms. This effectively implements a gauge invariance on the (distributional) solutions to Proca's equation by exact (distributional) one-forms. This is of interest, because in general curved spacetimes the spacetime topology allows different possible choices of gauge invariance (using e.\,g. closed forms instead). Our limiting procedure naturally leads to the same gauge invariance that was advocated in \cite{SandersDappiaggiHack2014}, using the independent argument that it can account for phenomena such as the Aharonov-Bohm effect and Gauss' law.\par
In the zero mass limit we also find that the quantum fields fulfill the basic properties of linearity, the hermitian field property and the correct CCR, all in line with the massless vector potential of electrodynamics. However, we do not automatically recover the expected Maxwell dynamics. In the classical case, this is caused by a potential divergence in the constraint equations on the initial data of field configurations. This may be avoided by requiring the external source to be conserved, $\delta j=0$, and by requiring that the initial data of the configuration also satisfy the constraint equations of Maxwell's theory as given e.\,g. by Pfenning \cite{Pfenning2009}. In the quantum case we did not clarify if Maxwell's equation can be obtained in the zero mass limit, e.\,g. by imposing additional conditions on the limits of observables or on states, or by requiring the homeomorphisms that propagate initial data between different Cauchy surfaces to remain well defined in the massless limit.\par
The further development of these ideas might require a detailed investigation of Hadamard states, which is also if interest in its own right. So far these states seem to have been considered only in a restricted class of spacetimes \cite{FewsterPfenning2003}. Furthermore, it would be interesting to make a detailed comparison of our massless limit and the massless limit of Stueckelberg's theory as presented e.\,g. in \cite{BelokogneFolacci2016}. We leave the investigation of these worthwhile questions to the future.\par
\textbf{Acknowledgements}
We would like to thank the University of Leipzig, where this research was carried out, and MS would like to thank Prof. Stefan Hollands for helpful comments and discussions. Large  parts of this work are adapted from the MSc thesis of MS.
%
%
%
%
%
%
%
%
%
%


\appendix
\section{Additional Lemmas}\label{app:lemmata}
Let $\mathfrak{X}$ be a complex vector bundle over a smooth differential manifold $\N$. As in Section \ref{sec:BU-algebra} we may define the complete BU-algebra $\overline{\BU(\Gamma_0(\mathfrak{X}))}$ over $\Gamma_0(\mathfrak{X})$ as the direct sum
\begin{align}
\overline{\BU\big(\Gamma_0(\mathfrak{X})\big)} = \IC \oplus \bigoplus\limits_{n= 1}^\infty \Gamma_0\big(\mathfrak{X}^{\boxtimes n}\big) \formspace,
\end{align}
using the outer tensor product of vector bundles (see \cite[Chapter 3.3]{SahlmannVerch2000}). We endow this algebra with the inductive limit topology of the subspaces
\begin{align}
\BU_N = \IC \oplus \bigoplus\limits_{n= 1}^N \Gamma_0(\mathfrak{X}^{\boxtimes n}) \formspace.
\end{align}
Note that $\overline{\BU\big(\Gamma_0(\mathfrak{X})\big)}$ is the completion of the BU-algebra
$\IC \oplus \bigoplus\limits_{n= 1}^\infty \Gamma_0(\mathfrak{X})^{\otimes n}$.
\begin{lemma}\label{lem:BU-algebra-barreled}
	 The complete Borchers-Uhlmann algebra $\overline{\BU\big(\Gamma_0(\mathfrak{X})\big)}$ is barrelled.
\end{lemma}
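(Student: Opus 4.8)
The plan is to prove barrelledness by a bottom-up permanence argument, exploiting that barrelledness is inherited through each of the constructions used to assemble $\overline{\BU\big(\Gamma_0(\mathfrak{X})\big)}$. First I would identify the building blocks. For each $n\ge 1$, the summand $\Gamma_0(\mathfrak{X}^{\boxtimes n})$ consists of the compactly supported smooth sections of a vector bundle over the smooth manifold $\N^n$. Ordering the compact subsets $K\subset\N^n$ by inclusion and letting $\Gamma_K(\mathfrak{X}^{\boxtimes n})$ denote the Fréchet space of sections supported in $K$, one realises $\Gamma_0(\mathfrak{X}^{\boxtimes n})$ as the locally convex inductive limit $\varinjlim_K \Gamma_K(\mathfrak{X}^{\boxtimes n})$. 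Every Fréchet space is barrelled, being a Baire space (cf.~\cite{Treves1967}), and a locally convex inductive limit of barrelled spaces is again barrelled; hence each $\Gamma_0(\mathfrak{X}^{\boxtimes n})$ is barrelled. The remaining summand $\IC$ is finite dimensional and therefore trivially barrelled.

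Next I would treat the finite truncations $\BU_N=\IC\oplus\bigoplus_{n=1}^N\Gamma_0(\mathfrak{X}^{\boxtimes n})$. Each of these is a \emph{finite} locally convex direct sum of barrelled spaces, and a finite direct sum (equivalently, a finite product) of barrelled spaces is barrelled; hence every $\BU_N$ is barrelled.

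Finally I would pass to the limit. By construction $\overline{\BU\big(\Gamma_0(\mathfrak{X})\big)}$ carries the inductive limit topology of the increasing sequence $\BU_1\subset\BU_2\subset\cdots$ whose union is the whole algebra. Applying once more the fact that a locally convex inductive limit of barrelled spaces is barrelled (cf.~\cite{Treves1967}, in the same circle of ideas as the quotient result \cite[Proposition 33.1]{Treves1967} used elsewhere), we conclude that $\overline{\BU\big(\Gamma_0(\mathfrak{X})\big)}$ is barrelled.

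Since each step merely instantiates a standard permanence property of barrelled spaces, there is no genuinely difficult step here; the only point demanding care is to confirm that the topology placed on $\overline{\BU\big(\Gamma_0(\mathfrak{X})\big)}$ genuinely coincides with the inductive limit of the barrelled subspaces $\BU_N$ (rather than some strictly coarser or finer locally convex topology), so that the permanence theorem for inductive limits applies verbatim. Once this identification is in place, barrelledness follows at once from the three facts above.
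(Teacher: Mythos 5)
Your proof is correct and follows essentially the same route as the paper: each $\Gamma_0(\mathfrak{X}^{\boxtimes n})$ is an LF-space and hence barrelled, the finite truncations $\BU_N$ are barrelled as finite direct sums of barrelled spaces, and the inductive limit of the $\BU_N$ is barrelled. Your closing caveat about the topology is immaterial here, since the paper \emph{defines} the topology on $\overline{\BU\big(\Gamma_0(\mathfrak{X})\big)}$ as the inductive limit topology of the subspaces $\BU_N$.
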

\begin{proof}
The spaces $\Gamma_0\big(\mathfrak{X}^{\boxtimes n}\big)$ of compactly supported sections of a complex vector bundle are LF-spaces, as they are defined as the inductive limit of the Frech\'et spaces of sections with support in some compact $K_l$ where $\left\{ K_l \right\}_l$ is a fundamental sequence of compact $K_l \subset \N$ (see \cite[17.2.2 and 17.3.1]{Dieudonne1972}). Since LF-spaces are barrelled \cite[Chapter 33, Corollary 3]{Treves1967} and the direct sum of barrelled spaces is again barrelled \cite[18.11]{KellyNamioka1963}, we find for any $N \in \IN$ that $\BU_N$ is barrelled. Additionally, the inductive limit of barrelled spaces is barrelled \cite[Chapter V, Proposition 6]{RobertsonRobertson1973}, hence the complete BU-algebra over smooth compactly supported sections $\Gamma_0(\mathfrak{X})$ over a complex vector bundle $\mathfrak{X}$ is barrelled.
\end{proof}
We will use barrelled spaces in order to apply the following result:
\begin{lemma}\label{lem:jointcontinuity}
Let $X$ be a barrelled locally convex space, let $\eta:[c,d]\to X$ be a continuous map on a closed interval and let $L_m:X\to Y$ be a family of continuous linear maps into a locally convex space $Y$ indexed by $m\in [a,b]$. If the map $m\mapsto L_m$ is weakly continuous, i.\,e. if $m\mapsto L_mx$ is continuous on $[a,b]$ for each $x\in X$, then the map $(m,m')\mapsto L_m\eta(m')$ is continuous on $[a,b]\times[c,d]$.
\end{lemma}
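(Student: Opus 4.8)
The plan is to reduce the joint continuity to a combination of \emph{equicontinuity} of the family $\{L_m\}_{m\in[a,b]}$ together with the two separate continuity hypotheses. The starting point is the algebraic splitting, for a fixed base point $(m_0,m_0')$,
\begin{equation}
L_m\eta(m') - L_{m_0}\eta(m_0') = L_m\big(\eta(m')-\eta(m_0')\big) + \big(L_m-L_{m_0}\big)\eta(m_0') \formspace.
\end{equation}
The second term tends to $0$ as $m\to m_0$ directly from the assumed weak continuity of $m\mapsto L_m\eta(m_0')$ at the single fixed vector $\eta(m_0')\in X$. The first term is the place where the barrelledness of $X$ enters.

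First I would establish that $\{L_m\}_{m\in[a,b]}$ is equicontinuous. For each fixed $x\in X$ the set $\{L_mx : m\in[a,b]\}$ is the image of the compact interval $[a,b]$ under the continuous map $m\mapsto L_mx$, hence compact and in particular bounded in $Y$. Thus the family $\{L_m\}_{m\in[a,b]}$ is pointwise bounded. Since $X$ is barrelled, the Banach--Steinhaus theorem applies and yields that this family is equicontinuous: for every neighbourhood $W$ of $0$ in $Y$ there is a neighbourhood $U$ of $0$ in $X$ with $L_m(U)\subset W$ for all $m\in[a,b]$.

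The final step is to assemble the neighbourhood argument. Given a neighbourhood $W$ of $0$ in $Y$, choose a neighbourhood $W'$ of $0$ with $W'+W'\subset W$. Equicontinuity provides a neighbourhood $U$ of $0$ in $X$ with $L_m(U)\subset W'$ for all $m$. Continuity of $\eta$ at $m_0'$ then gives a $\delta_1>0$ such that $\eta(m')-\eta(m_0')\in U$ whenever $|m'-m_0'|<\delta_1$, whence $L_m\big(\eta(m')-\eta(m_0')\big)\in W'$ \emph{uniformly} in $m\in[a,b]$. Weak continuity at $\eta(m_0')$ gives a $\delta_2>0$ such that $\big(L_m-L_{m_0}\big)\eta(m_0')\in W'$ whenever $|m-m_0|<\delta_2$. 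Combining these through the splitting above, $L_m\eta(m')-L_{m_0}\eta(m_0')\in W'+W'\subset W$ whenever $|m-m_0|<\delta_2$ and $|m'-m_0'|<\delta_1$, which is precisely continuity at $(m_0,m_0')$.

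The main obstacle is exactly the first term $L_m\big(\eta(m')-\eta(m_0')\big)$: the pointwise continuity of the family would only let us control $L_{m_0}\big(\eta(m')-\eta(m_0')\big)$, not $L_m$ for $m$ ranging over a neighbourhood of $m_0$. Obtaining uniform control over all nearby $m$ is the content of equicontinuity, and equicontinuity of a merely pointwise bounded family of continuous linear maps is guaranteed precisely by the barrelledness of $X$ via Banach--Steinhaus. This is why the hypothesis that $X$ be barrelled (verified for the relevant BU-algebras in Lemma \ref{lem:BU-algebra-barreled}) cannot be dropped.
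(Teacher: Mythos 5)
Your proof is correct and follows essentially the same route as the paper's: pointwise boundedness from compactness of the continuous images $\{L_m x : m \in [a,b]\}$, equicontinuity via Banach--Steinhaus using barrelledness, and then the same two-term splitting assembled with a standard neighbourhood argument (your $W'+W'\subset W$ versus the paper's $\tfrac{1}{2}V+\tfrac{1}{2}V\subset V$ for convex $V$ is an immaterial variation). Your closing remark correctly identifies why barrelledness is the essential hypothesis, exactly as in the paper.
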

\begin{proof}
The weak continuity of $m\mapsto L_m$ implies that for each $x\in X$ the image of $m\mapsto L_mx$ is compact. The family of maps $L_m$ is therefore pointwise bounded. Because $X$ is barrelled we may apply the uniform boundedness principle to find that the maps $L_m$ are equicontinuous. For any $(m_0,m'_0)\in[a,b]\times[c,d]$ we set $x \coloneqq \eta(m'_0)$ and we pick an arbitrary convex open neighbourhood $y+V$ of $y \coloneqq L_{m_0}x$, where $V$ is an open neighbourhood of $0$. By equicontinuity there is an open neighbourhood $U\subset X$ of $0$ such that $L_m(U)\subset \frac{1}{2}V$ for all $m\in[a,b]$. As $\eta$ is continuous there is an open neighbourhood $W'\subset[c,d]$ of $m'_0$ such that $\eta(W')\subset x+U$. Similarly there is an open neighbourhood $W\subset[a,b]$ of $m_0$ such that $L_mx-y\in\frac{1}{2}V$ for all $m\in W$. It follows that for all $(m,m')\in W\times W'$
\begin{equation}
L_m\eta(m')-y = L_m(\eta(m')-x) + (L_mx-y)\in \frac{1}{2}V+\frac{1}{2}V\subset V
\end{equation}
which proves the desired continuity.
\end{proof}
For our next lemma we will call an element of $\BU\big(\Dzs \big)$ symmetric if and only if it is totally symmetric in each degree.
\begin{lemma}[Symmetrization of fields] \label{lem:symmetrization-of-fields}
	Let $\BU_S\left(\Dzs \right)$ denote the linear subspace of the Borchers-Uhlmann algebra of initial data $\BU\big(\Dzs \big)$ consisting of symmetric elements. Then there is a unique continuous linear surjective projection $S:\BU\big(\Dzs \big)\to \BU_S\big(\Dzs \big)$ whose kernel is $\KERN{S}=\ICCR$ as defined in Section \ref{sec:CCR}.
\end{lemma}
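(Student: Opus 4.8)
The plan is to realise $S$ as the linear projection of $\BU\big(\Dzs\big)$ onto its symmetric part $\BU_S\big(\Dzs\big)$ along the ideal $\ICCR$, so that the whole lemma reduces to the direct-sum decomposition of locally convex spaces
\[ \BU\big(\Dzs\big) = \BU_S\big(\Dzs\big) \oplus \ICCR \]
together with the continuity of the resulting projection. I would obtain the decomposition from a spanning statement $\BU\big(\Dzs\big) = \BU_S\big(\Dzs\big) + \ICCR$ and a directness statement $\BU_S\big(\Dzs\big)\cap\ICCR = \{0\}$, and then read off all remaining properties formally.

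To build $S$ and prove spanning simultaneously, I would write down an explicit Wick-type symmetrisation. Put $\omega \coloneqq \mathcal{G}^{(\Sigma)}$. On a monomial $(\varphi_1,\pi_1)\otimes\cdots\otimes(\varphi_n,\pi_n)$ the defining relation of $\ICCR$ shows that interchanging two adjacent factors changes the monomial by a generator of $\ICCR$ plus $\i\omega$ of the two factors times the monomial with those two factors deleted, i.e.\ by an element of $\ICCR$ plus a term of degree $n-2$. Bubble-sorting the factors into symmetric order then suggests defining $S$ on $\Dzs^{\otimes n}$ to be the total symmetrisation corrected by the sum over all partial pairings of factors, each pair contracted with $\i\omega$ and the unpaired factors symmetrised. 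This formula visibly takes values in $\BU_S\big(\Dzs\big)$, and the same bubble-sort computation, carried out by induction on the degree (the cases $n=0,1$ being trivial since those degrees are automatically symmetric), shows that $x - S(x)\in\ICCR$ for every $x$. In particular $\BU\big(\Dzs\big)=\BU_S\big(\Dzs\big)+\ICCR$, establishing spanning.

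The main obstacle is directness, $\BU_S\big(\Dzs\big)\cap\ICCR=\{0\}$, equivalently the uniqueness of the symmetric representative modulo $\ICCR$. This is a Poincar\'e--Birkhoff--Witt statement for $\BU\big(\Dzs\big)/\ICCR$, which is the Weyl algebra associated with $\big(\Dzs,\omega\big)$, obtained from the tensor algebra over $\Dzs$ by imposing the Heisenberg relations $(\varphi,\pi)\otimes(\varphi',\pi')-(\varphi',\pi')\otimes(\varphi,\pi)=\i\,\omega\big((\varphi,\pi),(\varphi',\pi')\big)\mathbbm{1}$. Interpreting this algebra as the enveloping algebra of the Heisenberg Lie algebra $\mathfrak{h}=\Dzs\oplus\IC z$ (with $z$ central and set equal to $\mathbbm{1}$), the PBW theorem gives that the symmetrisation map $\mathrm{Sym}\big(\Dzs\big)\to\BU\big(\Dzs\big)/\ICCR$ is a linear isomorphism; under the usual identification of $\mathrm{Sym}\big(\Dzs\big)$ with $\BU_S\big(\Dzs\big)$ this map is exactly the composite $\BU_S\big(\Dzs\big)\hookrightarrow\BU\big(\Dzs\big)\to\BU\big(\Dzs\big)/\ICCR$, and its injectivity is precisely the desired directness. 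As an alternative that stays within the excerpt, the non-degeneracy of $\omega$ from Lemma \ref{lem:propagator-non-degenerate} lets one construct a faithful Fock representation in which a nonzero symmetric element of top degree $n$ has non-vanishing $n$-particle symbol, so that no nonzero symmetric element can lie in $\ICCR$.

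Granting spanning and directness, the rest is formal. If $S(x)=0$ then $x=x-S(x)\in\ICCR$; conversely, for $k\in\ICCR$ the element $S(k)=k-(k-S(k))$ lies in $\ICCR$ and in $\BU_S\big(\Dzs\big)$, hence $S(k)=0$ by directness, so $\KERN{S}=\ICCR$. For $s\in\BU_S\big(\Dzs\big)$ one has $s-S(s)\in\ICCR\cap\BU_S\big(\Dzs\big)=\{0\}$, whence $S$ restricts to the identity on $\BU_S\big(\Dzs\big)$; this yields idempotency and surjectivity onto $\BU_S\big(\Dzs\big)$. Uniqueness follows because any projection $S'$ with image $\BU_S\big(\Dzs\big)$ and kernel $\ICCR$ obeys $S'(x)=S'(S(x))=S(x)$, using $x-S(x)\in\ICCR=\KERN{S'}$ and $S'|_{\BU_S}=\mathrm{id}$. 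Finally, continuity of $S$ is read off the explicit formula: on each summand $\Dzs^{\otimes n}$ it is a finite sum of permutations of tensor factors and of contractions against the jointly continuous bilinear form $\omega$ (continuous by Schwartz' Kernels Theorem), with values in $\bigoplus_{k\le n}\mathrm{Sym}^k\big(\Dzs\big)$; since a linear map out of a locally convex direct sum is continuous precisely when its restriction to every summand is, $S$ is continuous on all of $\BU\big(\Dzs\big)$.
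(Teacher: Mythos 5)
Your proposal is correct, and it parallels the paper's proof on the constructive half while taking a genuinely different route at the key injectivity step. For spanning and the construction of $S$, you and the paper do essentially the same thing: the paper decomposes each permutation into adjacent transpositions, telescopes, and uses the CCR generator to trade each transposition error for a term of degree $N-2$ plus an element of $\ICCR$ (its maps $\alpha^{(N)}$ and $\beta^{(N)}$), then iterates down the degrees --- exactly your bubble-sort, which when unwound yields your Wick-type pairing formula; continuity is obtained in both cases from continuity of the permutation operators and of the contraction against $\mathcal{G}^{(\Sigma)}$, together with the universal property of the locally convex direct sum, and both derive the kernel, idempotency and uniqueness statements formally from the decomposition $\BU\big(\Dzs\big)=\BU_S\big(\Dzs\big)\oplus\ICCR$. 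The genuine divergence is directness, $\BU_S\big(\Dzs\big)\cap\ICCR=\{0\}$: the paper argues by hand that for nonzero $f\in\ICCR$, written as a finite sum $\sum_i h_i\cdot(\text{generator})\cdot\tilde h_i$, the highest-degree part $f^{(N)}$ consists of terms containing an antisymmetrised pair $\psi_i\otimes\psi'_i-\psi'_i\otimes\psi_i$, so that $P^{(N)}f^{(N)}=0$, contradicting symmetry; you instead invoke the Poincar\'e--Birkhoff--Witt theorem for the enveloping algebra of the Heisenberg Lie algebra, identifying $\Quotientscale{\BU\big(\Dzs\big)}{\ICCR}$ with the Weyl algebra of $\big(\Dzs,\mathcal{G}^{(\Sigma)}\big)$. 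Your route buys robustness precisely where the paper's computation is delicate: the paper's displayed form of $f^{(N)}$ tacitly assumes that the top tensor-degree contributions of the various $i$ do not cancel, for otherwise the true top-degree part also receives scalar contributions of the form $-\i\,\mathcal{G}^{(\Sigma)}(\psi_i,\psi'_i)\,h_i^{(a)}\tilde h_i^{(b)}$, which are not visibly annihilated by $P^{(N)}$; PBW (a basis or normal-ordering argument) disposes of all such cancellation issues at once, at the cost of importing a classical theorem where the paper stays self-contained. Two small caveats on your side: in passing from PBW for $U(\mathfrak{h})$ to the quotient by the central relation you should add a line (centrality gives $U(\mathfrak{h})$ the linear structure $\mathrm{Sym}\big(\Dzs\big)\otimes\IC[z]$ compatibly with the ideal generated by $z-\mathbbm{1}$), and your alternative Fock-representation argument is the weak point of the write-up, since for the infinite-dimensional symplectic space $\big(\Dzs,\mathcal{G}^{(\Sigma)}\big)$ the existence of a compatible complex structure and the faithfulness of the representation both require real work --- but the PBW argument you give as the primary route does not need it.
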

\begin{proof}
For each $N\ge 1$ and each permutation $\sigma$ of the set $\{1,\ldots,N\}$ we introduce the permutation operator $P^{(N)}_{\sigma}:
\big(\Gamma_0(\TsS\oplus\TsS)\big)^{\otimes N}\to\big(\Gamma_0(\TsS\oplus\TsS)\big)^{\otimes N}$ defined by
\begin{equation}
\left(P^{(N)}_{\sigma}f\right)\big(p_1,\ldots,p_N\big) \coloneqq f\big(p_{\sigma(1)},\ldots,p_{\sigma(N)}\big) \formspace,
\end{equation}
where we view elements of $\big(\Gamma_0(\TsS\oplus\TsS)\big)^{\otimes N}$ as sections in $\Gamma_0\big((\TsS\oplus\TsS)^{\boxtimes N}\big)$. The symmetric tensor product $\big(\Dzs\big)^{\otimes_S N}$ is then the range space of the projection
\begin{equation}
P^{(N)} \coloneqq \frac{1}{N!}\sum\limits_{\sigma}P^{(N)}_{\sigma}.
\end{equation}
Note that each $P^{(N)}_{\sigma}$ is continuous, because the topology of $\Gamma_0\big( (\TsS \oplus \TsS)^{\boxtimes N} \big) $ is invariant under the swapping of variables. It follows that $P^{(N)}$ is a continuous surjection.\par 
We will first argue that $\BU_S\big(\Dzs \big)\cap\ICCR=\{0\}$. For this we note that each $f\in\ICCR$ is of the form
\begin{equation}
f = \sum_{i=1}^{k} h_i \cdot \big( -\i \mathcal{G}_m(\psi_i, \psi'_i) , 0 , \psi_i \otimes \psi'_i - \psi'_i \otimes \psi_i , 0 , 0 ,\dots  \big) \cdot \tilde{h}_i 
\end{equation}
for some $k \in \IN$, $h_i, \tilde{h}_i \in \BU\big(\Dzs\big)$ and $\psi_i, \psi'_i \in \Dzs$, where we have used the shorthand notation $\Gm{\psi_i}{ \psi'_i} = \langle \pi_i , \varphi'_i \rangle_\Sigma - \langle \varphi_i , \pi'_i \rangle_\Sigma$ for $\psi_i = (\varphi_i , \pi_i)$. If $f\not=0$ then its highest degree part is of some degree 
$N\ge2$ and we can write it explicitly, using the above representation, as
\begin{equation}
f^{(N)} = \sum_{i=1}^{k} h_i^{(N_i)}\, \big( \psi_i \otimes \psi'_i - \psi'_i \otimes \psi_i \big)\, \tilde{h}_i^{(N-2-N_i)} \formspace,
\end{equation}
where $h_i^{(N_i)}$ is the highest degree part of $h_i$ and $\tilde{h}_i^{(N-2-N_i)}$ is either the highest degree part of $\tilde{h}_i$ or $0$. It follows by inspection that $P^{(N)}f^{(N)}=0$. Now, if $f\in\ICCR$ is non-zero and symmetric and if $f^{(N)}$ is its highest degree part, then $f^{(N)}=P^{(N)}f^{(N)}=0$, contradicting that $f^{(N)}$ is the highest degree part. It follows that $\BU_S\big(\Dzs \big)\cap\ICCR=\{0\}$.\par 
We now construct for each degree $N\ge2$ two continuous linear maps
\begin{align}
\alpha^{(N)}:&\big(\Gamma_0(\TsS\oplus\TsS)\big)^{\otimes N}\to \big(\Gamma_0(\TsS\oplus\TsS)\big)^{\otimes N}\formspace,\notag\\
\beta^{(N)}:&\big(\Gamma_0(\TsS\oplus\TsS)\big)^{\otimes N}\to \ICCR\formspace,\notag
\end{align}
(for $N=2$ we use $\big(\Gamma_0(\TsS\oplus\TsS)\big)^{\otimes (N-2)}=\IC$) such that
\begin{equation}\label{eqn:symmetrization_of_field}
f = P^{(N)}f + \alpha^{(N)}f + \beta^{(N)}f\formspace.
\end{equation}
We start with the observation that
\begin{equation}\label{eqn:symmetrization_of_field2}
f = P^{(N)}f - \frac{1}{N!} \sum\limits_\sigma  (P^{(N)}_\sigma - 1) f\formspace.
\end{equation}
Every permutation $\sigma$ can be written as a composition $\sigma = \tau_1 \comp \tau_2 \comp \cdots \comp \tau_l$, where each $\tau_i$ is a transposition of neighbouring indices. We then find $P^{(N)}_\sigma = P^{(N)}_{\tau_1}\cdot P^{(N)}_{\tau_2}\cdots P^{(N)}_{\tau_l}$ and, using a telescoping series,
\begin{equation}\label{eqn:commuted-elements}
\big(P^{(N)}_\sigma - 1\big) f = \sum_{i=1}^l \big(P^{(N)}_{\tau_i} - 1\big)\,P^{(N)}_{\tau_{i+1}}\cdots P^{(N)}_{\tau_{l}}\,   f_m^{(N+2)} \formspace.
\end{equation}
This is now a sum over terms where the left-most operator $P^{(N)}_{\tau_i}- 1$ yields a commutator. Using the CCR we may reduce this commutator to a term of lower degree, i.\,e.
\begin{equation}
\big(P^{(N)}_{\tau_i} - 1\big) f' = \tilde{f}'+g \formspace,\quad \tilde{f}'\in \Gamma_0(\Dzs)^{\otimes (N-2)}\formspace,\quad g\in\ICCR
\end{equation}
for any $f'\in \Gamma_0(\Dzs)^{\otimes N}$, where $\tilde{f}'$ depends continuously on $f'$ and hence so does $g$. Repeating this procedure for each term in Equation (\ref{eqn:commuted-elements}) and each term in the sum in Equation (\ref{eqn:symmetrization_of_field2}) yields a well-defined expression of the form
\begin{equation}
f = P^{(N)}f + \sum_j\tilde{f}_j+\sum_jg_j\formspace,
\end{equation}
where $j$ runs over some index set, $\tilde{f}_j$ is homogeneous of degree $N-2$ and $g_j\in\ICCR$. Because $\tilde{f}_j$ and $g_j$ depend continuously on $f$, it suffices to define $\alpha^{(N)}f \coloneqq \sum_j\tilde{f}_j$ and $\beta^{(N)}f \coloneqq \sum_jg_j$. We refer to \cite[Lemma B.5]{Schambach2016} for more details.

In Equation (\ref{eqn:symmetrization_of_field}) we may now proceed to symmetrise the term $\alpha^{(N)}f$ of degree $N-2$. Note that elements of degree 0 or 1 are automatically symmetric. By induction we can then show that
\begin{equation}\label{eqn:fullysymmetrised}
f = \sum\limits_{j=0}^{\lfloor N/2\rfloor}P^{(N-2j)}\alpha^{(N+2-2j)}\cdots \alpha^{(N)}f +
\sum\limits_{j=0}^{\lfloor N/2\rfloor}\beta^{(N-2j)}\alpha^{(N+2-2j)}\cdots \alpha^{(N)}f \formspace.
\end{equation}
(Here the maps $\alpha$ are to be omitted when $j=0$.) We now define $S$ as $S=\bigoplus_{N=0}^{\infty}S_N$ in terms of the continuous linear maps
\begin{align}
S_N:& \big(\Gamma_0(\TsS\oplus\TsS)\big)^{\otimes N}\to \BU_S\left(\Dzs \right) \formspace,\notag\\
&f\mapsto \sum\limits_{j=0}^{N/2}P^{(N-2j)}\alpha^{(N+2-2j)}\cdots \alpha^{(N)}f \formspace,
\end{align}
for all $N\ge 0$. Note that $S$ is continuous and because the $\alpha^{(N)}$ and $\beta^{(N)}$ vanish on symmetric elements, $S$ acts as the identity on $\BU_S\left(\Dzs \right)$. It follows from Equation (\ref{eqn:fullysymmetrised}) that every element $f\in\BU\big(\Dzs \big)$ can be decomposed into $f=f'+g$, where $f'$ is symmetric and $g\in \ICCR$. This decomposition is unique, since $\BU_S\big(\Dzs \big)\cap\ICCR=\{0\}$, and we must have $f'=Sf$. This entails in particular that 
\begin{equation}
\BU\left(\Dzs \right)=\BU_S\big(\Dzs \big)\oplus\ICCR,
\end{equation}
that $\KERN{S}=\ICCR$ and that $S$ is the unique projection with the given range and kernel.
\end{proof}
%
%
%
%

\section{Proof of the energy estimate (\ref{Eqn_energyestimate})}\label{app_energy_estimate}
In this appendix we prove the energy estimate (\ref{Eqn_energyestimate}), which we now restate.
\begin{theorem}
Let $P$ be a normally hyperbolic operator on a real vector bundle $V$ over a globally hyperbolic spacetime $M$ and let $\Sigma\subset M$ be a smooth, space-like Cauchy surface. For all compact sets $K\subset\Sigma$ and $L\subset\mathbb{R}$ there is a $C>0$ such that
\begin{equation}\label{Eqn_energyestimate2}
\int_{D(K)}\|v^{(r)}\|^2\le C\int_K \left(\norm{\restr{v^{(r)}}{\Sigma}}^2+\norm{\restr{n^{\alpha}\nabla_{\alpha}v^{(r)}}{\Sigma}}^2\right)+C\int_{D(K)}\norm{f^{(r)}}^2,
\end{equation}
where $D(k)$ is the domain of dependence and $v^{(r)}$ is a solution to $(P+r)v^{(r)}=f^{(r)}$.
\end{theorem}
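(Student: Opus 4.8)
The plan is to prove the estimate by the classical energy-current method applied to the causal foliation of $D(K)$, and then to remove the tangential derivatives of the Cauchy data by a duality argument. I would first use the splitting $P=\nabla^{\alpha}\nabla_{\alpha}+B$ of \cite{BaerGinouxPfaeffle2007} to rewrite the equation as $\nabla^{\alpha}\nabla_{\alpha}v^{(r)}=f^{(r)}-(B+r)v^{(r)}$. The dependence on $r$ will only ever enter through $\sup_{r\in L}\abs{r}$, which is finite since $L$ is compact, so the constant will automatically be uniform in $r\in L$. Fix a Cauchy temporal function $t$ with $\Sigma=\{t=0\}$, denote the leaves by $\Sigma_{\tau}=\{t=\tau\}$, let $n$ be the future-directed unit normal field, and equip $V$ and $TM$ with the auxiliary Riemannian metrics appearing in the statement. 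Since it suffices to treat $D^{+}(K)$ and $D^{-}(K)$ separately and add the results, I focus on $D^{+}(K)$.

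On $D^{+}(K)$ I would introduce the stress-energy tensor
\begin{equation}
T_{\mu\nu}=\langle\nabla_{\mu}v^{(r)},\nabla_{\nu}v^{(r)}\rangle-\tfrac{1}{2}g_{\mu\nu}\big(\langle\nabla^{\alpha}v^{(r)},\nabla_{\alpha}v^{(r)}\rangle+\norm{v^{(r)}}^{2}\big),
\end{equation}
built with the fibre metric, together with the energy current $J^{\mu}=-T^{\mu}{}_{\nu}n^{\nu}$. A direct computation gives the energy density $T(n,n)=\tfrac{1}{2}\big(\norm{n^{\alpha}\nabla_{\alpha}v^{(r)}}^{2}+\norm{\nabla_{(\Sigma)}v^{(r)}}^{2}+\norm{v^{(r)}}^{2}\big)$, which is positive definite because the fibre metric is Riemannian. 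Applying the divergence theorem on the region $R_{\tau}\subset D^{+}(K)$ bounded by $\Sigma$, $\Sigma_{\tau}$ and the piece of the future Cauchy horizon $H^{+}(K)$, the horizon flux has the sign that allows it to be discarded — this is the dominant energy condition for $T$, and the only place where global hyperbolicity and the causal character of $J$ are used. Writing $\mathcal{E}(\tau)=\int_{\Sigma_{\tau}\cap D^{+}(K)}T(n,n)$, this yields $\mathcal{E}(\tau)\le\mathcal{E}(0)+\int_{R_{\tau}}\abs{\nabla_{\mu}J^{\mu}}$.

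Next I would bound the bulk term. Using the equation, $\nabla_{\mu}J^{\mu}=\langle f^{(r)}-(B+r)v^{(r)},n^{\alpha}\nabla_{\alpha}v^{(r)}\rangle$ plus terms that are pointwise $\lesssim T(n,n)$ and arise from $\nabla n$, the curvature of $\nabla$, and the derivatives of the auxiliary metrics; by Cauchy--Schwarz and Young's inequality $\abs{\nabla_{\mu}J^{\mu}}\le\tfrac{1}{2}\norm{f^{(r)}}^{2}+C_{0}\,T(n,n)$, where $C_{0}$ is controlled by $\sup_{L}\abs{r}+\sup\norm{B}$ and the geometry. The coarea formula turns $\int_{R_{\tau}}$ into $\int_{0}^{\tau}\int_{\Sigma_{s}\cap D^{+}(K)}$, so $\mathcal{E}(\tau)\le\mathcal{E}(0)+\tfrac{1}{2}\int_{D^{+}(K)}\norm{f^{(r)}}^{2}+C_{0}\int_{0}^{\tau}\mathcal{E}(s)\,ds$, and Gr\"onwall's inequality gives $\mathcal{E}(\tau)\le e^{C_{0}\tau}\big(\mathcal{E}(0)+\tfrac{1}{2}\int_{D^{+}(K)}\norm{f^{(r)}}^{2}\big)$. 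Since $D^{+}(K)$ has bounded temporal extent, integrating in $\tau$ and using $\norm{v^{(r)}}^{2}\le 2\,T(n,n)$ yields the spacetime bound on $D^{+}(K)$; the same argument on $D^{-}(K)$ completes this first estimate. This is essentially the estimate of \cite[App.~3, Thm.~3.2]{Choquet-Bruhat}, and the uniformity in $r$ is visible by tracking $C_{0}$.

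The main obstacle is that so far $\mathcal{E}(0)=\int_{K}T(n,n)$ also controls $\int_{K}\norm{\nabla_{(\Sigma)}\restr{v^{(r)}}{\Sigma}}^{2}$, the tangential derivatives of the Cauchy data, which do not appear on the right-hand side of the claimed estimate; indeed, any energy that closes under Gr\"onwall for a wave equation must contain the full spatial gradient, so this term cannot simply be dropped. To reach precisely the stated form I would use a duality (transposition) argument: write $\big(\int_{D(K)}\norm{v^{(r)}}^{2}\big)^{1/2}=\sup\big\{\int_{D(K)}\langle v^{(r)},\psi\rangle : \norm{\psi}_{L^{2}(D(K))}\le 1\big\}$, solve the adjoint problem $(\nabla^{\alpha}\nabla_{\alpha}+B^{\dagger}+r)w=\psi$ with vanishing Cauchy data on a leaf to the future of $D^{+}(K)$, and apply Green's identity on the region between that leaf and $\Sigma$. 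The boundary terms on $\Sigma$ then pair $\restr{v^{(r)}}{\Sigma}$ and $\restr{n^{\alpha}\nabla_{\alpha}v^{(r)}}{\Sigma}$ — with no tangential derivatives — against the Cauchy data of $w$, while the energy estimate just proven, now applied to the normally hyperbolic operator for $w$, bounds $\norm{w}_{L^{2}}$, $\norm{\restr{w}{\Sigma}}_{L^{2}(K)}$ and $\norm{\restr{n^{\alpha}\nabla_{\alpha}w}{\Sigma}}_{L^{2}(K)}$ by $\norm{\psi}_{L^{2}(D(K))}$, again uniformly in $r\in L$. Combining these pairings gives exactly the stated inequality.
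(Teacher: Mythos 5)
Your first stage is, up to cosmetics, the paper's own argument: the same stress tensor $T_{\mu\nu}$ with the $\norm{v}^2$ term, a future-directed timelike multiplier (the paper uses the unnormalized $\xi_\alpha=-N\nabla_\alpha t$ instead of $n$, which is immaterial), positivity of the flux through the causal part of the boundary, a Gr\"onwall iteration, and uniformity in $r\in L$ by tracking $\sup_L\abs{r}$ through the constants. One technical difference worth adopting from the paper: instead of applying Stokes' theorem on a region bounded by the Cauchy horizon $H^+(K)$, which is only Lipschitz, the paper inserts an auxiliary \emph{smooth} spacelike Cauchy surface $\Sigma'$ lying to the future of $D(K)$ and agreeing with $\Sigma$ outside a compact $K'\supset K$; the discarded flux is then through $\Sigma'$, and its sign follows from the positive definiteness of $\nu^{\alpha}n^{\beta}+n^{\alpha}\nu^{\beta}-g^{\alpha\beta}n^{\gamma}\nu_{\gamma}$. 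This sidesteps the regularity caveat your horizon integration would otherwise require.

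The genuinely different part of your proposal is the duality stage, and here you have identified a real issue: the paper has no such step. After the Gr\"onwall argument it simply asserts $\epsilon(0)\le C'\int_K\big(\norm{\restr{v}{\Sigma}}^2+\norm{\restr{n^{\alpha}\nabla_{\alpha}v}{\Sigma}}^2\big)$, even though $\epsilon$ contains the tangential gradient $\norm{\nabla_{(\Sigma)}v}^2$ on $\Sigma$, which no constant can absorb into the $L^2$ norms of $v|_\Sigma$ and its normal derivative (oscillatory data of small amplitude give a counterexample to any such uniform bound). Read literally, the paper's proof therefore only establishes the estimate with the full first-order energy of the data on the right-hand side — which is in fact all that is used in Theorem \ref{thm:Emcont}, where the initial data converge to $0$ in $\Gamma(V|_{\Sigma})$ and hence together with all tangential derivatives — whereas the theorem in its stated $L^2$-data form requires precisely your transposition argument. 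Two points to make explicit when you write it out: first, you need $J^-\big(D^+(K)\big)\cap\Sigma\subset K$ (true, by extending a causal curve through a point of $J^-(q)$, $q\in D^+(K)$, past $q$) so that the Green-identity boundary pairing on $\Sigma$ is supported in $K$, and similarly that the bulk pairing of $f$ with $w$ stays inside $D(K)$; second, relative to the metric-compatible auxiliary connection one has $P=\nabla^{\alpha}\nabla_{\alpha}+A^{\alpha}\nabla_{\alpha}+B$, so the adjoint problem and the boundary terms acquire $A$-dependent contributions — harmless, since these pair only the undifferentiated traces, so no tangential derivatives of the data appear. With these details your proof is correct; it coincides with the paper's in stage one, and its duality stage supplies a step the paper's own proof is missing for the statement as actually written.
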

\begin{proof}
We may identify $M=\mathbb{R}\times S$ and $g=-Ndt^2+h_t$, where $t\in\mathbb{R}$, $N>0$, $\Sigma_t \coloneqq \{t\}\times S$ is a smooth spacelike Cauchy surface with metric $h_t$ and $\Sigma=\Sigma_0$. We set $\xi_{\alpha} \coloneqq -N\nabla_{\alpha}t$, so that $\xi^{\alpha}$ is a future pointing time-like vector field and $n^{\alpha} \coloneqq N^{-\frac12}\xi^{\alpha}$ is its normalisation. Without loss of generality we may assume that the auxiliary norm $\norm{\PH}$ on $TM$ is given by $2n_{\alpha}n_{\beta}+g_{\alpha\beta}$.

For the purposes of this proof we choose the connection $\nabla$ on $V$ to be the one which is compatible with the auxiliary metric on $V$. Any different choice of connection in (\ref{Eqn_energyestimate2}) can easily be accommodated for by adjusting $C$ at the end of the proof.
Note that for suitable smooth bundle homomorphisms $A$ and $B$ it holds $P=g^{\alpha\beta}\nabla_{\alpha}\nabla_{\beta}+A^{\alpha}\nabla_{\alpha}+B$.

Let us fix $r$ for now and drop the superscripts on $v$ and $f$. We define the quantities
\begin{align}
T_{\alpha\beta}&\coloneqq \nabla_{\alpha}v\cdot\nabla_{\beta}v-\frac12g_{\alpha\beta}\left(\norm{\nabla v}^2+\norm{v}^2\right)\formspace,\\
P_{\alpha}&\coloneqq \xi^{\beta}T_{\alpha\beta}\formspace,\\
\epsilon&\coloneqq n^{\alpha}P_{\alpha}=\sqrt{N}n^{\alpha}n^{\beta}T_{\alpha\beta}\notag\\
&\phantom{:}=\frac12\sqrt{N}\left((2n^{\alpha}n^{\beta}+g^{\alpha\beta})\nabla_{\alpha}v\cdot\nabla_{\beta}v+\norm{v}^2\right)\formspace,
\end{align}
where $\cdot$ refers to the hermitian inner product on $V$. Note that $\epsilon\ge0$.

We may now choose a $T>0$ such that $D(K)\subset (-\infty,T)\times S$ and a compact $K'\subset\Sigma$ which contains $K$ in its interior. Then we may choose an auxiliary Cauchy surface $\Sigma'$ of $(-\infty,T)\times S$ such that $D(K)$ lies to the past of $\Sigma'$, but $\Sigma'$ contains $\Sigma\setminus K'$. Furthermore, we may choose a $C\ge1$ such that the following inequalities hold on $[0,T]\times S$: 
\begin{align}
N^{\pm\frac12}\le C \formspace,\quad
\pm(\nabla^{\alpha}\xi^{\beta}+\nabla^{\beta}\xi^{\alpha})\le C\sqrt{N}(2n^{\alpha}n^{\beta}+g^{\alpha\beta})\formspace,\notag \\
\left\vert\nabla_{\alpha}\xi^{\alpha}\right\vert\le C\sqrt{N}\formspace,\quad
\norm{R}\le C\formspace,\quad
\norm{A}\le C \quad\textrm{and}\quad 
\norm{B}\le C \formspace, 
\end{align}
where $R$ is the curvature of $\nabla$ on $V$. In addition we may assume that $|r+1|\le C$ for all $r\in L$ and that $h_t\le Ch_{t'}$ on $K'$ for all $t,t'\in[0,T]$ and similarly for the hermitian metric in $V$.\par
It will be convenient to introduce $L_t \coloneqq \Sigma_t\cap J^-(\Sigma')$ for $t\in[0,T]$ and the ``energy''
\begin{equation}
\epsilon(t) \coloneqq \int_{L_t}\epsilon\formspace.
\end{equation}
We now want to estimate the quantity
\begin{equation}
E(t) \coloneqq \int_{([0,t]\times S)\cap J^-(\Sigma')}\epsilon
\end{equation}
for $t\in[0,T]$. We note first of all that
\begin{equation}
\frac{d}{dt}E(t)\le \lim_{\tau\to0^+}\tau^{-1}\int_{[t,t+\tau]\times L_t}\epsilon\le C\int_{L_t}\epsilon\formspace,
\end{equation}
where the constant $C$ is needed to estimate the factor $\sqrt{N}$ which arises due to a change of volume form. Furthermore, using Stokes' Theorem:
\begin{equation}
\int_{([0,t]\times S)\cap J^-(\Sigma')}\nabla^{\alpha}P_{\alpha}=\epsilon(t)-\epsilon(0)+\int_{\Sigma'\cap([0,t]\times S)}\nu^{\alpha}P_{\alpha}\formspace,
\end{equation}
where $\nu^{\alpha}$ is the forward unit normal to $\Sigma'$. One may show that the bilinear form $\nu^{\alpha}n^{\beta}+n^{\alpha}\nu^{\beta}-g^{\alpha\beta}n^{\gamma}\nu_{\gamma}$ is positive definite and $n^{\gamma}\nu_{\gamma}<0$. This entails that $\nu^{\alpha}P_{\alpha}\ge0$ and hence
\begin{equation}
\epsilon(t)-\epsilon(0)\le\int_{([0,t]\times S)\cap J^-(\Sigma')}\nabla^{\alpha}P_{\alpha} \formspace.
\end{equation}
Furthermore, we may estimate
\begin{equation}
\left\vert \nabla^{\alpha}P_{\alpha}\right\vert\le \left\vert T_{\alpha\beta}\nabla^{\alpha}\xi^{\beta} \right\vert +\left\vert\xi^{\beta}\nabla^{\alpha}T_{\alpha\beta}\right\vert \formspace,
\end{equation}
where
\begin{equation}
\nabla^{\alpha}T_{\alpha\beta}=v\cdot R_{\alpha\beta}\cdot \nabla^{\alpha}v-v\cdot B\cdot\nabla_{\beta}v-(r+1)v\cdot\nabla_{\beta}v
+f\cdot\nabla_{\beta}v \formspace.
\end{equation}
For the term involving $f$ we can use the further estimate
\begin{equation}
\left\vert\xi^{\beta}f\cdot\nabla_{\beta}v\right\vert\le C \norm{f}\cdot\norm{\nabla v}\le \frac{1}{2} C \left( \norm{f}^2+\norm{\nabla v}^2 \right) \formspace.
\end{equation}
Using our choice of $C$ we can then estimate all the terms in $\nabla^{\alpha}P_{\alpha}$ to find
\begin{equation}
\epsilon(t)\le\epsilon(0)+\int_{([0,t]\times S)\cap J^-(\Sigma')}8 C^2\epsilon+\frac{1}{2} C \norm{f}^2
\end{equation}
and consequently
\begin{equation}
\frac{d}{dt}E(t)\le C\epsilon(t)\le 8C^3E(t)+C\epsilon(0)+\frac{1}{2}C^2\int_{D(K')}\norm{f}^2.
\end{equation}
Therefore, $\frac{d}{dt}\e^{-8C^2t}E(t)\le C\epsilon(0)+\frac{1}{2}C^2\int_{D(K')}\norm{f}^2$. With $E(0)=0$ this yields
\begin{equation}
e^{-8C^2T}E(T)=\int_0^T\frac{d}{dt}\e^{-8C^2t}E(t)dt\le \left(C\epsilon(0)+\frac{1}{2}C^2\int_{D(K')}\norm{f}^2\right)T
\end{equation}
and hence
\begin{equation}
E(T)\le C'\big(\epsilon(0)+\int_{D(K')}\norm{f}^2\big)
\end{equation}
for a suitable $C'>0$ independent of $r$. Note that
$E(T)\ge \int_{D(K)}\norm{v}^2$ and that $\epsilon(0)\le C'\int_K\big(\norm{\restr{v^{(r)}}{\Sigma}}^2+\norm{n^{\alpha}\nabla_{\alpha}\restr{v^{(r)}}{\Sigma}}^2\big)$
when we choose $C'$ large enough. Finally, we may shrink $K'$ to $K$ without adjusting the constants $C$ or $C'$ which leads to the desired estimate.
\end{proof}
%


\noindent
\printbibliography[heading=bibintoc,title={References}]

\end{document}